\title{On The Reachability Problem for Recursive Hybrid Automata with One and Two Players}
\author{
  Shankara Narayanan Krishna, 
  Lakshmi Manasa, and
  Ashutosh Trivedi}
\institute{
  Indian Institute of Technology Bombay, India,
  \email{krishnas,manasa,trivedi@cse.iitb.ac.in}}
\begin{document}
\maketitle
\begin{abstract}
  Motivated by the success of bounded model checking framework for finite state
  machines, Ouaknine and Worrell proposed a time-bounded theory of real-time
  verification by claiming that restriction to bounded-time recovers
  decidability for several key decision problem related to real-time
  verification. 
  In support of this theory, the list of undecidable problems recently shown
  decidable under time-bounded restriction is rather impressive: language inclusion for timed
  automata, emptiness problem for alternating timed automata, and emptiness
  problem for rectangular hybrid automata.
  The objective of our study was to recover decidability for general recursive
  timed automata---and perhaps for recursive hybrid automata---under time-bounded
  restriction in order to provide an appealing verification framework for
  powerful modeling environments such as Stateflow/Simulink.
  Unfortunately, however, we answer this question in negative by showing that
  time-bounded reachability problem stays undecidable for recursive timed
  automata with five or more clocks.  
  While the bad news continues even when one considers 
  well-behaved subclasses of recursive hybrid automata, 
  we recover decidability by considering recursive hybrid automata 
  with bounded context using a pass-by-reference mechanism, or 
  by restricting the number of variables to two, with rates in $\{0,1\}$. 
\end{abstract}

\section{Introduction}
Recursive state machines (RSMs), as introduced by Alur, Etessami, and
Yannakakis~\cite{AEY01}, are a variation on various visual notations to
represent hierarchical state machines, notably Harel's statecharts~\cite{Har87}
and Object Management Group supported UML diagrams~\cite{RJB99}, that permits
recursion  while disallowing concurrency.  
RSMs closely correspond~\cite{AEY01} to pushdown systems~\cite{BEM97},
context-free grammars, and Boolean programs~\cite{BR00}, and provide a natural
specification and verification framework to reason with sequential programs with
recursive procedure calls.
The two fundamental verification questions for RSM, namely reachability and
B\"uchi emptiness checking, are known to be decidable in polynomial
time~\cite{AEY01,EHRS00}.

Hybrid automata~\cite{AD90,ACHH93} extend finite state machines with continuous
variables that permit a natural modeling of hybrid systems.
In a hybrid automaton the variables continuously flow  according to a given set
of ordinary differential equations within each discrete states, while they are
allowed to have discontinuous jumps during transitions between states that are
guarded by constraints over variables. 
In this paper we study the reachability problem for recursive hybrid automata that
generalize recursive state machines with continuous variables, or equivalently
hybrid automata with recursion. 

In this paper we restrict our attention to so-called \emph{singular hybrid
  automata} where the dynamics of every variable is restricted to
state-dependent constant rates.  
A variable is often called a \emph{clock} if its rate over all the states is
$1$, while it is called a \emph{stopwatch} if its rate is either $0$ (clock is
stopped) or $1$ (clock is ticking) in different states. 
A timed automaton~\cite{AD90} is a hybrid automaton where all the variables are
clocks, while a stopwatch automaton~\cite{HKPV98} is a hybrid automaton where
all the variables are stopwatches. 
It is well known that the reachability problem is decidable (PSPACE-complete)
for timed automata~\cite{AD90} and undecidable for stopwatch 
automata with $3$ stopwatches~\cite{Cer92}. 

Trivedi and Wojtczak~\cite{TW10} introduced \emph{recursive timed automata}
(RTAs) as an extension of timed automata with recursion to model real-time software
systems. 
Formally, an RTA is a finite collection of components where each component is a
timed automaton that in addition to making transitions between various states,
can have transitions to ``boxes'' that are mapped to other components modeling a
potentially recursive call to a subroutine. 
During such invocation a limited information can be  passed through clock values 
from the ``caller'' component to the ``called'' component via two different
mechanism: a) \emph{pass-by-value}, where upon returning from the called
component a clock assumes the value prior to the invocation, and b)
\emph{pass-by-reference}, where upon return clocks reflect any changes to the
value inside the invoked procedure.  
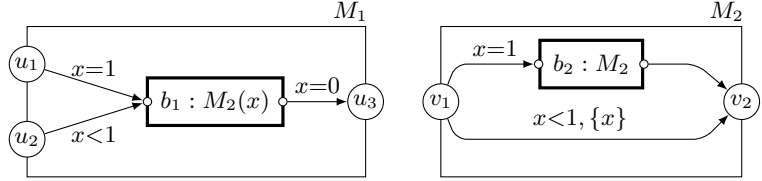
\begin{figure}[t]
  \centering
 \vspace*{-0.4cm}
  \makebox[2cm]{{\small
  \begin{tikzpicture}[node distance=4cm]

    \draw(-2, -1) rectangle (2.5,1);
    \draw (2.3, 1.2) node {$M_1$};
    
    \node[loc](u1) at (-2, 0.5) {$u_1$};
    \node[loc](u2) at (-2, -0.5) {$u_2$}; 

    \node[loc](u3) at (2.5, 0) {$u_3$}; 
    
    \node[boxloc](b1) at (0.5, 0) {$~b_1:M_2(x)~$};
    \node[port](b1v1) at (-0.4, 0) {}; 
    \node[port](b1v2) at (1.4, 0) {}; 

    \draw[trans] (u1)--(b1v1)  node [midway, above]{$x {=} 1$};
    \draw[trans] (u2)--(b1v1)  node [midway, below]{$x {<} 1$};
    \draw[trans] (b1v2)--(u3) node [midway, above]{$x {=} 0$};
    
    \draw(3.5, -1) rectangle (7.5,1);
    \draw (7.3, 1.2) node {$M_2$};
    
    \node[loc](v1) at (3.5, 0) {$v_1$};
    \node[loc](v2) at (7.5, 0) {$v_2$};
    
    \node[boxloc](c1) at (5.5, 0.5) {$~b_2:M_2~$};
    \node[port](c1v1) at (4.8, 0.5) {}; 
    \node[port](c1v2) at (6.2, 0.5) {}; 
    
    \draw[trans] (v1)-- +(0.2, 0.5) --(c1v1) node [midway, above]{$x {=} 1$} ;
    \draw[trans] (c1v2)-- +(0.7, 0) -- (v2);
    \draw[trans] (v1)-- + (0.2, -0.5) -- +(3.5, -0.5) node [midway,
    above]{$x {<} 1, \set{x}$} -- (v2);
    
  \end{tikzpicture}
}}
  \vspace*{-0.4cm}
  \caption{An example of recursive timed automata with one clock and two components}
  \vspace*{-0.4cm}
  \label{fig:example2}
\end{figure}
\begin{example}
  The visual presentation of a recursive timed automaton with
  two components $M_1$ and $M_2$, and one clock variable $x$
  is shown in Figure~\ref{fig:example2} (example taken from~\cite{TW10}), where
  component $M_1$ calls component $M_2$ via box $b_1$ and component $M_2$
  recursively calls itself via box $b_2$.
  Components are shown as thinly framed rectangles with their names 
  written next to upper right corner.
  Various control states, or ``nodes'', of the components are shown as circles
  with their labels written inside them, e.g. see node $u_1$.
  Entry nodes of a component appear on the left of the component
  (see $u_1$), while exit nodes appear on the right (see $u_3$).
  Boxes are shown as thickly framed rectangles inside components
  labeled $b:M(C)$, where $b$ is the label of the box, $M$ is the
  component it is mapped to, and $C$ is the set of clocks passed to $M$ by
  value and the rest of the variables are passed by reference.
  When the set $C$ is empty, we just write $b:M$ for $b:M(\emptyset)$.
  Each transition is labelled with a guard and the set of reset variables,
  (e.g. transition from node $v_1$ to $v_2$ can be taken only
  when variable $x{<}1$, and after taking this transition, variable $x$ is
  reset).
  To minimize clutter we omit empty reset sets.
\end{example}

Trivedi and Wojtczak~\cite{TW10} showed that the reachability and termination
(reachability with empty calling context) problem is undecidable for RTAs with
three or more clocks. 
Moreover, they considered the so-called glitch-free restriction of RTAs---where at
each invocation either all clocks are passed by value or all clocks are passed
by reference--- and showed that the reachability (and termination) is
EXPTIME-complete for RTAs with two or more clocks. 
In the model of~\cite{TW10} it is compulsory to pass all the clocks at every
invocation with either mechanism. 
Abdulla, Atig, and Stenman~\cite{AAS12} studied a related model called
timed pushdown automata where they disallowed passing clocks by value. 
On the other hand, they allowed clocks to be passed either by reference or not
passed at all (in that case they are stored in the call context and continue to
tick with the uniform rate). 
It is shown in~\cite{AAS12} that the reachability problem for this class remains
decidable (EXPTIME-complete).
In this paper we restrict ourselves to the recursive timed automata model as
introduced in~\cite{TW10}.

\noindent{\bf Contributions.}
In this paper we consider time-bounded reachability problem for RTA and show
that the problem stays undecidable for RTA with $5$ or more clocks. 
We also consider the extension of RTAs to recursive hybrid  
automata (RHAs) and show that the  reachability problem stays undecidable even
for glitch-free RHAs with $3$ or more stopwatches (clocks that can be paused),
while we show decidability of glitch-free RHAs with $2$ stopwatches.  
We also show that the reachability problem is undecidable for unrestricted RHA with two or
more stopwatches. For the time-bounded reachability case, we show that the problem stays
undecidable even for glitch-free variant of RHAs with $14$ or more stopwatches. 
On the positive side, we show decidability of time-bounded reachability in glitch-free RHAs where 
with pass-by-reference only mechanism. 
Our results are summarized and compared with known results in
Table~\ref{tab:res}.

We study these problems for two player games on RTA and RHA also. The undecidability saga continues even for games 
with lesser number of clocks than single-player case. The results for games have been summarized in 
Table~\ref{tab:res_game}.

\begin{table}[t]
\begin{tabular}{lcccc}
\toprule
&\multicolumn{2}{c}{~~~~~~~Recursive Timed
  Automata~~~~~~~~~~}&\multicolumn{2}{c}{
  ~~~~~~~~~Recursive
  Hybrid Automata~~~~~~~~}\\
 &TUB& TB&TUB&TB\\
\toprule
Pass by reference&{\color{gray} D }& {\color{gray} D} &{\color{blue} U} ($\geq 3$ sw)\cite{Cer92} & {\textbf{D}}(Bounded \\
		  &		    &			&					     & context)\\
\hline
Pass by value& {\color{gray} D} & {\color{gray} D}& \textbf{U} $(\geq 3$ sw) &{\textbf U} ($\geq 14$ sw)\\
 \hline 
 Glitch free&  {\color{gray} D} &  {\color{gray} D} &  \textbf{U} ($\geq 3$ sw)  &{\textbf U}($\geq 14$ sw)\\
 &       &     & \textbf{D} $(\leq 2$ sw)& \textbf{D} $(\leq 2$ sw)\\
      \hline
Unrestricted &  {\color{gray} U ($\geq 3$ clocks)} &  \textbf{U} ($\geq 5$ clocks)&\textbf{U} $(\geq 2$ sw) &\textbf{U} ($\geq 5$ sw) \\
\hline
\end{tabular}
\vspace{1em}
\caption{Summary of results related to RHA - single player game. Results shown in bold are contributions form
  this paper, while results shown in gray color are contributions
  from~\cite{TW10}. Here TUB  and TB stand for time-unbounded and time-bounded
  reachability problems, respectively. $\text{U}$ stands for undecidable, $\text{D}$ for decidable, and $\text{sw}$ for stopwatches.}
\label{tab:res}
\vspace{-2em}
\end{table}

\begin{table}[t]
  \begin{tabular}{lcccc}
    \toprule
    &\multicolumn{2}{c}{~~~~~~~Recursive Timed Automata~~~~~~~~~~}&\multicolumn{2}{c}{
      ~~~~~~~~~Recursive Stopwatch Automata~~~~~~~~}\\
    &TUB& TB & TUB & TB\\
    \toprule
    Glitch free&  {\color{gray} D} &  {\color{gray} D} &  \textbf{U} ($\geq 3$ sw)  &{\textbf U}($\geq 4$ sw)\\
    &       &     & \textbf{D} $(\leq 2$ sw)& \textbf{D} $(\leq 2$ sw)\\
    \hline
    Unrestricted &  {\color{gray} U ($\geq 2$ clocks)} &  \textbf{U} ($\geq 3$ clocks)&\textbf{U} $(\geq 2$ sw) &\textbf{U} ($\geq 3$ sw) \\
    \hline
  \end{tabular}
  \caption{Summary of results for two-player games. Results shown in bold are
    contributions form this paper, while results shown in gray color are
    from~\cite{TW10}.} 
\vspace{-1em}
\label{tab:res_game}
\end{table}

\noindent{\bf Related work.}
For a survey of models related to recursive timed automata and dense-time
pushdown automata we refer the reader to~\cite{TW10}and~\cite{AAS12}.
Another closely related model is introduced by Benerecetti, Minopoli, and
Peron~\cite{BMP10} where pushdown automata is extended with an additional stack
used to store clock valuations. 
The reachability problem is known to be undecidable for this model.
We do not consider this model in the current paper, but we conjecture that
time-bounded reachability problem for this model is also undecidable. 

Two special kinds of RSMs with restricted recursion are 
hierarchical RSMs and bounded stack RSMs \cite{popl08}.
\cite{popl08} gives efficient algorithms for the reachability analysis of hierarchical and bounded stack RSMs, 
and algorithms for the latter might be useful in the analysis of programs without infinite recursion.  
The language theory of bounded context recursion 
 has been studied recently \cite{bc1}.   
Hierarchical hybrid systems studied by Alur et
al.~\cite{ADEF01} are a  restriction of recursive hybrid
automata where recursion is disallowed but concurrency is allowed.
A number of case-studies with the tool CHARON~\cite{ADEF01} demonstrate the benefit of
hierarchical modeling of hybrid systems. 

\noindent{\bf Organization.} 
In the next section, we begin by reviewing the definition of recursive state
machines followed by  a formal definition of recursive (singular) hybrid
automata. 
We also formally define the termination and the reachability problems for two players on this
model, and present our main results. 
Finally, Sections~\ref{sec:undec-rha} and \ref{sec:undec-rhg} details our main undecidability
results while Sections~\ref{sec:dec-ref} and \ref{sec:dec-rha} discuss our decidability results.

\section{Preliminaries}
\subsection{Reachability Games on Labelled Transition Systems.}
A \emph{labeled transition system} (LTS) is a tuple $\lts = (S, A, \trans)$
where $S$ is the set of \emph{states}, 
$A$ is the set of \emph{actions}, and $\trans : S {\times} A \to S$ is
the \emph{transition function}.
We say that an LTS $\lts$ is \emph{finite} (\emph{discrete}) if both $S$
and $A$ are finite (countable).
We write $A(s)$ for the set of actions available at $s \in S$, i.e., $A(s) =
\set{a \::\: \trans(s, a) \not = \emptyset}$. 
A \emph{game arena} $G$ is a tuple $(\lts, S_\Ach, S_\Tor)$, where $\lts = (S,
A, \trans)$ is an LTS, $S_\Ach \subseteq S$ is the set of states controlled by
player \ach, and $S_\Tor \subseteq S$ is the set of states controlled by $\tort$. 
Moreover, sets  $S_\Ach$ and $S_\Tor$ form a partition of the set
$S$.
In a \emph{reachability game} on $G$ rational players---\ach
and \tort---take turns to move a token along the states of $\lts$.
The decision to choose the successor state is made by the player
controlling the current state.
The objective of \ach is to eventually reach certain states, while the
objective of \tort is to avoid them forever.

We say that $(s, a, s') \in S {\times} A {\times} S$ is a transition of
$\lts$ if $s' = \trans(s, a)$ and a \emph{run} of $\lts$ is a sequence
$\seq{s_0, a_1, s_1, \ldots} \in S {{\times}} (A {{\times}} S)^*$
such that $(s_i, a_{i+1}, s_{i+1})$ is a transition of $\lts$ for all
$i \geq 0$.
We write $\RUNS^\lts$ ($\FRUNS^\lts$) for the sets of infinite (finite)
runs and $\RUNS^\lts(s)$ ($\FRUNS^\lts(s)$) for the
sets of infinite (finite) runs starting from state~$s$.
For a set $F \subseteq S$ and a run $r = \seq{s_0, a_1, \ldots}$ we
define $\STOP(F)(r) = \inf \set{ i\in \Nat \::\: s_i \in F}$.
Given a state $s \in S$ and a set of final states $F \subseteq S$ we
say that a final state is reachable from $s_0$ if there is a run $r \in
\RUNS^\lts(s_0)$ such that $\STOP(F)(r) < \infty$.
A strategy of \ach is a partial function
$\alpha :\FRUNS^\lts \to A$ such that for a run $r \in \FRUNS^\lts$ we
have that $\alpha(r)$ is defined if $\LAST(r) \in S_\Ach$, and
$\alpha(r) \in A(\LAST(r))$ for every such $r$.
A strategy of \tort is defined analogously.
Let $\LSigma_\Ach$ and $\LSigma_\Tor$ be the set of strategies of
\ach and \tort, respectively.
The unique run $\RUN(s, \alpha, \tau)$ from a state $s$ when players use
strategies $\alpha \in \LSigma_\Ach$ and $\tau \in \LSigma_\Tor$ is
defined in a straightforward manner.

For an initial state $s$ and a set of final states $F$, the lower
value $\LVAL^\lts_F(s)$ of the reachability game is defined as the
upper bound on the number of transitions that \tort can ensure before
the game visits a state in $F$ irrespective of the strategy of \ach,
and is equal to
$\sup_{\tau \in \LSigma_\Tor} \inf_{\alpha \in \LSigma_\Ach}
\STOP(F)(\RUN(s, \alpha, \tau))$.
The concept of upper value is $\UVAL^\lts_F(s)$ is analogous and
defined as $\inf_{\alpha \in \LSigma_\Ach} \sup_{\tau \in \LSigma_\Tor}
\STOP(F)(\RUN(s, \alpha, \tau))$.
If ${\LVAL^\lts_F(s) = \UVAL^\lts_F(s)}$ then we say that the reachability
game is determined, or the value  $\VAL^\lts_F(s)$ of the
reachability game exists and it is such that ${\VAL^\lts_F(s) =
\LVAL^\lts_F(s) = \UVAL^\lts_F(s)}$.
We say that \ach wins the reachability game if $\VAL^\lts_F(s) <
\infty$.
A \emph{reachability game problem} is to decide whether in a given
game arena $G$, an initial state $s$ and a set of final states $F$,
\ach has a strategy to win the reachability game.

\subsection{Reachability Games on Recursive state machines}
A \emph{recursive state machine}~\cite{ABEGRY05} $\Mm$ is a tuple 
$(\Mm_1, \Mm_2, \ldots, \Mm_k)$ of components, where each component $\Mm_i =
(N_i, \En_i, \Ex_i, B_i, Y_i, A_i, \trans_i)$ for each $1 \leq i \leq k$  is
such that:
\begin{itemize}
\item
  $N_i$ is a finite set of \emph{nodes} including a distinguished set $\En_i$ of
  \emph{entry nodes} and a set $\Ex_i$ of \emph{exit nodes} such that $\Ex_i$
  and $\En_i$ are disjoint sets; 
\item
  $B_i$ is a finite set of \emph{boxes};
\item
  $Y_i: B_i \to \set{1, 2, \ldots, k}$ is a mapping 
  that assigns every box to a component.
  We associate a set of \emph{call ports} $\call(b)$ and return ports
  $\return(b)$ to each box $b \in B_i$:
  \begin{eqnarray*}
  \call(b) = \set{(b, en) \::\: en \in \En_{Y_i(b)} } &\text{ and } &
  \return(b) = \set{(b, ex) \::\: ex \in  \Ex_{Y_i(b)}}.
  \end{eqnarray*}
  Let $\call_i = \cup_{b \in B_i} \call(b)$ and
    $\return_i = \cup_{b \in B_i} \return(b)$ be the set of call and
    return ports of component $\Mm_i$.
    We define the set of locations $Q_i$  of component $\Mm_i$ as the 
    union of the set of nodes, call ports and return ports, i.e. 
    $Q_i = N_i  \cup \call_i \cup \return_i$;
  \item
    $A_i$ is a finite set of \emph{actions}; and
  \item
    $\trans_i : Q_i {{\times}} A_i \to Q_i$ is the transition function  with a
    condition that call ports and exit nodes do not have any outgoing transitions.
  \end{itemize}
For the sake of simplicity, we assume that the set of boxes $B_1,
\ldots, B_k$ and set of nodes $N_1, N_2, \ldots, N_k$ are mutually
disjoint.
We use symbols $N, B, A, Q, \trans$, etc. to denote
the union of the corresponding symbols over all components.

\begin{figure}[t]
  \centering
 \vspace*{-0.4cm}
  \makebox[2cm]{{\small
  \begin{tikzpicture}[node distance=4cm]

    \draw(-2, -1) rectangle (2,1);
    \draw (1.8, 1.2) node {$M_1$};

    \node[loc](u1) at (-2, 0.5) {$u_1$};
    \node[loc](u2) at (-2, -0.5) {$u_2$};

    \node[loc](u4) at (2, 0) {$u_4$};

    \node[boxloc](b1) at (-0.4, 0.5) {$~b_1:M_2~$};
    \node[port](b1v1) at (-1.05, 0.6) {};
    \node[port](b1v2) at (-1.05, 0.4) {};
    \node[port](b1v3) at (0.26, 0.6) {};
    \node[port](b1v4) at (0.26, 0.4) {};

    \node[boxloc](b2) at (-0.4, -0.5) {$~b_2:M_3~$};
    \node[port](b2w1) at (0.25, -0.5) {};
    \node[port](b2w2) at (-1.05, -0.5) {};

    \node[loc](u3) at (1.15, -0.5) {$u_3$};

    \draw[trans] (u1)--(b1v1);
    \draw[trans] (u2)--(b2w2);
    \draw[trans] (u3)--(u4);
    \draw[trans] (b2w1)--(u3);
    \draw[trans] (b1v3)--(u4);
    \draw[trans] (b1v4)-- +(0.25, 0) -- +(0.25, -0.4) --+(-1.7, -0.4) --
    +(-1.7, 0) -- +(-1.4, 0);

    \draw(3, -1) rectangle (6,1);
    \draw (5.8, 1.2) node {$M_2$};

    \node[loc](v1) at (3, 0.5) {$v_1$};
    \node[loc](v2) at (3, -0.5) {$v_2$};
    \node[loc](v3) at (6, 0.5) {$v_3$};
    \node[loc](v4) at (6, -0.5) {$v_4$};

    \node[boxloc](c1) at (4.5, 0.5) {$~c_1:M_2~$};
    \node[port](c1v1) at (3.85, 0.6) {};
    \node[port](c1v2) at (3.85, 0.4) {};
    \node[port](c1v3) at (5.14, 0.6) {};
    \node[port](c1v4) at (5.14, 0.4) {};

    \node[boxloc](c2) at (4.5, -0.5) {$~c_2:M_3~$};
    \node[port](c2w1) at (3.85, -0.5) {};
    \node[port](c2w2) at (5.14, -0.5) {};

    \draw[trans] (v1)--(c1v1);
    \draw[trans] (v2)-- + (0.2, 0.6) -- (c1v2);
    \draw[trans] (v2) -- (c2w1);
    \draw[trans] (c1v3) -- +(0.2, 0)--(v4);
    \draw[trans] (c1v4)--(v3);
    \draw[trans] (c2w2) --(v4);
    \draw[trans] (c2w2)-- +(0.2, 0.2) -- +(0.2, 0.5) -- +(-1.6, 0.5) --(c1v2);


    \draw(7, -1) rectangle (9.5,1);
    \draw (9.3, 1.2) node {$M_3$};

    \node[loc](w1) at (7, 0) {$w_1$};
    \node[loc](w2) at (9.5, 0) {$w_2$};

    \node[boxloc](d) at (8.3, 0.5) {$~d:M_1~$};
    \node[port](du1) at (7.7, 0.6) {};
    \node[port](du2) at (7.7, 0.4) {};
    \node[port](du4) at (8.9, 0.5) {};

    \draw[trans] (w1) .. controls +(60:0.5) and +(180:0.8) .. (du1);
    \draw[trans] (du4) -- (w2);
    \draw[trans] (w1) edge[bend right=25] (w2);

  \end{tikzpicture}
}}
  \vspace*{-0.4cm}
  \caption{Example recursive state machine taken from~\cite{ABEGRY05}}
  \vspace*{-0.4cm}
 \label{fig:example}
\end{figure}
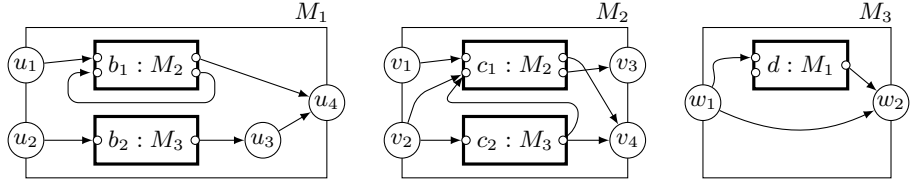
An example of a RSM is shown in Figure~\ref{fig:example} (taken from \cite{TW10}).
An execution of a RSM begins at the entry node of some component and depending
upon the sequence of input actions the state evolves naturally like a labeled
transition system. 
However, when the execution reaches an entry port of a box, this box is stored
 on a stack of pending calls, and the execution continues naturally from the
 corresponding entry node of the component mapped to that box. 
 When an exit node of a component is encountered, and if the stack of pending
 calls is empty then the run terminates; otherwise, it pops the box from the top
 of the stack and jumps to the exit port of the just popped box corresponding
 to the just reached exit of the component. 
We formalize the semantics of a RSM using a discrete LTS, whose states are pairs
consisting of a sequence of boxes, called the context, mimicking the stack of
pending calls and the current location.
Let $\Mm = (\Mm_1, \Mm_2, \ldots, \Mm_k)$ be an RSM where the component
$\Mm_i$ is $(N_i, En_i, Ex_i, B_i, Y_i, A_i, \trans_i)$.
The semantics of $\Mm$ is the discrete  labelled transition system
$\sem{\Mm} = (S_\Mm, A_\Mm, \trans_\Mm)$ where:
\begin{itemize}
\item
  $S_\Mm \subseteq B^* {\times} Q$ is the set of states;
\item
  $A_\Mm = \cup_{i=1}^{k} A_i$ is the set of actions;
\item
  $\trans_\Mm : S_\Mm {\times} A_\Mm \to S_\Mm$ is the transition
  function such that for $s = (\sseq{\kappa}, q) \in S_\Mm$ and
  $a \in A_\Mm$, we have that $s' =  \trans_\Mm(s, a)$ if and only
  if one of the following holds:
  \begin{enumerate}
  \item
    the location $q$ is a call port, i.e. $q = (b, en) \in \call$,
    and $s' = (\sseq{\kappa, b}, en)$;
  \item
    the location $q$ is an exit node, i.e. $q = ex \in \Ex$
    and $s' = (\sseq{\kappa'}, (b, ex))$ where
    $(b, ex) \in \return(b)$ and $\kappa = (\kappa', b)$;
  \item
    the location $q$ is any other kind of location, and
    $s' = (\sseq{\kappa}, q')$ and $q' \in \trans(q, a)$.
  \end{enumerate}
\end{itemize}

Given $\Mm$ and a subset $Q' \subseteq Q$ of its nodes we define 
$\sem{Q'}_\Mm$ as  $\set{(\sseq{\kappa}, v') \::\: \kappa \in B^* \text{ and }
  v' \in Q'}$.
We define the terminal configurations $\term_\Mm$ as
the set $\set{(\sseq{\varepsilon}, ex) \::\: ex \in \Ex}$ with the empty context 
$\sseq{\varepsilon}$.
Given a recursive state machine  $\Mm$, an initial node $v$, and a set
of \emph{final locations} $F \subseteq Q$ the {\em reachability problem}
on $\Mm$ is defined as the reachability problem on the LTS $\sem{\Mm}$ with
the initial state $(\sseq{\varepsilon}, v)$ and final states $\sem{F}$.
We define \emph{termination problem} as the reachability of one of the exits with the empty context.
The reachability and the termination problem for recursive state machines can
be solved in polynomial time~\cite{ABEGRY05}. 

A partition $(Q_\Ach, Q_\Tor)$ of locations $Q$ of an RSM $\Mm$
(between \ach and \tort) gives rise to recursive game arena
$G =(\Mm, Q_\Ach, Q_\Tor)$.
Given an initial state, $v$, and a set of final states, $F$, the
reachability game on $\Mm$ is defined as the reachability game on the game
arena $(\sem{\Mm}, \sem{Q_\Ach}_\Mm, \sem{Q_\Tor}_\Mm)$ with the initial state
$(\sseq{\varepsilon}, v)$ and the set of final states $\sem{F}_\Mm$.
Also, the termination game $\Mm$ is defined as the reachability game on the
game arena $(\sem{\Mm}, \sem{Q_\Ach}_\Mm, \sem{Q_\Tor}_\Mm)$ with the initial
state $(\sseq{\varepsilon}, v)$ and the set of final states $\term_\Mm$.
It is a well known result (see, e.g.~\cite{Wal96},\cite{Ete04}) that
reachability games and termination games on RSMs are determined and decidable
(EXPTIME-complete).

\section{Recursive Hybrid Automata}
\label{sec:recursive-timed-automata}
Recursive hybrid automata (RHAs)  extend classical hybrid automata (HAs) with
recursion in a similar way RSMs extend LTSs.
We study a rather simpler subclass of HA known as singular hybrid automata where
all variables grow with constant-rates. 

\subsection{Syntax}
 Let $\Real$ be the set of real numbers.
 Let $\variables$ be a finite set of real-valued variables.
 A \emph{valuation} on $\variables$ is a function $\nu : \variables \to \Real$.
 We assume an arbitrary but fixed ordering on the variables and write $x_i$
 for the variable with order $i$. 
 This allows us to treat a valuation $\nu$ as a point $(\nu(x_1), \nu(x_2),
 \ldots, \nu(x_n)) \in \Real^{|\variables|}$. 
 Abusing notations slightly, we use a valuation on $\variables$ and a point in
 $\Real^{|\variables|}$ interchangeably. 
For a subset of variables $X \subseteq \variables$ and a valuation
$\nu' \in \variables$, we write $\nu[X{:=}\nu']$ for the valuation where
$\nu[X{:=}\nu'](x) = \nu'(x)$ if $x \in X$, and
$\nu[X{:=}\nu'](x) = \nu(x)$ otherwise.
The valuation $\zero \in \V$ is a special valuation such that
$\zero(x) = 0$ for all $x \in \variables$.

 We define a constraint over a set $\variables$ as a subset of $\Real^{|\variables|}$.
 We say that a constraint is \emph{rectangular} if it is defined as the conjunction
 of a finite set of constraints of the form 
 $x  \bowtie k,$  where $k \in \Int$, $x \in \variables$, and $\bowtie \in \{<,\leq, =,
 >, \geq\}$.    
 For a constraint $G$, we write $\sem{G}$ for the set of valuations in
 $\Real^{|\variables|}$ satisfying the constraint $G$.  
 We write $\top$ ( resp., $\bot$) for the special constraint that is true
 (resp., false) in all the valuations, i.e. $\sem{\top} = \Real^{|\variables|}$ 
(resp., $\sem{\bot} = \emptyset$). 
 We write $\rect(\variables)$ for the set of rectangular constraints over $\variables$ including
 $\top$ and $\bot$.  
\begin{definition}[Recursive Hybrid Automata]
  A \emph{recursive hybrid automaton} $\Hh = (\variables, (\Hh_1, \Hh_2, \ldots,
  \Hh_k))$ is a pair made of a set of variables $\variables$ and a collection of
  components $(\Hh_1, \Hh_2, \ldots, \Hh_k)$ where every component 
  $\Hh_i =  (N_i, \En_i, \Ex_i, B_i, Y_i, A_i, \trans_i, P_i, \inv_i,
  E_i, J_i, F_i)$ is such that:
  \begin{itemize}
  \item
    $N_i$ is a finite set of \emph{nodes} including a distinguished set $\En_i$ of
    \emph{entry nodes} and a set $\Ex_i$ of \emph{exit nodes} such that $\Ex_i$
    and $\En_i$ are disjoint sets; 
  \item
    $B_i$ is a finite set of \emph{boxes};
  \item
    $Y_i: B_i \to \set{1, 2, \ldots, k}$ is a mapping that assigns every box to
    a component. 
    (Call ports $\call(b)$ and return ports $\return(b)$ of a box $b \in B_i$,
    and call ports $\call_i$ and return ports $\return_i$ of a component $\Hh_i$
    are defined as before. 
    We set $Q_i = N_i \cup \call_i \cup \return_i$ and refer to this set as the
    set of locations of $\Hh_i$.) 
  \item
    $A_i$ is a finite set of \emph{actions}.
  \item
    $\trans_i : Q_i {{{\times}}} A_i \to Q_i$ is the transition function  with a
    condition that call ports and exit nodes do not have any outgoing transitions.
  \item
    $P_i: B_i \to 2^\variables$ is pass-by-value mapping that assigns every box
    the set of variables that are passed by value to the component
    mapped to the box; (The rest of the variables are assumed to be passed by
    reference.)
  \item
    $\inv_i : Q_i \to \rect(\variables)$ is the \emph{invariant condition};
  \item
    $E_i : Q_i {{\times}} A_i \to \rect(\variables)$ is the \emph{action enabledness function};  
  \item
    $J_i : A_i \to 2^{\variables}$ is the \emph{variable reset function}; and
  \item
    $F_i : Q_i \to \Nat^{|\variables|}$ is the \emph{flow function}
    characterizing the rate of each variable in each location.
  \end{itemize}
  We assume that the sets of boxes, nodes, locations, etc. are mutually
  disjoint across components and we write ($N, B, Y, Q, P, \trans$, etc.) to denote corresponding
  union over all components. 
\end{definition}
We say that a recursive hybrid automaton is \emph{glitch-free} if for
every box either all variables are passed by value or none is passed by
value, i.e. for each $b \in B$ we have that either $P(b) = \variables$
or $P(b) = \emptyset$. 
Any general recursive hybrid automaton with one variable is trivially
glitch-free.
We say that a RHA is \emph{hierarchical} if there exists an ordering over
components such that a component never invokes another component of higher
order or same order. 

We say that a variable $x \in \variables$ is a \emph{clock} (resp., a
stopwatch) if for every location $q \in Q$ we have that $F(q)(x) = 1$ (resp.,
$F(q)(x) \in \set{0, 1}$). 
A recursive timed automaton (RTA) is simply a recursive hybrid automata where all
variables $x \in \variables$ are clocks. 
Similarly, we define a recursive stopwatch automaton (RSA) as a recursive hybrid
automaton where all variables $x \in \variables$ are stopwatches. 
Since all of our results pertaining to recursive hybrid automata are shown in
the context of recursive stopwatch automata, we often confuse RHA with RSA. 

\subsection{Semantics}
A \emph{configuration} of an RHA $\Hh$ is a tuple $(\sseq{\kappa},
q, \val)$, where $\kappa \in (B {\times} \V)^*$ is 
sequence of pairs of boxes and variable valuations, $q \in Q$ is a
location and $\val \in \V$ is a variable valuation over $\variables$ such that
$\val \in \inv(q)$.
The sequence $\sseq{\kappa} \in (B {\times} \V)^*$ denotes the stack of
pending recursive calls and the valuation of all the variables at the
moment that call was made, and we refer to this sequence as the context of
the configuration.
Technically, it suffices to store the valuation of variables passed by
value, because other variables retain their value after returning from a
call to a box, but storing all of them simplifies the notation.
We denote the the empty context by $\sseq{\epsilon}$.
For any $t \in \Real$, we let $(\sseq{\kappa}, q, \val){+}t$
equal the configuration $(\sseq{\kappa}, q, \val{+} F(q) \cdot t)$.
Informally, the behaviour of an RHA is as follows.
In configuration $(\sseq{\kappa}, q, \val)$ time passes before an available
action is triggered, after which a discrete transition occurs.
Time passage is available only if the invariant condition $\inv(q)$
is satisfied while time elapses, and an action $a$ can be chosen after
time $t$ elapses only if it is enabled after time elapse, i.e.,\
if $\val{+} F(q) \cdot t \in E(q, a)$.
If the action $a$ is chosen then the successor state is
$(\sseq{\kappa}, q', \val')$ where $q' \in X(q,a)$ and
$\val' = (\val+t)[J(a) := \mathbf{0}]$.
Formally, the semantics of an RHA is given by an LTS which has both an
uncountably infinite number of states and transitions.

\begin{definition}[RHA semantics]
  Let $\Hh = (\variables, (\Hh_1, \Hh_2, \ldots, \Hh_k))$ be an RHA where
  each component is of the form
  $\Hh_i =  (N_i, \En_i, \Ex_i, B_i, Y_i, A_i, \trans_i, P_i, \inv_i,
  E_i, J_i, F_i)$.
  The semantics of $\Hh$ is a labelled transition system
  $\sem{\Hh} = (\nS, \nA, \ntrans)$ where:
  \begin{itemize}
  \item
    $\nS \subseteq (B{\times} \V)^* {\times} Q\, {\times}\, \V$, the set of
    states, is s.t. 
    $(\sseq{\kappa}, q, \val) {\in} \nS$ if $\val {\in} \inv(q)$.
  \item
    $\nA = \Rplus {{\times}} A$ is the set of \emph{timed actions}, where $\Rplus$ is the set of non-negative reals;
  \item
    $\ntrans : \nS {\times} \nA \to \nS$ is the transition function
    such that for $(\sseq{\kappa}, q, \val) \in \nS$ and
    $(t, a) \in \nA$, we have
    $(\sseq{\kappa'}, q', \val') = \ntrans((\sseq{\kappa}, q, \val),
    (t, a))$ if and only if the following condition holds:
    \begin{enumerate}
    \item
      if the location $q$ is a call port, i.e. $q = (b, en) \in
      \call$ then $t = 0$, the context
      $\sseq{\kappa'} = \sseq{\kappa, (b, \val)}$,
      $q' = en$, and $\val' = \val$.
    \item
      if the location $q$ is an exit node, i.e. $q = ex \in Ex$,
      $\sseq{\kappa} = \sseq{\kappa'', (b, \val'')}$,
      and let $(b, ex) \in \return(b)$, then
      $t = 0$; $\sseq{\kappa'} = \sseq{\kappa''}$;
      $q' {=} (b, ex)$; and $\val' {=} \val[P(b){:=}\val'']$.
    \item
      if location $q$ is any other kind of location, then  $\sseq{\kappa'} =
      \sseq{\kappa}$,  $q' \in \trans(q,a)$, and
 \begin{enumerate} 
      \item $\val {+} F(q)\cdot t' \in \inv(q)$ for all $t' \in [0, t]$;
      \item $\val {+} F(q)\cdot t \in E(q, a)$; 
      \item $\val' = (\val + F(q){\cdot} t)[J(a):= \zero]$.
      \end{enumerate}
    \end{enumerate}
  \end{itemize}
\end{definition}

\subsection{Reachability and Time-Bounded Reachability Game Problems}
For a subset $Q' \subseteq Q$ of states of RHA $\Hh$ we
define the set $\sem{Q'}_\Hh$ as the set $\set{(\sseq{\kappa}, q, \val) \in S_\Hh
  \::\: q \in Q'}$. 
We define the terminal configurations as
${\term_\Hh = \set{(\sseq{\varepsilon}, q, \val) \in S_\Hh \::\:   q \in \Ex}}$.
Given a recursive hybrid automaton $\Hh$, an initial node $q$ and
valuation $\val \in \V$, and a set of \emph{final locations} $F \subseteq
Q$, the {\em reachability problem} on $\Hh$ is to decide the existence of a run
in the LTS $\sem{\Hh}$ staring from the initial state $(\sseq{\varepsilon}, q,
\val)$ to some state in $\sem{F}_\Hh$.
As with RSMs, we also define \emph{termination problem} as reachability of one
of the exits with the empty context. 
Hence, given an RHA $\Hh$ and an initial node $q$ and a valuation $\val
\in \V$, the termination problem on $\Hh$ is to decide the existence of a run in the
LTS $\sem{\Hh}$ from initial state $(\sseq{\varepsilon}, q, \val)$ to a final
state in  $\term_\Hh$.

Given a run $r = \seq{s_0, (t_1, a_1), s_2, (t_2, a_2), \ldots, (s_n, t_n)}$ of an
RHA, its time duration $\Time(r)$ is defined as $\sum_{i=1}^{n}t_i$. 
Given a recursive hybrid automaton $\Hh$, an initial node $q$, a bound $T \in
\Nat$, and valuation $\val \in \V$, and a set of \emph{final locations} $F \subseteq
Q$, the {\em time-bounded reachability problem} on $\Hh$ is to decide the
existence of a run $r$ in the LTS $\sem{\Hh}$ staring from the initial state
$(\sseq{\varepsilon}, q, \val)$ to some state in $\sem{F}_\Hh$ such that
$\Time(r) \leq T$. 
Time-bounded termination problem is defined in an analogous manner. 

A partition $(Q_\Ach, Q_\Tor)$ of locations $Q$ of an RHA $\Hh$
gives rise to a recursive hybrid game arena 
$\Gamma = (\Hh, Q_\Ach, Q_\Tor)$.
Given an initial location $q$, a valuation $\nu \in V$ and a set of final
states $F$, the reachability game on $\Gamma$ is defined as the
reachability game on the game arena $(\sem{\Hh}, \sem{Q_\Ach}_\Tt,
\sem{Q_\Tor}_\Tt)$ with the initial state
$(\sseq{\varepsilon}, (q, \nu))$ and the set of final states $\sem{F}_\Tt$.
Also, termination game on $\Tt$ is defined as the reachability game on
the game arena $(\sem{\Tt}, \sem{Q_\Ach}_\Tt, \sem{Q_\Tor}_\Tt)$ with
the initial state $(\sseq{\varepsilon}, (q, \nu))$ and the set of final states
$\term_\Tt$.

We prove the following key theorem about reachability games on various
subclasses of recursive hybrid automata in Section \ref{sec:undec-rhg}.
\begin{theorem}
\label{th:main_game}
The reachability game problem is undecidable for: 
\begin{enumerate}
\item 
 Unrestricted RSA with 2 stopwatches,
\item 
 Glitchfree RSA with 3 stopwatches, 
\item 
 Unrestricted RTA with 3 clocks under bounded time, and 
\item 
 Glitchfree RSA with 4 stopwatches under bounded time.
\end{enumerate}
Moreover, all of these results hold even under hierarchical restriction. 
\end{theorem}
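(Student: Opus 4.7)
The plan is to prove all four undecidability results by reduction from the halting problem for deterministic two-counter Minsky machines. Given a 2CM $\mathcal{M}$ with counters $c_1, c_2$, I would construct a recursive hybrid game arena $\Gamma_\mathcal{M}$, together with an initial configuration and a target location, such that \ach has a winning strategy iff $\mathcal{M}$ halts. The overall recipe is standard: build gadgets for increment, decrement and zero-test of each counter, wire them together according to $\mathcal{M}$'s control flow, and use \tort as an adversary who is given the option to challenge \ach whenever a cheating step might have occurred, driving play into a losing sink on any inconsistency. The hierarchical restriction will be maintained throughout by placing every ``helper'' sub-component at a strictly lower level than the component that invokes it, with all recursion that simulates counter values confined to a single self-calling leaf component.

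For part (1), unrestricted RSA with two stopwatches, my intended encoding is to store $c_1$ in the depth of the call stack of a dedicated recursive component (one push per increment, one pop per decrement), leaving two stopwatches to hold $c_2$ and to act as a scratch variable. Pass-by-reference is used freely so that the $c_2$-holding stopwatch retains its value across recursive calls, while the auxiliary stopwatch is frozen (rate 0) during calls and ticked (rate 1) only inside the arithmetic gadgets. The only nontrivial operation is the zero-test on $c_2$: \ach commits to ``zero'' or ``positive'', and \tort is then offered a branch that attempts to witness the opposite (for example, by trying to decrement without underflow), so that any dishonest commitment loses. For part (2), the glitch-free restriction forbids the selective pass-by-value/reference trick used above, so I would compensate by introducing one additional stopwatch whose sole purpose is to reconstruct, immediately after each recursive return, the value that would otherwise have been lost by a uniform pass.

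For parts (3) and (4), the time bound forces a different encoding. I would represent a counter value $n$ by a clock reading of the form $1{-}2^{-n}$, which keeps the total elapsed time inside a bounded interval no matter how large $n$ grows; this is the same fractional trick that underlies undecidability of time-bounded reachability for nonrecursive hybrid models. In part (3) the three clocks are used as: one for each of $c_1, c_2$ in this encoded form, and one scratch clock to implement the doubling/halving steps inside the increment and decrement gadgets, with the game again certifying the arithmetic via \tort's challenge branches. Part (4) combines the fractional encoding of (3) with the glitch-free compensation of (2), yielding the extra stopwatch and the $4$-stopwatch bound.

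The hardest step, I expect, will be making the zero-test work inside the time budget in parts (3) and (4). In the unbounded setting a zero-test naturally ``spends one time unit and looks'', but under bounded time every test consumes part of the global time budget and so the construction cannot afford an unbounded number of uniform-cost tests. My plan is to nest the test inside the recursive call hierarchy so that the time charged to the $k$-th test is itself a quantity of order $2^{-k}$, synchronised with the fractional counter encoding, so that the sum of all test costs along any simulated run of $\mathcal{M}$ remains bounded by a single predetermined constant $T$. Verifying that this nesting is compatible with the hierarchical restriction, the stopwatch count, and the glitch-free condition simultaneously is the technically delicate part, and is what ultimately dictates whether the stated bounds of $3$ clocks and $4$ stopwatches are indeed the ones achieved by this approach.
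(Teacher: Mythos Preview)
Your proposal has two genuine gaps.

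For parts (1) and (2) you encode one counter as the call-stack depth of a ``single self-calling leaf component''. This directly contradicts the hierarchical restriction you claim to preserve: by definition a hierarchical RHA admits an ordering on components in which no component invokes one of the same or higher order, so self-calls are forbidden and the stack depth is bounded by the number of components. An unbounded stack-depth counter encoding is therefore impossible in the hierarchical setting. The paper does not use the stack for these two items at all: they follow immediately from the corresponding \emph{single-player} undecidability results for $2$-stopwatch unrestricted and $3$-stopwatch glitch-free RHA, simply by assigning every location to \ach. Those single-player constructions encode both counters in one variable as $\tfrac{1}{2^{c}3^{d}}$ and implement doubling, halving and the power-of-$2$ zero test via bounded-depth box calls together with transition-level loops inside a single component; no component ever calls itself, so the constructions are already hierarchical.

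For parts (3) and (4) you store $c_1$ and $c_2$ in separate clocks as $1-2^{-c_i}$ and reserve the third variable as scratch. The missing idea is that nothing then records the instruction index $k$, yet your own plan needs the $k$-th step to cost $O(2^{-k})$. With your encoding the cost of an increment/decrement gadget is $\Theta(2^{-c_i})$, which is independent of $k$: a two-counter machine that alternately increments and decrements $c_1$ keeps $c_1\in\{0,1\}$ and forces every step to cost $\Theta(1)$, so no global time bound can exist. The paper's fix is to fold $k$ into the encoding: with three clocks it takes $x=\tfrac{1}{2^{k+c}3^{k+d}}$, $y=\tfrac{1}{2^{k}}$, and $z$ purely for urgency. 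Every instruction halves $y$ and multiplies or divides $x$ by a fixed constant ($2,3,6,12,18$) via a $\mathit{Div}/\mathit{Mul}$ gadget in which \ach\ guesses the result and \tort\ may divert into a checker that reaches $\ddot\smile$ exactly when the guess is correct; the zero-test first multiplies $y$ by $2$ and $x$ by $6$ in lockstep until $y=1$, recovering $x=\tfrac{1}{2^{c}3^{d}}$, and then iterates $\times 2$ (resp.\ $\times 3$) on $x$ to detect $d=0$ (resp.\ $c=0$). Because every gadget on the simulation path runs in time proportional to the current $y=2^{-k}$, step $k$ costs $O(2^{-k})$ and the total elapsed time is bounded by a fixed constant; any single diversion by \tort\ into a checker costs only an additional bounded amount and ends the play. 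For (4) the same scheme is replayed with a fourth stopwatch $u$ serving as scratch, so that the selective pass-by-value tricks of (3) are unnecessary and every box can pass all variables uniformly, making the automaton glitch-free. In both (3) and (4) every box invokes a strictly lower component and all unbounded iteration lives in transition-level loops, so the constructions remain hierarchical.
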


On a positive side, we observe that for glitch-free RSA with two stopwatches 
reachability games are decidable by exploiting the existence of finite bisimulation
for hybrid automata with 2 stopwatches. Details are given in Appendix~\ref{dec} 
\begin{theorem}
 \label{thm:dec_game}
 The reachability games are decidable for glitch-free RSA
 with atmost two stopwatches. 
\end{theorem}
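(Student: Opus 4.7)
The plan is to combine two classical ingredients: the existence of a finite time-abstract bisimulation for (non-recursive) hybrid automata with at most two stopwatches, and the summarization technique that decides reachability games on ordinary recursive state machines (invoked at the end of Section~2.2). The first step is to fix, for each component $\Hh_i$, a computable finite partition ${\sim_i}$ of $\Real^{|\variables|}$ (with $|\variables|\le 2$) such that every internal edge of $\Hh_i$, ignoring boxes, respects ${\sim_i}$, and to henceforth work with the finite quotient LTS on pairs $(q,[\nu])$ where $[\nu]$ is the bisimulation class of $\nu$.

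Next, I would compute, by a least fixed point along the call graph, a \emph{summary} for every component $\Hh_i$: for each entry class $(en,[\nu_0])$, each exit class $(ex,[\nu])$, and each player $\pi\in\{\ach,\tort\}$, decide whether $\pi$ wins the internal game of $\Hh_i$ that drives the token from $(en,\nu_0)$ to $(ex,\nu)$ using, at every box encountered, only the summaries already computed for the component it is mapped to. The glitch-free hypothesis is what makes this step work. For a box passing all variables by value, the caller's valuation is stored on the stack and the callee starts from the passed valuation, so the callee's summary plugs in as a finite ``macro-edge'' between bisimulation classes and the stored valuation is restored exactly on return. For a box passing none of them, the valuation is shared with the callee, but because it still evolves within a two-stopwatch component, the same finite $\sim$ continues to apply. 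Since there are finitely many triples (entry class, exit class, winner), the fixed point terminates in finitely many rounds.

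Finally, the summaries together with the local quotient transitions assemble into a finite two-player reachability game graph that faithfully represents the original game, and its winner can be decided by the standard attractor algorithm; alternatively, one can package the whole construction as an ordinary RSM game over the quotient and appeal directly to the decidability result recalled in Section~2.2. The main obstacle is proving the correctness of the summary fixed point when player choices alternate across arbitrarily deep recursive calls in the pass-by-reference case: a single entry class can spawn many exit classes via differently played sub-games, and the sub-strategies thereby selected must compose into a global strategy without interference between stack frames. The glitch-free restriction together with $|\variables|\le 2$ are exactly what make this composition well-defined; by Theorem~\ref{th:main_game}, relaxing either ingredient destroys the construction.
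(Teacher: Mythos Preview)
Your proposal is correct and follows essentially the same approach as the paper: build a finite time-abstract bisimulation (region) quotient for the two-stopwatch dynamics, show that the glitch-free restriction makes this equivalence a congruence for the call/return mechanism, and then solve the resulting finite recursive game. The only cosmetic difference is that the paper constructs the region RSM $\Hh^{RG}$ explicitly and then appeals directly to the EXPTIME decidability of reachability games on RSMs from Section~2.2, whereas you spend most of your text unwinding the summary fixed-point computation that this black-box result performs internally; your final ``alternatively'' clause is exactly what the paper does.
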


We study the above mentioned problems when studied for a single player 
game. These problems have been detailed in Section \ref{sec:undec-rha}.

\begin{theorem}
\label{th:main}
The reachability problem is undecidable for 
\begin{enumerate}
\item Unrestricted RHA with 2 stopwatches,
\item Glitchfree RHA with 3 stopwatches, 
\item Unrestricted RTA with 5 clocks under bounded time, and 
\item Glitchfree RHA with 14 stopwatches under bounded time.
\end{enumerate}
Moreover, all of these results hold even under hierarchical restriction. 
\end{theorem}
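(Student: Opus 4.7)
The plan is to reduce, in each of the four subcases, from the halting problem of a deterministic two-counter (Minsky) machine $\mathcal{M}$, constructing a recursive hybrid (or timed) automaton of the specified form that has a designated control location reachable (within the time budget, when applicable) if and only if $\mathcal{M}$ halts. Throughout, I will encode a counter value $c$ by a variable whose value has the form $1 - 2^{-c}$, in the style of \cite{Cer92}: increment/decrement translate into halving or doubling the distance to $1$, and zero-test reads whether that distance equals $0$. The recursive (or merely hierarchical) call structure supplies the extra book-keeping that a flat hybrid automaton lacks, because a box call with pass-by-value stores a snapshot of selected variable valuations on the stack, effectively giving extra memory cells.

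For parts (1) and (2), which concern untimed reachability, I adapt the classical Cerans stopwatch encoding. In part (2), glitch-free RHA with three stopwatches, the non-recursive construction of \cite{Cer92} already yields undecidability, and dropping it into a single-component glitch-free RHA is immediate and trivially hierarchical. In part (1), unrestricted RHA with two stopwatches, I use a two-component hierarchical construction: the upper component holds two stopwatches $x$ and $y$ encoding the counters, and delegates each operation to a lower auxiliary component via a box that passes $y$ by value while leaving $x$ by reference. Inside the auxiliary component, I manipulate $x$ freely while $y$'s saved snapshot plays the role of the ``third stopwatch'' whose absence made the classical lower bound tight; on return, the saved value of $y$ is restored, preserving the second counter across the operation.

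For parts (3) and (4), the extra ingredient is a Zeno-style time scaling so that unboundedly many simulated counter operations fit into total time at most $T$. One clock is designated as a global watchdog constrained by $T$, and the simulation of the $n$-th step is carried out in time proportional to $2^{-n}$. To keep the counter encodings meaningful as the effective time unit shrinks, each step is performed inside a newly entered box, and the pass-by-value mechanism carries forward a rescaled snapshot of the relevant variables. In part (3), the five clocks are allocated as: two for the scaled encodings of $c_1$ and $c_2$, one for the current time-scale, one watchdog, and one scratch clock for the equality guards that implement zero-tests. In part (4), the glitch-free restriction bans mixing pass-by-value and pass-by-reference on the same box, so every piece of state must be replicated symmetrically across each call; organising the reduction around triples of stopwatches (value, scaled image, restored copy) for each counter, plus analogous groups for scale, watchdog and scratch, yields the fourteen-stopwatch bound.

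The main obstacle, and the reason for the relatively large stopwatch count in part (4), will be maintaining exact arithmetic for the counter encodings across the scale changes: an error of $\varepsilon$ accumulated in one step can be magnified by the next doubling and make zero-tests unreliable after sufficiently many steps. I would enforce exactness by designing each step's subroutine so the output encoding is forced to equal the rescaled input encoding via equality guards on pairs of stopwatches running at matched rates within the box; the glitch-free constraint rules out rate-mismatching shortcuts, which is precisely why more stopwatches are needed there than in part (3). Hierarchicality is preserved uniformly in all four parts by arranging auxiliary components strictly lower in the component ordering and never letting them call back, so the call graph is a DAG of depth bounded by a small constant.
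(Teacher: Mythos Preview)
Your overall plan—reduce from two-counter machines, use pass-by-value boxes to obtain scratch storage, and for the bounded-time parts use a Zeno scaling so that the $k$-th simulated step costs time $O(2^{-k})$—matches the paper's. The differences are in the encodings and in the clock/stopwatch bookkeeping.

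For part~(2) your shortcut is valid and simpler than the paper: the flat 3-stopwatch undecidability of~\cite{Cer92} already lives in a box-free (hence trivially glitch-free and hierarchical) RHA. The paper instead builds a fresh construction in which every box passes \emph{all} variables by value, so it additionally covers the pass-by-value-only subclass; your shortcut does not give that, but it proves the theorem as stated.

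For part~(1) your per-counter encoding $x=1-2^{-c_1}$, $y=1-2^{-c_2}$ differs from the paper's single-variable G\"odel encoding $x=\frac{1}{2^{c}3^{d}}$ with $y$ as pure helper. Two comments. First, your description of pass-by-value is off: the snapshot on the stack is not a third live variable readable inside the box; what you actually get is that the passed variable can be freely trashed as scratch and is silently restored on exit. Second, with your encoding the doubling/halving gadgets (which, as in the paper, need three nested call levels, not two) naturally deposit the result in the \emph{scratch} variable, not in the reference variable; you will need an extra copy-back gadget before exiting the outer box so that the updated counter survives the restore of $y$. The paper's G\"odel encoding sidesteps this because only one variable ever holds real state and the roles of $x$ and $y$ can simply alternate between instructions.

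For parts~(3) and~(4) the Zeno idea is right, but your budgets are not worked out. A global watchdog clock is unnecessary: the bound $T$ is part of the problem instance, not something the automaton enforces. The paper's five clocks in part~(3) are $x,y$ (encoding $1-2^{-(c+k)}$ and $1-2^{-(d+k)}$), two scale clocks $z_1,z_2$ both holding $1-2^{-k}$ (two copies are needed so the scale can halve \emph{itself}), and one urgency clock $b$. In part~(4) the count is $14=5+5+3+1$: five copies of each counter because the glitch-free gadget that verifies quartering needs that many staggered copies pinned down by a single final equality guard, three copies of the scale for the analogous halving gadget, and one urgency stopwatch. Your ``triples'' heuristic neither matches this allocation nor lands on $14$; you would need to actually design the glitch-free verification gadgets to see where these counts come from.
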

On a positive side, we observe the following decidability results. 
\begin{theorem}
 \label{thm:dec}
 The reachability and the termination problems are decidable for 
 \begin{enumerate}
\item  Glitch-free RHA
 with atmost two stopwatches
\item Bounded context RHA under bounded time, where variables are always passed-by-reference.
\end{enumerate}
\end{theorem}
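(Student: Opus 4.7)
The plan is to handle the two parts of the theorem with different techniques, each exploiting one of the restrictions.

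For part (1), I would lift the known finite bisimulation for 2-stopwatch hybrid automata (cited in the paper's discussion of Theorem~\ref{thm:dec_game}) to the recursive setting. With only two stopwatches, each component's valuation space admits a finite partition into regions such that all valuations in a region have equivalent time-abstract futures. The glitch-free restriction is crucial: for every box $b$, either $P(b) = \variables$ (all variables are saved on call and restored on return) or $P(b) = \emptyset$ (the caller's valuation is shared with the callee and carried back on return), so there are no partial snapshots to track. This uniformity lets me compute summaries: for each component $\Hh_i$ and each pair (entry node, entry region), a fixpoint iteration over the finitely many regions yields the set of (exit node, exit region) pairs reachable in $\sem{\Hh_i}$, where each internal box call is treated as a summary edge. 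I then construct a recursive state machine whose locations are pairs (RHA location, region), with call and return edges that respect the computed summaries and the pass-by-value versus pass-by-reference convention at each box. Reachability and termination on this finite RSM are decidable in polynomial time by the Alur--Etessami--Yannakakis algorithm, yielding the desired result.

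For part (2), the bounded-context restriction caps the call-stack depth at some constant $k$. Since all variables are passed by reference, no valuation is ever saved or restored across a call, and a call merely transfers execution to a subcomponent while variables continue to evolve freely. This allows me to statically unfold the RHA into a non-recursive singular hybrid automaton, with $k$-deep copies of each component wired by the call/return structure; the size is exponential in $k$ but finite. Time-bounded reachability on this unfolded singular hybrid automaton is then decidable by known results in the Ouaknine--Worrell line of work on time-bounded verification of rectangular/multi-rate hybrid automata: for a fixed horizon $T$, every reachable valuation lies inside a bounded polytope determined by the initial valuation, $T$, and the maximum rate, and a finite time-bounded region abstraction suffices to decide reachability.

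The main technical obstacle lies in part (1), where I must argue that the region-based summaries compose correctly with the pass-by-reference convention: the exit region produced by the callee can depend on the entry region passed in, and this dependence must be recorded as a single finite relation between entry and exit regions that is then consumed at the caller side without forgetting the caller's position in its own region graph. The glitch-free restriction is what makes this bookkeeping tractable, since the pass-by-reference alternative exposes both stopwatches uniformly, preserving the 2-stopwatch bisimulation locally. Once this is in place, the reduction to a finite RSM is mechanical. Part (2) is technically lighter, reducing, after the bounded-depth unfolding, directly to an existing decidability result for time-bounded reachability on singular hybrid automata.
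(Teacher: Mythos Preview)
Your approach for part~(1) is essentially the paper's: both build a region-abstracted RSM whose locations are pairs (RHA-location, region), relying on the fact that two-stopwatch hybrid automata admit a finite region partition and that the glitch-free restriction makes the call/return bookkeeping region-compatible. The paper packages this as a time-abstract bisimulation lemma for the regional equivalence on full configurations (including contexts), then constructs the region RSM directly; your summary-fixpoint language is just a different description of what the RSM reachability algorithm does internally.

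For part~(2) you take a genuinely different route. The paper does \emph{not} flatten; instead it re-runs the Brihaye et~al.\ contraction argument directly on RHA runs, defining a context-sensitive contraction operator that matches $(\text{context},\text{location})$ pairs rather than bare locations, and then re-derives the length bound with the number of distinct contexts $\alpha=\sum_{i=1}^{K} n^{i}$ as an extra multiplicative factor. Your proposal---unfold the bounded-depth call structure into a flat singular hybrid automaton with location set $Q\times B^{\le K}$ and invoke the existing time-bounded reachability result as a black box---is correct and more modular: since all variables are pass-by-reference, no valuations live on the stack, so the unfolding really is a plain singular hybrid automaton with non-negative rates and rectangular guards, and the cited decidability result applies verbatim. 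The paper's route buys a self-contained explicit bound and shows exactly where the context bound enters; your route buys a shorter proof that cleanly separates the recursion-flattening step from the hybrid-automaton analysis. One small imprecision: the decidability you invoke is not via a ``finite time-bounded region abstraction'' but via the contraction-plus-LP technique of~\cite{BG13}; the conclusion is the same, but the mechanism is run-length bounding rather than quotienting.
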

The result for Glitch-free RHA with two stopwatches follows from the decidability of two stopwatch hybrid automata.

\section{Undecidability Results with one player}
\label{sec:undec-rha}
 In this section, we provide a proof sketch of our undecidability results by
reducing the halting problem for two counter machines  to the reachability
problem in an RHA/RTA. 
A \emph{two-counter machine} $M$ is a tuple $(L, C)$ where ${L = \set{\ell_0,
    \ell_1, \ldots, \ell_n}}$ is the set of instructions including a
distinguished terminal instruction $\ell_n$ called HALT, and the set 
${C = \set{c_1, c_2}}$ of two \emph{counters}.  
The instructions $L$ are of the type:
\begin{enumerate}
\item (increment $c$) $\ell_i : c := c+1$;  goto  $\ell_k$,
\item (decrement $c$) $\ell_i : c := c-1$;  goto  $\ell_k$,
\item (zero-check $c$) $\ell_i$ : if $(c >0)$ then goto $\ell_k$
  else goto $\ell_m$,
\end{enumerate}
where $c \in C$, $\ell_i, \ell_k, \ell_m \in L$.
A configuration of a two-counter machine is a tuple $(l, c, d)$ where
$l \in L$ is an instruction, and $c, d \in \Nat$ is the value of counters $c_1$
and $c_2$, resp. 
A run of a two-counter machine is a (finite or infinite) sequence of
configurations $\seq{k_0, k_1, \ldots}$ where $k_0 = (\ell_0, 0, 0)$ and the
relation between subsequent configurations is governed by transitions between
respective instructions. 
The \emph{halting problem} for a two-counter machine asks whether 
its unique run ends at the terminal instruction $\ell_n$.
It is well known~(\cite{Min67}) that the halting problem for
two-counter machines is undecidable.

In order to prove four results of Theorem~\ref{th:main}, we construct a
recursive (timed/hybrid) automaton whose main components simulate various
instructions.  
In these constructions the reachability of the exit node of each component
corresponding to an instruction is linked to a faithful simulation of various
increment, decrement and zero check  instructions of the machine by choosing
appropriate delays to adjust the clocks/variables, to reflect changes  in
counter values. 
We specify a main component for each instruction of the two counter machine. 
The entry node and exit node of a main component corresponding to an 
instruction $\ell_i:  c := c +1$;  goto  $\ell_k$ are respectively $\ell_i$ and
$\ell_k$.   
Similarly, a main component corresponding to a zero check instruction 
$l_i$: if $(c >0)$ then goto $\ell_k$
else goto $\ell_m$, has a unique entry node  $\ell_i$, and two exit nodes
corresponding to $\ell_k$ and  $\ell_m$ respectively. 
We get the complete RHA for the two-counter machines when we connect these main
components in the same sequence as the corresponding machine. 
The halting  problem of the two counter machine now reduces to the reachability
(or termination) of an exit (HALT) node $\ell_n$ in some component.

For the correctness proofs, we represent runs in the RSA
using three different forms of transitions
  $s \stackrel[t]{g,J}{\longrightarrow} s'$, $s  \rightsquigarrow s'$ and $s \stackrel[M(V)]{*}{\longrightarrow} s'$ 
  defined in the following way:
  \begin{enumerate}
  \item The transitions of the form $s \stackrel[t]{g,J}{\longrightarrow} s'$, where $s=(\langle \kappa \rangle, n, \nu)$, 
  $s'=(\langle \kappa \rangle, n', \nu')$ are configurations of the RHA, $g$ is a constraint or guard 
  on variables that enables the transition,
   $J$ is a set of variables, and $t$ is a real number, holds if there is a transition in the RHA from vertex $n$ to $n'$ with guard $g$ and reset 
  set  $J$. Also, $\nu'=\nu+rt[J:=0]$, where $r$ is the rate vector 
  of state $s$.
  \item The transitions of the form $s \rightsquigarrow s'$ where $s=(\langle \kappa \rangle,n,\nu)$,
  $s'=(\langle \kappa' \rangle, n', \nu')$ correspond to the following cases:
  \begin{itemize}
  \item transitions from a call port to an entry node. That is, $n=(b,en)$ for some box
  $b \in B$ and $\kappa' =\langle \kappa, (b, \nu)\rangle$ and $n'=en \in \En$ while $\nu'=\nu$. 
  \item transitions from an exit node to a return port which restores values of the variables passed by value, that is, $\langle \kappa \rangle=\langle \kappa'',(b, \nu'')\rangle$, $n=ex \in \Ex$ and $n'=(b,ex) \in Ret(b)$ and $\kappa'=\kappa''$, while $\nu'=\nu[P(b):=\nu'']$.
  \end{itemize}
  \item The transitions of the form $s \stackrel[M(V)]{t}{\longrightarrow} s'$, called summary edges, where 
  $s=(\langle \kappa \rangle, n, \nu)$, $s'=(\langle \kappa \rangle, n', \nu')$ are such that $n=(b,en)$ and $n'=(b,ex)$ are call and return ports, respectively, of a box $b$ mapped
to $M$ which passes by value to $M$, the variables in $V$. $t$ is the time elapsed between the occurences of $(b,en)$ and $(b,ex)$. In other words, $t$ is the time elapsed in the component $M$.
    \end{enumerate}
  A configuration $(\langle \kappa \rangle, n, \nu)$ is also written as $(\langle \kappa \rangle, n, (\nu(x),\nu(y)))$.  

\subsection{Unrestricted RHA with 2 stopwatches}
\label{app:2sw}
For all the four  undecidability results, we construct a recursive automaton (timed/hybrid)
as per the case, whose main components are the modules for the instructions and 
the counters are encoded in the variables of the automaton.  
In these reductions,  the reachability of the exit node of each component
corresponding to an instruction is linked to a faithful simulation of various
increment, decrement and zero check  instructions of the machine by choosing
appropriate delays to adjust the clocks/variables, to reflect changes  in
counter values. 
We specify a main component for each type instruction of the two counter machine, for example $\Hh_{inc}$ for increment. 
The entry node and exit node of a main component $\Hh_{inc}$ corresponding to an 
instruction [$\ell_i:  c := c +1$;  goto  $\ell_k$] 
are respectively $\ell_i$ and $\ell_k$.  
 Similarly, a main component corresponding to a zero check instruction 
[$l_i$: if $(c >0)$ then goto $\ell_k$]
  else goto $\ell_m$, has a unique entry node  $\ell_i$, and two exit nodes corresponding to $\ell_k$ and 
  $\ell_m$ respectively. 
The various main components corresponding to the various instructions, when connected appropriately, gives the higher level component $\Hh_{M}$ and this completes the RHA $\Hh$. The entry node of $\Hh_{M}$ is the entry node of the main component for the first instruction of $M$ and the exit node is $Halt$. Suppose each main component for each type of instruction correctly simulates the instruction by accurately updating the counters encoded in the variables of $\Hh$. Then, the unique run in $M$ corresponds to an unique run in $\Hh_{M}$. The halting  problem of the two counter machine now boils down to the reachability of an exit node $Halt$ in $\Hh_{M}$.

\begin{lemma}
The reachability problem is undecidable for recursive hybrid automata 
  with at least two stopwatches.
\end{lemma}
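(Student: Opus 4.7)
The plan is to reduce the halting problem for two-counter Minsky machines to the reachability problem in an unrestricted RHA $\mathcal{H}$ with two stopwatches $x$ and $y$, following the template already laid out in this section. A configuration $(\ell_i,c_1,c_2)$ of $M$ will be represented by a node $\ell_i$ and the stopwatch valuation $x = 1/2^{c_1}$, $y = 1/2^{c_2}$: the value $1$ encodes a zero counter, an increment corresponds to exact halving of the associated stopwatch, a decrement to exact doubling, and a zero test to the guard $x=1$ (respectively $y=1$). I would then build one main component $\mathcal{H}_{\text{inc}_k}$, $\mathcal{H}_{\text{dec}_k}$, $\mathcal{H}_{\text{zc}_k}$ per instruction type ($k \in \{1,2\}$) and stitch them together into a top-level component $\mathcal{H}_M$ whose HALT exit is reachable iff $M$ halts.

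The crux of the construction is an \emph{exact halving/doubling} gadget, which is the only place where both recursion and the ability to pause a stopwatch (rate $0$) are needed. To halve $x$ while preserving the value of $y$, the gadget calls, through a box, a dedicated sub-component that passes $y$ \emph{by value}, so that $c_2$ is restored on return, and passes $x$ \emph{by reference}, so that the halved value of $x$ is visible to the caller. Inside the sub-component, the halving is enforced by alternating phases in which $x$ is frozen while $y$ ticks and vice versa, interspersed with equality guards of the form $x=1$ and $y=1$ that calibrate the elapsed delays against the current value of $x$; where needed, a recursive self-call exploits the exponential structure $1/2^{c+1} = (1/2^c)/2$ to force the relationship between the initial and final values of $x$. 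A symmetric construction yields doubling, and the roles of $x$ and $y$ are swapped for the operations on $c_2$.

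The main obstacle will be arguing that the halving/doubling gadget is \emph{cheat-proof}: any run of $\mathcal{H}$ that reaches HALT must correspond to a genuine halt of $M$, i.e., the delays chosen inside the gadget are forced by the guards to realise exact arithmetic. I would prove this by maintaining the invariant that, at every call to and return from a gadget, the current values of $x$ and $y$ equal $1/2^{c_1}$ and $1/2^{c_2}$ for the counter values currently encoded by the caller, together with a case analysis showing that any deviation in a delay either violates an equality guard or blocks the unique path leading to the entry node of the next instruction module. Combined with the easy direction that a faithful simulation of $M$'s run induces an accepting run of $\mathcal{H}$, this establishes the reduction and yields the lemma.
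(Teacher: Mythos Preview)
Your proposal is correct and yields a valid reduction, but it follows a different route from the paper. The paper encodes \emph{both} counters into a single stopwatch via the G\"odel value $x = 1/(2^c 3^d)$, keeping the second stopwatch $y$ as a scratch variable that is $0$ at the entry of every main component; halving and doubling are done by the standard guess-and-verify pattern with $y$ as auxiliary, and the zero test for $d$ is carried out by repeatedly doubling the encoded value and checking whether it hits exactly~$2$ (which happens iff $d=0$). Your direct encoding $x=1/2^{c_1}$, $y=1/2^{c_2}$ makes the zero test a single guard $x{=}1$ (resp.\ $y{=}1$), which is simpler, but in exchange you must \emph{preserve} the other counter across every halve/double; you handle this by passing it by value and resetting it inside the callee so that it can serve as the auxiliary. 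Both constructions rest on the same two-stopwatch mechanism---mixing pass-by-value and pass-by-reference at a single box, together with guess-then-check against integer equality guards---so the difference is really one of packaging: the paper pushes the difficulty into zero-check, you push it into increment/decrement. One small point: the ``recursive self-call exploiting $1/2^{c+1}=(1/2^c)/2$'' is not needed; a single, constant-depth call (guess $t$, verify $t=\alpha/2$ via two $y{=}1$ checks accumulating into $x$, then copy back) already realises exact halving.
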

\begin{proof}
We prove that the reachability and termination problems are undecidable for 2 stopwatch unrestricted RHA.  
In order to obtain the undecidability result, we 
use a reduction from the halting problem 
for two counter machines. 
Our reduction uses a RHA with stopwatches $x,y$.

We specify a main component for each instruction of the two counter machine. On entry into 
a main component for increment/decrement/zero check, 
we have $x=\frac{1}{2^c3^d}$, $y=0$ or $x=0, y=\frac{1}{2^c3^d}$, where $c,d$ are the current values of the counters. 
Given a two counter machine, we build a 2 stopwatch RHA 
whose building blocks are the main components for the instructions.   
The purpose of the components is to simulate faithfully the counter machine 
by choosing appropriate delays  to adjust the variables to reflect changes 
in counter values. 
On entering the entry node $en$ of a main component corresponding to an instruction $l_i$,
we have the configuration $(\langle \epsilon \rangle, en, (\frac{1}{2^c3^d},0))$ or 
$(\langle \epsilon \rangle, en, (0,\frac{1}{2^c3^d}))$
of the two stopwatch RHA.  

\begin{figure}[t]
  \centering
  {\small
\begin{tikzpicture}[node distance=4cm] 
  
  \draw(0, 0) rectangle (3.5,1.5);
  \draw (3.3, 1.7) node {$\db$};

  \node[loc](n1) at (0, 0.8) {$en$};  
  \node[rate] () [below of = n1,node distance = 5mm] {$\figrate{y}$};  
  
  \node[loc](x1) at (3.5, 0.8) {$ex$};
   
  \node[boxloc](b1) at (1.8, 0.8) {$\begin{array}{c} B_1{:}C_{\db}\end{array}$};
  \node () [below of = b1, node distance = 5mm] {$(x,y)$};
  \node[port](b1n1) [left of =b1,node distance = 7.2mm] {}; 
   \node[port](b1x1) [right of =b1,node distance = 7.2mm] {};
   
   \draw[trans] (n1)--(b1n1)  node [midway, above]{$y {>} 0$};
   \draw[trans] (b1x1)--(x1)  node [midway, above]{$\{x\}$};
   
  \draw(4.2, 0) rectangle (9,1.5);
  \draw (8.7, 1.7) node {$C_{\db}$};

  \node[loc](n2) at (4.2, 0.8) {$en_1$};  
  \node[loc](x2) at (9, 0.8) {$ex_1$};
  
  \node[boxloc](b2) at (5.5, 0.8) {$\begin{array}{c}B_2{:}M_1 \end{array}$};
  \node () [below of = b2, node distance = 5mm] {$(x)$};
  \node[port](b2n1) [left of =b2,node distance = 6mm] {}; 
   \node[port](b2x1) [right of =b2,node distance = 6mm] {};

  \node[boxloc](b3) at (7.2, 0.8) {$\begin{array}{c}B_3{:}M_1 \end{array}$};
  \node () [below of = b3, node distance = 5mm] {$(x)$};
  \node[port](b3n1) [left of =b3,node distance = 6mm] {}; 
   \node[port](b3x1) [right of =b3,node distance = 6mm] {};
  
  \draw[trans] (n2)--(b2n1) {};
  \draw[trans] (b2x1)--(b3n1) {};
  \draw[trans] (b3x1)--(x2)  node [midway, above]{$y {=} 2$};   

  \draw(9.7, 0) rectangle (11.3,1.5);
  \draw (10.7, 1.7) node {$M_1$};

  \node[loc](n3) at (9.7, 0.8) {$en_2$};  
  \node[loc](x3) at (11.3, 0.8) {$ex_2$};
 \node[rate] () [below of = n3,node distance = 5mm] {$\figrate{x,y}$};   `

  \draw[trans] (n3)--(x3) node [midway, above]{$x{=}1$};
   

    \draw(0, -2) rectangle (3.5,-0.5);
  \draw (3.3, -0.3) node {$\hf$};

  \node[loc](n1) at (0, -1.2) {$en$};  
  \node[rate] () [below of = n1,node distance = 5mm] {$\figrate{y}$};  
  
  \node[loc](x1) at (3.5, -1.2) {$ex$};
   
  \node[boxloc](b1) at (1.8, -1.2) {$\begin{array}{c} B_4{:}C_{\hf}\end{array}$};
  \node () [below of = b1, node distance = 5mm] {$(x,y)$};
  \node[port](b1n1) [left of =b1,node distance = 7.2mm] {}; 
   \node[port](b1x1) [right of =b1,node distance = 7.2mm] {};
   
   \draw[trans] (n1)--(b1n1)  node [midway, above]{$y {>} 0$};
   \draw[trans] (b1x1)--(x1)  node [midway, above]{$\{x\}$};
   
  \draw(4.2, -2) rectangle (9,-0.5);
  \draw (8.7, -0.3) node {$C_{\hf}$};

  \node[loc](n2) at (4.2, -1.2) {$en_3$};  
  \node[loc](x2) at (9, -1.2) {$ex_3$};
  
  \node[boxloc](b2) at (5.5, -1.2) {$\begin{array}{c}B_5{:}M_2 \end{array}$};
  \node () [below of = b2, node distance = 5mm] {$(y)$};
  \node[port](b2n1) [left of =b2,node distance = 6mm] {}; 
   \node[port](b2x1) [right of =b2,node distance = 6mm] {};

  \node[boxloc](b3) at (7.2, -1.2) {$\begin{array}{c}B_6{:}M_2 \end{array}$};
  \node () [below of = b3, node distance = 5mm] {$(y)$};
  \node[port](b3n1) [left of =b3,node distance = 6mm] {}; 
   \node[port](b3x1) [right of =b3,node distance = 6mm] {};
  
  \draw[trans] (n2)--(b2n1) {};
  \draw[trans] (b2x1)--(b3n1) {};
  \draw[trans] (b3x1)--(x2)  node [midway, above]{$x {=} 2$};   

  \draw(9.7, -2) rectangle (11.3,-0.5);
  \draw (10.7, -0.3) node {$M_2$};

  \node[loc](n3) at (9.7, -1.2) {$en_4$};  
  \node[loc](x3) at (11.3, -1.2) {$ex_4$};
 \node[rate] () [below of = n3,node distance = 5mm] {$\figrate{x,y}$};   `

  \draw[trans] (n3)--(x3) node [midway, above]{$x{=}1$};
   
  
  \draw(0, -4) rectangle (3.5,-2.5);
  \draw (3, -2.3) node {$zero~check$};

    \node[loc](n4) at (0, -3.3) {$en$}; 
  
    \node[loc](x4) at (3.5, -2.9) {$ex$};
    \node[loc](x41) at (3.5, -3.6) {$ex'$};
  
  \node[boxloc](b4) at (1.8, -3.3) {$\begin{array}{c} B_1{:}Po2  \end{array}$};
  \node[port](b4n1) [left of=b4,node distance = 7mm] {}; 
  \node[port](b4x1) at (2.5,-3.15) {}; 
  \node[port](b4x2) at (2.5,-3.45) {};
  \node () [below of = b4, node distance = 5mm] {$(x,y)$};

   \draw[trans] (n4)--(b4n1)  node [midway, above]{$y {=} 0$};
   \draw[trans] (b4x1)--(x4)  node [midway, above]{$y{=}0$};
  \draw[trans] (b4x2)--(x41)  node [midway, below]{$y{=}0$};

    \draw(4.2, -4.2) rectangle (11.3,-2.5);
  \draw (11, -2.3) node {$Po2$};

    \node[loc](n5) at (4.2, -3.3) {$en_5$}; 
  
    \node[loc](x5) at (11.3, -2.8) {$ex_5$};
    \node[loc](x51) at (11.3, -3.8) {$ex'_5$};
  
    \node[boxloc](b7) at (6, -3.3) {$\begin{array}{c} B_7{:}\db\end{array}$};
    \node[port](b7n1) [left of =b7,node distance = 6.5mm] {}; 
    \node[port](b7x1) [right of =b7,node distance = 6.5mm] {};

    \tikzstyle{boxlocN}=[draw, very thick, rectangle, minimum size=2em, inner sep=0.3em,color=blue,fill=white]

    \node[boxloc](b8) at (9, -3.3) {$\begin{array}{c} B_8{:}\db'\end{array}$};  
    \node[port](b8n1) [left of =b8,node distance = 6.6mm] {}; 
    \node[port](b8x1) [right of =b8,node distance = 6.6mm] {};
    
    \draw[trans] (n5) -- node[midway,above]{$y{=}0$}(b7n1);
    \draw[trans] (b7x1) --  node[near end,above]{$y{<}2$}(b8n1);
    \draw[trans] (b7x1) -- node[above,rotate=30]{$y{=}2$}(7.5,-2.8) -- (x5);
    \draw[trans] (b7x1) -- node{$y{>}2$} (7.5,-3.8) --(x51);
    
    \tikzstyle{transN}=[-latex, rounded corners,color=blue]
    
     \draw[trans] (b8x1) -- node[rotate=25,above]{$x{=}2$}(x5);
     \draw[trans] (b8x1) -- node[rotate=-25,above]{$x{>}2$}(x51);
    \draw[trans] (b8x1) -- (x5);
    \draw[trans] (b8x1) -- (x51);
    \draw[trans] (b8x1) -- (10.3,-4) -- (7,-4)-- node[midway,above]{$x{<}2$}(5.2,-4) --(b7n1);
\end{tikzpicture}
}
  \caption{RHA : 2 stopwatch : Decrement $c$ is $\db$, increment $c$ is $\hf$ and zero check $d$}
\label{fig_undec_2swG}
\end{figure}
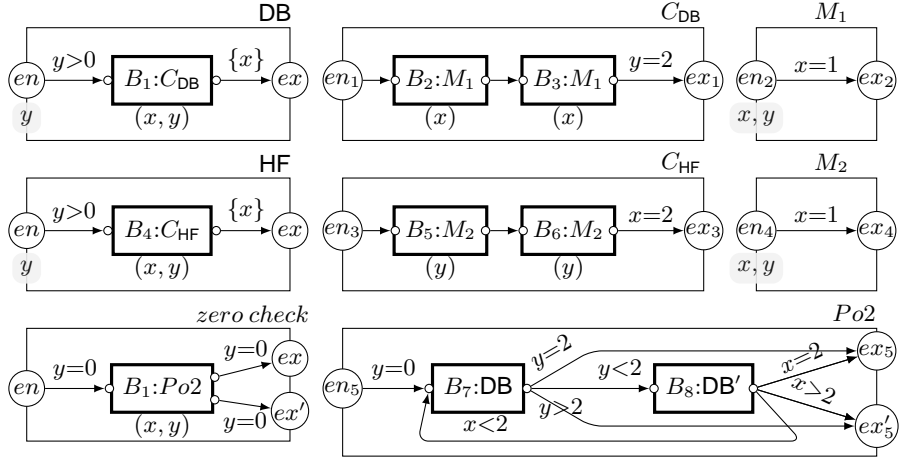

We shall now present the components for increment/decrement and zero check instructions.
In all the components, the ticking variables are written below respective locations in grey, while 
the variables passed by value are written below the boxes.

\noindent{\bf{Simulate decrement instruction}}: Lets consider the decrement instruction  
$\ell_i$: $c=c-1$; goto $\ell_k$. Figure \ref{fig_undec_2swG} gives the component $DB$ 
which decrements counter $c$, by doubling $\frac{1}{2^c3^d}$.  Assume that 
$x=\frac{1}{2^c3^d}$ and $y=0$ on entering $DB$. Lets denote by $x_{old}$ the value
$\frac{1}{2^c3^d}$.   
A non-deterministic amount of time $t$ is spent at the entry node $en$ of  
$DB$. This makes $x=x_{old}$ and $y=t$, at the call port 
of $B_1:C_{DB}$. 
Both $x,y$ are passed by value to $C_{DB}$. 

At the entry node of $C_{DB}$, the rates of $x,y$ are zero. At sometime, the call port of $B_2:M_1$ 
is reached with   $x=x_{old}$ and $y=t$. $M_1$ is called by passing $x$ by value. At the entry node of $M_1$, 
a time $1-x_{old}$ is spent, obtaining $x=1,y=t+1-x_{old}$. 
We return from the exit node of $M_1$ to the return port of $B_2:M_1$, with $x=x_{old}, y=1+t-x_{old}$.
The rates of $x,y$ are both zero here. After some time,  we are at the call port of $B_3:M_1$. 
Here again, $M_1$ is called by passing $x$ by value. Going through $M_1$ again gives us 
$x=1, y=2-2x_{old}+t$. At the return port of $B_3$, we thus have $x=x_{old}, y=2-2x_{old}+t$. Again since the rates of $x,y$ are both zero at the return port of $B_3$, to get to the exit node of $C_{DB}$, $y$ must be exactly equal to 2. That is, 
$2x_{old}=t$. In that case, when we get back to the return port of $B_1:C_{DB}$, we have 
$x=x_{old}, y=t$, with the guarantee that $t=2x_{old}$. The rates of $x,y$ are both zero here, so we 
 get to the exit node $ex$
of $DB$ resetting $x$. Thus, when we reach $ex$, we have $x=0, y=\frac{1}{2^{c-1}3^d}$.

\noindent{\bf{Simulate increment instruction}}:
The instruction [$\ell_i$: $c=c+1$; goto $\ell_k$] is handled by the component $HF$ in Figure \ref
{fig_undec_2swG}. 
The main component $HF$, when  entered with $x=\frac{1}{2^c3^d}, y=0$
will halve the value of $x$, and return $x=0, y=\frac{1}{2^{c+1}3^d}$. The working of the component 
$HF$ can be explained in a  similar way as that of $DB$.

\noindent{\bf{Zero check instruction}}: 
The component $zerocheck$ simulating [$\ell_i$ : if $(d>0)$ then goto $\ell_k$ else goto $l_m$] can be found in 
Figure \ref{fig_undec_2swG}. 
Assume we are at $en$ with $x=\frac{1}{2^c3^d}=x_{old}$ and $y=0$.
The rates of both $x,y$ are zero, so we reach the callport of $B_1{:}Po2$
with the same values of $x,y$. $Po2$ is called by passing both $x,y$ by value. 
No time is spent at the entry node $en_5$ of 
  $Po2$, so with $y=0$, we reach the call port of $B_7{:}DB$. 
  Recall that 
 $DB$ is the component that doubles the value of $x$ and stores it in $y$;
 $DB$ is called by passing both $x,y$ by reference; 
 when we return to the return port of $B_7$, we have $x=0, y=2x_{old}$.
 If $y=2$, then we go straightaway to the exit node $ex_5$ 
 of $Po2$. If $x <2$, then we goto the callport of $B_8{:}DB'$ from the return port of $B_7$.
 The component $DB'$ is similar to $DB$, with the roles of $x,y$ reversed as compared to $DB$, and 
 entry to $DB'$ happens with $x=0,y=2x_{old}$. At the exit node of $DB'$,  
we obtain $y=0,x=4x_{old}$. Now, if $x=2$, then we goto the 
exit node $ex_5$ 
 of $Po2$. If $x <2$, then we goto the callport of $B_7{:}DB$.
 In this way, we alternate between $DB,DB'$ until 
 we have multiplied $x_{old}$ by some number $k$ such that $k.x_{old}$ is exactly 2. 
 If we obtain $k.x_{old}=2$ at the return port of $B_7$, then we have $y=2$ and $x=0$, while 
 if we obtain  $k.x_{old}=2$ at the return port of $B_8$, then we have $x=2$ and $y=0$.
  If this happens, then $d=0$. If $d >0$, then we will never obtain 
  $k.x_{old}$ as 2. In this case, 
   we 
 go to the exit node $ex'_5$ when $x$ (or $y$) exceeds 2. If we reach the exit node $ex_5$ of $Po2$, 
 then we goto the exit node $ex$ of the zerocheck component, and 
 if we reach the exit node $ex'_5$ of $Po2$, then we goto the exit node $ex'$ of the zerocheck component.
 
  The following propositions show the correctness of the increment, decrement and zero check components.
  For the correctness proofs, we represent runs in the RHA using three different forms of transitions
  $s \stackrel[t]{g,J}{\longrightarrow} s'$, $s  \rightsquigarrow s'$ and $s \stackrel[M(V)]{*}{\longrightarrow} s'$ 
  defined in the following way:
  \begin{enumerate}
  \item The transitions of the form $s \stackrel[t]{g,J}{\longrightarrow} s'$, where $s=(\langle \kappa \rangle, n, \nu)$, 
  $s'=(\langle \kappa \rangle, n', \nu')$ are configurations of the RHA, $g$ is a constraint or guard 
  on variables that enables the transition,
   $J$ is a set of variables, and $t$ is a real number, holds if there is a transition in the RHA from vertex $n$ to $n'$ with guard $g$ and reset 
  set  $J$. Also, $\nu'=\nu+rt[J:=0]$, where $r$ is the rate vector 
  of state $s$.
  \item The transitions of the form $s \rightsquigarrow s'$ where $s=(\langle \kappa \rangle,n,\nu)$,
  $s'=(\langle \kappa' \rangle, n', \nu')$ correspond to the following cases:
  \begin{itemize}
  \item transitions from a call port to an entry node. That is, $n=(b,en)$ for some box
  $b \in B$ and $\kappa' =\langle \kappa, (b, \nu)\rangle$ and $n'=en \in \En$ while $\nu'=\nu$. 
  \item transitions from an exit node to a return port which restores values of the variables passed by value, that is, $\langle \kappa \rangle=\langle \kappa'',(b, \nu'')\rangle$, $n=ex \in \Ex$ and $n'=(b,ex) \in Ret(b)$ and $\kappa'=\kappa''$, while $\nu'=\nu[P(b):=\nu'']$.
  \end{itemize}
  \item The transitions of the form $s \stackrel[M(V)]{*}{\longrightarrow} s'$, called summary edges, where 
  $s=(\langle \kappa \rangle, n, \nu)$, $s'=(\langle \kappa' \rangle, n', \nu')$ are such that $n=(b,en)$ and $n'=(b,ex)$ are call and return ports, respectively, of a box $b$ mapped
to $M$ which passes by value to $M$, the variables in $V$.  
    \end{enumerate}
  A configuration $(\langle \kappa \rangle, n, \nu)$ is also written as $(\langle \kappa \rangle, n, (\nu(x),\nu(y)))$.  
  \begin{proposition}
  \label{2sw-db}
  For any context $\kappa$, any box $b \in B$, and $x \in [0,1]$, we have that 
   $(\langle \kappa \rangle, (b,en), (x,0)) \stackrel[DB]{*}{\longrightarrow} (\langle \kappa \rangle, (b,ex), (0,2x))$
  \end{proposition}
  \begin{proof}
  Component $DB$ uses components $C_{DB}$ and $M_1$. 
  The following is a unique run starting from  
  $(\langle \kappa \rangle, (b,en), (x,0))$ terminating in 
  $(\langle \kappa \rangle, (b,ex), (2x,0))$. 
  
  $(\langle \kappa \rangle, (b,en), (x_0,0)) \rightsquigarrow (\langle \kappa,b \rangle, en, (x_0,0))$\\\\
  $\stackrel[t]{y>0,\emptyset}{\longrightarrow}$
  $(\langle \kappa,b \rangle, (B_1,en_1), (x_0,t)) \rightsquigarrow (\langle \kappa,b,(B_1,(x_0,t)) \rangle, en_1, (x_0,t))$\\\\
  $\stackrel[any]{true,\emptyset}{\longrightarrow} (\langle \kappa,b,(B_1,(x_0,t)) \rangle, (B_2,en_2), (x_0,t))$
  $\rightsquigarrow (\langle \kappa,b,(B_1,(x_0,t)),(B_2,x_0) \rangle, en_2, (x_0,t))$\\\\
  $\stackrel[1-x_0]{x=1,\emptyset}{\longrightarrow} (\langle \kappa,b,(B_1,(x_0,t)),(B_2,x_0) \rangle, ex_2, (1,1-x_0+t))$\\\\ 
  $\rightsquigarrow (\langle \kappa,b,(B_1,(x_0,t))\rangle, (B_2,ex_2), (x_0,1-x_0+t))$\\\\
  $\stackrel[any]{true,\emptyset}{\longrightarrow}(\langle \kappa,b,(B_1,(x_0,t))\rangle, (B_3,en_2), (x_0,1-x_0+t))
  \\\\\rightsquigarrow 
  (\langle \kappa,b,(B_1,(x_0,t)),(B_3,x_0)\rangle, en_2, (x_0,1-x_0+t))$\\\\
  $\stackrel[1-x_0]{x=1,\emptyset}{\longrightarrow} (\langle \kappa,b,(B_1,(x_0,t)),(B_3,x_0) \rangle, ex_2, (1,2-2x_0+t)) 
  \\\\ \rightsquigarrow 
  (\langle \kappa,b,(B_1,(x_0,t))\rangle, (B_3,ex_2), (x_0,2-2x_0+t))$\\\\
  $\stackrel[any]{y=2,\emptyset}{\longrightarrow} 
  (\langle \kappa,b, (B_1,(x_0,t))\rangle, ex_1, (x_0,2))$ ($2-2x_0+t=2 \leftrightarrow t=2x_0$)\\\\
  $\rightsquigarrow 
  (\langle \kappa,b\rangle, (B_1,ex_1), (x_0,2x_0))$\\\\
  $\stackrel[any]{true,\{x\}}{\longrightarrow}
  (\langle \kappa,b\rangle,ex, (0,2x_0)) \rightsquigarrow (\langle \kappa\rangle,(b,ex), (0,2x_0)).$
 The transitions above easily follow from the descriptions 
 given in the decrement section. 
      \qed \end{proof}
  
    Proposition \ref{2sw-inc} proves the correctness of the component $HF$.
 \begin{proposition}
  \label{2sw-inc}
  For any context $\kappa$, any box $b \in B$, and $x \in [0,1]$, we have that 
   $(\langle \kappa \rangle, (b,en), (x,0)) \stackrel[HF]{*}{\longrightarrow} (\langle \kappa \rangle, (b,ex), (0,\frac{x}{2}))$
  \end{proposition}
\begin{proof}
Similar to Proposition \ref{2sw-db}.
\end{proof}
  Proposition \ref{2sw-zc} proves the correctness of the component
 $Po2$ that checks if $x$ is a power of 2:
 \begin{proposition}
  \label{2sw-zc}
  For any context $\kappa$, any box $b \in B$, and $x \in [0,1]$, we have that 
  starting from $(\langle \kappa \rangle, (b,en), (x,0))$,  $zero check$ terminates 
  at $(\langle \kappa \rangle, (b,ex), (x,0))$
iff $x=\frac{1}{2^i}$, $i \in \mathbb{N}$. Otherwise, 
it terminates in $(\langle \kappa \rangle, (b,ex'), (x,0))$.
 \end{proposition}
  \begin{proof}
  The proof of this follows from the correctness of the component $DB$ shown above.  
  Indeed, if $DB$ doubles the variable, clearly, $\frac{1}{2^c3^d}$ will become 2 
  eventually after $c+1$ invocations of $DB$ iff $d=0$. 
    \qed \end{proof}
    
   Note that the components for incrementing, decrementing and zero check for counter $d$ 
    can be obtained in a manner similar to $DB,HF$.  The only difference 
    is that we have to multiply and divide by 3; these gadgets can be obtained straightforwardly 
    by adapting $DB,HF$ appropriately.
    
      We now show that the two counter machine halts iff 
      a vertex $Halt$ corresponding to the halting instruction is reached 
      in the RHA. Clearly, all the main components discussed above ensure that all instructions are simulated correctly.   
  Assume the two counter machine halts. Then clearly, after going through all the main components corresponding to 
  relevant instructions, we reach the component that leads to the Halt vertex. 
  The $DB,HF,zerocheck$ subcomponents again ensure that simulation is done correctly 
  to reach the vertex Halt.    
   Conversely, assume that the two counter machine does not halt. Then 
   there are two possibilities: (1) the RHA proceeds component by component, forever, simulating all 
   instructions faithfully, or (2) the RHA is unable to take a transition, due to an error in the simulation of instructions. 
   In either case, the vertex Halt is never reached.  
     \end{proof}

\subsection{GlitchFree RHA with 3 stopwatches}
\begin{lemma}
\label{app:3sw}
  The reachability problem is undecidable for recursive hybrid automata with at least three stopwatches.
\end{lemma}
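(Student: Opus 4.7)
The plan is to reduce the halting problem for two-counter Minsky machines to the reachability problem in a glitch-free recursive hybrid automaton with three stopwatches $x,y,z$, reusing the top-level blueprint of the previous subsection: build a main component per instruction type, connect them in the sequence of the machine, and let reachability of a distinguished exit node $\textit{Halt}$ correspond to the machine halting. Since reachability for glitch-free RHA with two stopwatches is decidable (Theorem~\ref{thm:dec}), the extra stopwatch is essential; it will play the role of a ``shadow register'' that compensates for the loss of the mixed pass-by-value/pass-by-reference trick exploited in the 2-stopwatch construction.

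First I would fix an encoding: the pair of counter values $(c,d)\in\Nat^{2}$ is stored in a stopwatch value of the form $\frac{1}{2^{c}3^{d}}$, exactly as before, while the third stopwatch is used for auxiliary computation. Each main component is entered with a canonical valuation in which one stopwatch holds $\frac{1}{2^{c}3^{d}}$ and the other two are $0$; the gadget's contract is to exit with a canonical valuation reflecting the updated counter values. The only non-trivial design task is to re-engineer the doubling gadget $\mathit{DB}$, the halving gadget $\mathit{HF}$ and the power-of-two test $\mathit{Po2}$ so that every box is glitch-free, that is, either passes all three stopwatches by value or none. Exits from gadgets then compose in the same way as in Section~\ref{app:2sw}, so the correctness of the overall reduction follows once the three arithmetic gadgets are in place.

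The core adaptation is as follows. Recall that $\mathit{DB}$ in the 2-stopwatch case used a box that passed only $x$ by value to a subcomponent $M_{1}$, precisely so that $x$ could be restored to $x_{old}$ on return while $y$ retained the $1-x_{old}$ accumulated inside $M_{1}$. In the glitch-free setting I would instead perform the same iterated accumulation inside an all-by-reference box and use $z$ as an explicit surrogate for the would-be-preserved value: on entry to the gadget, $z$ is first synchronized with $x_{old}$ (this is possible with rectangular guards since one stopwatch can be run until a guard $x=1$ fires while another, started at $0$, remains stopped and then takes over), after which the doubling is produced by alternating calls with rates in $\{0,1\}$ chosen so that at each return the invariant ``the to-be-preserved value is available in $z$ (or in $x$) and the accumulated value is available in the other'' holds. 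Uses of all-by-value boxes are confined to hard checkpoints where restoring \emph{all} three stopwatches to a known snapshot is actually desired. The gadgets for $\mathit{HF}$ and for $\mathit{Po2}$ are then obtained as immediate variations, and analogous gadgets for the second counter are obtained by swapping the factor $2$ with $3$.

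The correctness of each gadget is stated as a proposition analogous to Propositions~\ref{2sw-db}, \ref{2sw-inc} and \ref{2sw-zc}, and is proved by tracing the unique run through the component and checking valuations at each transition. Composing these gadgets in accordance with the two-counter machine's control flow yields an RHA in which $\textit{Halt}$ is reachable iff the machine halts; the construction is hierarchical by design, since each gadget only invokes strictly simpler arithmetic subcomponents. The step I expect to be the main obstacle is the shadow-register bookkeeping in $\mathit{DB}$: because glitch-freeness forces pass-by-value to freeze all three stopwatches simultaneously, every intermediate phase must be arranged so that the copy and restore operations between $x$ and $z$ can be realized purely with rates in $\{0,1\}$ and guards of the form $u \bowtie k$, with no unintended slack for the simulator to cheat on the counter value.
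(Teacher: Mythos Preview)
Your proposal is correct and follows the same blueprint as the paper: reduce from two-counter machines with the $\frac{1}{2^{c}3^{d}}$ encoding, and re-engineer the $\mathit{DB}$, $\mathit{HF}$ and $\mathit{Po2}$ gadgets of Section~\ref{app:2sw} to be glitch-free by exploiting the third stopwatch and the rate-$0$ capability. The paper's concrete gadgets are a bit more direct than your shadow-register scheme: it uses \emph{only} all-by-value boxes, first stores $1-x_{old}$ in $y$ via an $x{=}1$-guarded transition, then makes the nondeterministic guess $t$ at a location where only $x$ ticks (so $y$ is frozen at $1-x_{old}$ without any explicit copy into $z$), and finally calls a checker $C_{\mathit{DB}}$ that alternates rate patterns on $y,z$ to force $t=2x_{old}$ through a guard $x{=}2$; in other words, the stopwatch rate-$0$ directly replaces the partial pass-by-value of the two-stopwatch construction, so no separate synchronization of $z$ with $x_{old}$ is needed.
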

\begin{proof}
The proof of this Lemma is a straightforward adaptation of the techniques used in Lemma \ref{app:2sw}. 
The main difference here is that, at all times, we have to pass all variables either by value, or by reference.
This necessitates the need for an extra variable.
In particular, we always pass all variables only by value. Thus, our result 
holds for the case of ``pass  by value'' RHAs with 3 stopwatches.

We specify a main component for each instruction of the two counter machine. On entry into 
a main component for increment/decrement/zero check, 
we have $x=\frac{1}{2^c3^d}$, $y=z=0$
 where $c,d$ are the current values of the counters. 
Given a two counter machine, we build a 3 stopwatch RHA 
whose building blocks are the main components for the instructions.   
The purpose of the components is to simulate faithfully the counter machine 
by choosing appropriate delays  to adjust the variables to reflect changes 
in counter values. 
On entering the entry node $en$ of a main component corresponding to an instruction $\ell_i$,
we have the configuration $(\langle \epsilon \rangle, en, (\frac{1}{2^c3^d},0,0))$ 
of the three stopwatch RHA.  
We shall now present the components for increment/decrement and zero check instructions.
In all the components, the ticking variables are written below respective locations in grey.

\noindent{\bf{Simulate decrement instruction}}: Lets consider the decrement instruction  
$\ell_i$: $c=c-1$; goto $\ell_k$. Figure \ref{fig_undec_3swGF_v2} gives the component $DB$ 
which decrements counter $c$, by doubling $\frac{1}{2^c3^d}$.  Assume that 
$x=\frac{1}{2^c3^d}$ and $y=z=0$ on entering $DB$. Lets denote by $x_{old}$ the value
$\frac{1}{2^c3^d}$.   
A time $1-x_{old}$ is spent at the entry node $en_1$ of $DB$, resulting 
in $x=0, y=1-x_{old},z=0$ at location $l$. 
A non-deterministic amount of time $t$ is spent at $l$. 
This makes $x=t$ and $y=1-x_{old},z=0$, at the call port 
of $A_1:C_{DB}$. 
All variables $x,y,z$ are passed by value to $C_{DB}$. 

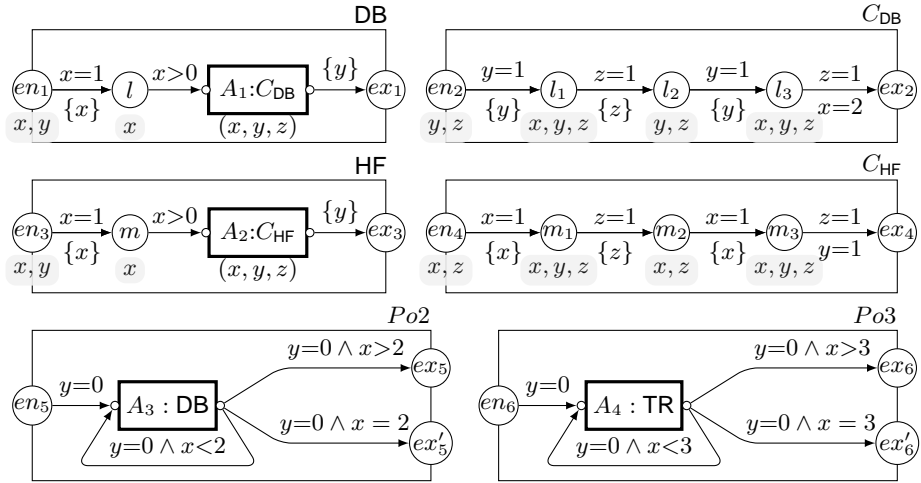
\begin{figure}[t]
  \centering
  {\small
\begin{tikzpicture}[node distance=4cm] 
  
    \draw(0, 0) rectangle (4.7,1.5);
  \draw (4.5, 1.7) node {$\db$};

  \node[loc](n1) at (0, 0.7) {$en_{1}$};  
  \node[rate]() [below of =n1,node distance=5mm] {$\figrate{x,y}$};
  
  \node[loc] (l) at (1.3,0.7) {$l$};
  \node[rate] () [below of = l,node distance = 5mm] {$\figrate{x}$};
  
  \node[boxloc](a1) at (3, 0.7) {$\begin{array}{c} A_1{:}C_{\db}\end{array}$};
  \node () [below of =a1, node distance=5mm]{$(x,y,z)$};
  \node[port](a1n1) [left of=a1,node distance = 6.9mm] {}; 
   \node[port](a1x1) [right of=a1,node distance = 6.9mm] {};
  
  \node[loc](x1) at (4.7, 0.7) {$ex_{1}$};
   
   \draw[trans] (n1)--(l)  node [midway, above]{$x{=}1$};
   \draw[trans] (n1)--(l)  node [midway, below]{$\set{x}$};
   \draw[trans] (l)--(a1n1)  node [midway, above]{$x {>} 0$};
   \draw[trans] (a1x1)--(x1)  node [midway, above]{$\set{y}$};
   
  \draw(5.5, 0) rectangle (11.5,1.5);
  \draw (11.3, 1.7) node {$C_{\db}$};

   \node[loc](n2) at (5.5,0.7) {$en_{2}$};     
   \node[rate] () [below of = n2,node distance = 5mm] {$\figrate{y,z}$};

   \node[loc](l1) at (7, 0.7) {$l_1$};
   \node[rate] () [below of = l1,node distance = 5mm] {$\figrate{x,y,z}$};
   
   \node[loc](l2) at (8.5, 0.7) {$l_2$};
   \node[rate] () [below of = l2,node distance = 5mm] {$\figrate{y,z}$};
   
   \node[loc](l3) at (10, 0.7) {$l_3$};
   \node[rate] () [below of = l3,node distance = 5mm] {$\figrate{x,y,z}$};
   
   \node[loc](x2) at (11.5, 0.7) {$ex_{2}$};
   
   \draw[trans] (n2)--(l1) node[midway,above] {$y{=}1$};
   \draw[trans] (n2)--(l1) node[midway,below] {$\set{y}$};

   \draw[trans] (l1)--(l2) node[midway,above] {$z{=}1$};
   \draw[trans] (l1)--(l2) node[midway,below] {$\set{z}$};
   
   \draw[trans] (l2)--(l3) node[midway,above] {$y{=}1$};
   \draw[trans] (l2)--(l3) node[midway,below] {$\set{y}$};
   
   \draw[trans] (l3)--(x2) node[midway,above] {$z{=}1$} node[midway,below] {$x{=}2$};
   
       \draw(0, -2) rectangle (4.7,-0.5);
  \draw (4.5, -0.3) node {$\hf$};

  \node[loc](n3) at (0, -1.2) {$en_{3}$};  
  \node[rate]() [below of =n3,node distance=5mm] {$\figrate{x,y}$};
  
  \node[loc] (m) at (1.3,-1.2) {$m$};
  \node[rate] () [below of = m,node distance = 5mm] {$\figrate{x}$};
  
  \node[boxloc](a2) at (3, -1.2) {$\begin{array}{c} A_2{:}C_{\hf}\end{array}$};
  \node () [below of =a2, node distance=5mm]{$(x,y,z)$};
  \node[port](a2n1) [left of=a2,node distance = 6.9mm] {}; 
   \node[port](a2x1) [right of=a2,node distance = 6.9mm] {};
  
  \node[loc](x3) at (4.7, -1.2) {$ex_{3}$};
   
   \draw[trans] (n3)--(m)  node [midway, above]{$x{=}1$};
   \draw[trans] (n3)--(m)  node [midway, below]{$\set{x}$};
   \draw[trans] (m)--(a2n1)  node [midway, above]{$x {>} 0$};
   \draw[trans] (a2x1)--(x3)  node [midway, above]{$\set{y}$};
   
  \draw(5.5, -2) rectangle (11.5,-0.5);
  \draw (11.3, -0.3) node {$C_{\hf}$};

   \node[loc](n4) at (5.5,-1.2) {$en_{4}$};     
   \node[rate] () [below of = n4,node distance = 5mm] {$\figrate{x,z}$};

   \node[loc](m1) at (7, -1.2) {$m_1$};
   \node[rate] () [below of = m1,node distance = 5mm] {$\figrate{x,y,z}$};
   
   \node[loc](m2) at (8.5, -1.2) {$m_2$};
   \node[rate] () [below of = m2,node distance = 5mm] {$\figrate{x,z}$};
   
   \node[loc](m3) at (10, -1.2) {$m_3$};
   \node[rate] () [below of = m3,node distance = 5mm] {$\figrate{x,y,z}$};
   
   \node[loc](x4) at (11.5, -1.2) {$ex_{4}$};
   
   \draw[trans] (n4)--(m1) node[midway,above] {$x{=}1$};
   \draw[trans] (n4)--(m1) node[midway,below] {$\set{x}$};

   \draw[trans] (m1)--(m2) node[midway,above] {$z{=}1$};
   \draw[trans] (m1)--(m2) node[midway,below] {$\set{z}$};
   
   \draw[trans] (m2)--(m3) node[midway,above] {$x{=}1$};
   \draw[trans] (m2)--(m3) node[midway,below] {$\set{x}$};
   
   \draw[trans] (m3)--(x4) node[midway,above] {$z{=}1$} node[midway,below] {$y{=}1$};
   
     \draw(0, -4.5) rectangle (5.3, -2.5);
  \draw (5, -2.3) node {$Po2$};

  \node[loc](n25) at (0, -3.5) {$en_{5}$};
  \node[loc](x25yes) at (5.3, -3) {$ex_{5}$};
  \node[loc](x25no) at (5.3, -4) {$ex'_{5}$};

  \node[boxloc](a3) at (1.8, -3.5) {$A_{3}:\db$};
  \node[port](a3n1) at (1.1, -3.5) {};
  \node[port](a3x1) at (2.5, -3.5) {};

  \draw[trans] (n25)--(a3n1)  node [midway, above]{$y {=} 0$};
  \draw[trans] (a3x1) -- +(0.8, 0.5) -- (x25yes) node [midway, above] {$y {=} 0 \wedge x    {>} 2$};
  \draw[trans] (a3x1) -- +(0.8, -0.5) -- (x25no) node [midway, above] {$y {=} 0 \wedge x = 2$};

  \draw[trans] (a3x1)-- +(0.5, -0.8) -- +(-1.9, -0.8) node [midway,
  above]{$y {=} 0  \wedge x {<} 2$} -- (a3n1)  ;

  \draw(6.2, -4.5) rectangle (11.5, -2.5);
  \draw (11.2, -2.3) node {$Po3$};

  \node[loc](n25) at (6.2, -3.5) {$en_{6}$};
  \node[loc](x25yes) at (11.5, -3) {$ex_{6}$};
  \node[loc](x25no) at (11.5, -4) {$ex'_{6}$};

  \node[boxloc](a3) at (8, -3.5) {$A_{4}:\tr$};
  \node[port](a3n1) at (7.3, -3.5) {};
  \node[port](a3x1) at (8.7, -3.5) {};
 

  \draw[trans] (n25)--(a3n1)  node [midway, above]{$y {=} 0$};
  \draw[trans] (a3x1) -- + (0.8, 0.5) -- (x25yes) node [midway, above] {$y {=} 0 \wedge x    {>} 3$};
  \draw[trans] (a3x1) -- + (0.8, -0.5) -- (x25no) node [midway, above] {$y {=} 0 \wedge x = 3$};

  \draw[trans] (a3x1)-- +(0.5, -0.8) -- +(-1.9, -0.8) node [midway,
  above]{$y {=} 0  \wedge x {<} 3$} -- (a3n1)  ;

\end{tikzpicture}
}
   \caption{Glitch-free RSA 3 stopwatch : Decrement $c$, Increment $c$ and $Po2$, $Po3$}
 \label{fig_undec_3swGF_v2}
\end{figure}

At the entry node $en_2$ of $C_{DB}$, the rates of $y,z$ are one.
A time $x_{old}$ is spent at $en_2$, 
 obtaining $y=0,x=t$ and $z=x_{old}$  
  at $l_1$. 
  At $l_1$, a time $1-x_{old}$ is spent, obtaining 
  $x=1+t-x_{old}, y=1-x_{old}$ and $z=0$ at $l_2$.
  A time $x_{old}$ is spent at $l_2$ obtaining $x=1+t-x_{old}$, $y=0,z=x_{old}$ at $l_3$.
  A time $1-x_{old}$ is spent at $l_3$, obtaining $z=1,x=2+t-2x_{old}$ and $y=1-x_{old}$.
  To move out of $l_3$, $x$ must be 2; that is  
  possible iff 
   $t=2x_{old}$.
  In this case, from the return port of $A_1:C_{DB}$ (rates are all 0 here), we 
  reach the exit node $ex_1$ of $DB$ resetting $y$, obtaining $x=t=2x_{old},y=z=0$, thereby successfully 
  decrementing $c$.

\noindent{\bf{Simulate increment instruction}}:
The instruction $\ell_i$: $c=c+1$; goto $\ell_k$ is handled by the component $HF$ in Figure \ref{fig_undec_3swGF_v2}. A time $1-x_{old}$ is spent at entry node $en_3$ 
of $HF$, reaching location   $m$ with $x=0,y=1-x_{old}$ and $z=0$. 
A non-deterministic time $t$ is spent in $m$,  reaching the entry node $en_4$ of $C_{HF}$ 
with $x=t$, $y=1-x_{old}$ and $z=0$. 
The exit node $ex_4$ of $C_{HF}$ can be reached iff $t=\frac{x_{old}}{2}$.
The working of these components are similar to $DB, C_{DB}$.

\noindent{\bf{Zero check instruction}}: 
The component $zerocheck$ simulating $\ell_i$ : if $(d>0)$ then goto $\ell_k$ 
is the same as the $zerocheck$ component in 
Figure \ref{fig_undec_2swG}, where the subcomponent $Po2$ is called, passing all variables by value.
The subcomponent $Po2$ called can be found in 
Figure \ref{fig_undec_3swGF_v2}.  
At the entry node of $Po2$, no time is spent, and we are at the call port of $DB$.
 
We have drawn DB here like a box to avoid clutter, but it is 
actually a  
 transition that goes from $en_5$ on $y=0$ to a location called $en_1$. 
Continue with the transitions drawn inside DB (treat them like normal transitions), 
and we have the sequence of transitions from $en_1$ to $ex_1$, where $C_{DB}$ is called in between. 
The edge $x<2 \wedge y=0$ is a transition from $ex_1$ to $en_1$.  In the figure, 
to avoid clutter, we have drawn it from the return port to the call port of DB.  
This loop from $ex_1$ to $en_1$ is invoked 
repeatedly, until we obtain $x$ exactly equal to 2. If this happens, then we know that $d=0$ 
in $\frac{1}{2^c3^d}=x_{old}$. If this does not happen, then at some point of time, we 
will obtain $x$ as more than 2. 
In this case, $d \neq 0$. 
In the former case, we go the exit node $ex'$ of $Po2$ from $ex_1$, and 
in the latter case, we go to the exit node $ex$ of $Po2$ from $ex_1$. 
Note that, whenever a box is called, we have always passed all the variables only by value.

The propositions proving correctness of the main and sub components is similar to Lemma \ref{app:2sw}. 
Also, it is clear that the node $Halt$ is reached iff the two counter machine halts.
 \qed \end{proof}
  
\subsection{GlitchFree RHA with 2 clocks and 1 stopwatch}
\begin{lemma}
\label{app:2c1s}
  The reachability problem is undecidable for recursive hybrid automata with at least two clocks and one stopwatch.
\end{lemma}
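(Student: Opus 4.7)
The plan is to adapt the reduction from Lemma~\ref{app:3sw} to the setting with two clocks $x, y$ and one stopwatch $z$. I would encode the pair of counter values into the stopwatch via the invariant $z = \frac{1}{2^c 3^d}$, reserving both clocks for the internal timing computations needed to double, halve, triple, and third the stopwatch during recursive calls. Upon entry to each main component we also arrange $x = y = 0$, and the component exits with $z$ updated according to the simulated instruction and both clocks reset, preserving the invariant.

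First I would build the doubling component $DB$ for $z$. As in the 2-stopwatch construction, a non-deterministic delay $t$ is spent at the entry, after which we invoke a sub-component via a box that passes all three variables by value (preserving glitch-freeness). Inside the sub-component the ticking clocks and the dynamics of the stopwatch are combined with exact integer guards (analogous to the guards $y=1$, $z=1$, $x=2$ in Figure~\ref{fig_undec_3swGF_v2}) so that the exit is reachable iff $t = 2 z_{\text{old}}$. Halving $HF$ is dual, and tripling $TR$ and one-third $TH$ for counter $d$ are obtained by replacing the integer constants appropriately. The zero-check components $Po2$ and $Po3$ then reuse these subcomponents in a loop exactly as in Figure~\ref{fig_undec_3swGF_v2}, comparing $z$ against the relevant target and branching to the two exits $ex$ and $ex'$.

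I would then connect the main components in the same order as the instructions of the two-counter machine, with the designated exit $\mathit{Halt}$ reachable iff the machine halts. Correctness would follow from propositions of the same shape as Propositions~\ref{2sw-db}, \ref{2sw-inc}, and \ref{2sw-zc}, proved by exhibiting the unique run in the RHA that faithfully simulates a given run of the two-counter machine, and by observing that any deviation (a wrong choice of the non-deterministic delay) renders the exit guard of the called sub-component unsatisfiable.

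The hard part will be designing $DB$ and $HF$ with only one stopwatch while respecting the glitch-free restriction. In the 2-stopwatch construction, two non-running stopwatches provided ``frozen'' memory to hold intermediate values; here only $z$ can be paused while $x$ and $y$ are forced to tick. I would compensate by passing all three variables by value at every box so that the recursion stack itself serves as extra memory: at each call port the clock values are pushed and upon return they are restored, effectively simulating the pause behaviour for $x$ and $y$ between segments of the arithmetic. The main design subtlety is to schedule the internal guards of each sub-component so that, even though $x$ and $y$ advance monotonically inside a single component, the resulting linear system over $(z_{\text{old}}, t, \text{elapsed times})$ has a unique solution exactly at $t = 2 z_{\text{old}}$ (respectively $z_{\text{old}}/2$, $3 z_{\text{old}}$, $z_{\text{old}}/3$); once this is arranged the remainder of the reduction is essentially the same as in the previous two lemmas.
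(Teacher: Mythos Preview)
Your plan diverges from the paper in one key design choice: you encode the counters in the single stopwatch $z$, whereas the paper keeps the encoding in a clock (taking $x = \frac{1}{2^c 3^d}$, $y = 0$, $s = 0$) and reserves the stopwatch as a freezable \emph{helper}. With the paper's choice the adaptation from Lemma~\ref{app:3sw} is almost mechanical: in the three-stopwatch gadgets the counter-carrying variable is never paused anyway, so making it a clock costs nothing, and whenever the original construction needs to freeze a helper that role is played by $s$. The paper also observes that every box except the zero-check can be called pass-by-reference and hence inlined, so genuine recursion is used only at the zero-check.

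Your route can be made to work, but the mechanism you invoke for it is not quite right, and this is where the real content of the construction hides. Passing all three variables by value at a box does \emph{not} give you ``extra memory'' for the clocks: on return all three variables, including $z$, are restored, so nothing computed inside the call survives in the caller. A glitch-free pass-by-value call can therefore only serve as a \emph{verifier} (does the callee reach its exit?), never as an arithmetic step that updates the caller's state. Consequently every update to $z$ must happen in the main flow of the component, and the called sub-component must check---via a carefully scheduled alternation of $z$-ticking and $z$-paused locations keyed to guards $x{=}1$, $y{=}1$---that the non-deterministic delay chosen in the main flow was correct. For doubling, for instance, one lets $z$ tick for time $t$ at the entry so that $z = z_{\text{old}} + t$ and then verifies $t = z_{\text{old}}$ (not $t = 2z_{\text{old}}$ as you wrote). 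You have asserted that ``once this is arranged'' the rest follows, but arranging it is precisely the lemma: you have not exhibited such a verifier, and the three-stopwatch $C_{DB}$ you cite relies on pausing \emph{two} distinct variables at different locations, which you cannot do here. The paper's encoding sidesteps this difficulty entirely by placing the pausable variable in the helper role rather than in the counter role.
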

\begin{proof}
The proof of this Lemma is a straightforward adaptation of the techniques used in Lemma \ref{app:2sw}. 
The main difference here is that, at all times, we have to pass all variables either by value, or by reference.
This necessitates the need for an extra variable.
In particular, we pass all variables by reference in all except the zero check module. 
Note that all the calls with pass by reference can be removed by expanding 
the sub-component (callee) in the main component (caller).
Thus, our result holds for the case of ``pass  by value'' RHAs with 2 clocks and 1 stopwatch.

We specify a main component for each instruction of the two counter machine. On entry into 
a main component for increment/decrement/zero check, 
we have two clocks $x=\frac{1}{2^c3^d}$, $y=0$ and one stopwatch $s=0$.
 where $c,d$ are the current values of the counters. 
Given a two counter machine, we build a 3 stopwatch RHA 
whose building blocks are the main components for the instructions.   
The purpose of the components is to simulate faithfully the counter machine 
by choosing appropriate delays  to adjust the variables to reflect changes 
in counter values. 
On entering the entry node $en$ of a main component corresponding to an instruction $\ell_i$,
we have the configuration $(\langle \epsilon \rangle, en, (\frac{1}{2^c3^d},0,0))$ 
of the three stopwatch RHA.  
We shall now present the components for increment/decrement and zero check instructions.
In all the components, the ticking variables are written below respective locations in grey.

\noindent{\bf{Simulate decrement instruction}}: Lets consider the decrement instruction  
$\ell_i$: $c=c-1$; goto $\ell_k$. Figure \ref{fig_undec_3swGF_v2} gives the component $DB$ 
which decrements counter $c$, by doubling $\frac{1}{2^c3^d}$.  Assume that 
$x=\frac{1}{2^c3^d}$ and $y=z=0$ on entering $DB$. Lets denote by $x_{old}$ the value
$\frac{1}{2^c3^d}$.   
A time $1-x_{old}$ is spent at the entry node $en_1$ of $DB$, resulting 
in $x=0, y=1-x_{old},z=0$ at location $l$. 
A non-deterministic amount of time $t$ is spent at $l$. 
This makes $x=t$ and $y=1-x_{old},z=0$, at the call port 
of $A_1:C_{DB}$. 
All variables $x,y,z$ are passed by value to $C_{DB}$. 

\begin{figure}[t]
  \centering
  {\small
\begin{tikzpicture}[node distance=4cm] 
  
    \draw(0, 0) rectangle (7,1.5);
  \draw (6.4, 1.7) node {$\instn$};

  \node[loc](n1) at (0, 0.7) {$en_{1}$};   
  
  \node[boxloc](a1) at (1.7, 0.7) {$\begin{array}{c} A_1{:}\get\end{array}$};  
  \node[port](a1n1) [left of=a1,node distance = 6.5mm] {}; 
   \node[port](a1x1) [right of=a1,node distance = 6.5mm] {};
  
  \node[loc] (l) at (3.4,0.7) {$l$};
  
  \node[boxloc](a2) at (5.2, 0.7) {$\begin{array}{c} A_2{:}\chk\end{array}$};  
  \node[port](a2n1) [left of=a2,node distance = 8.4mm] {}; 
   \node[port](a2x1) [right of=a2,node distance = 8.4mm] {};

  \node[loc](x1) at (7, 0.7) {$ex_{1}$};
   
   \draw[trans] (n1)--(a1n1)  node [midway, above]{$y{=}0$};
   \draw[trans] (a1x1)--(l) node [midway, above]{$y{=}0$} node [midway, below]{$\set{x}$};
   \draw[trans] (l)--(a2n1)  node [midway, above]{$\set{y}$};
   \draw[trans] (a2x1)--(x1)  node [midway, above]{$y{=}0$} node [midway, below]{$\set{s}$};
   
       \draw(0, -2) rectangle (4.7,-0.5);
  \draw (4.5, -0.3) node {$Decrement~i:\get$};

  \node[loc](n3) at (0, -1.2) {$en_{3}$};  
  \node[rate]() [below of =n3,node distance=5mm] {$\figrate{x,y}$};
  
  \node[loc] (m) at (1.3,-1.2) {$m$};
  \node[rate] () [below of = m,node distance = 5mm] {$\figrate{x}$};
  
  \node[boxloc](a2) at (3, -1.2) {$\begin{array}{c} A_2{:}C_{\hf}\end{array}$};
  \node () [below of =a2, node distance=5mm]{$(x,y,z)$};
  \node[port](a2n1) [left of=a2,node distance = 6.9mm] {}; 
   \node[port](a2x1) [right of=a2,node distance = 6.9mm] {};
  
  \node[loc](x3) at (4.7, -1.2) {$ex_{3}$};
   
   \draw[trans] (n3)--(m)  node [midway, above]{$x{=}1$};
   \draw[trans] (n3)--(m)  node [midway, below]{$\set{x}$};
   \draw[trans] (m)--(a2n1)  node [midway, above]{$x {>} 0$};
   \draw[trans] (a2x1)--(x3)  node [midway, above]{$\set{y}$};
   
  \draw(5.5, -2) rectangle (11.5,-0.5);
  \draw (11.3, -0.3) node {$C_{\hf}$};

   \node[loc](n4) at (5.5,-1.2) {$en_{4}$};     
   \node[rate] () [below of = n4,node distance = 5mm] {$\figrate{x,z}$};

   \node[loc](m1) at (7, -1.2) {$m_1$};
   \node[rate] () [below of = m1,node distance = 5mm] {$\figrate{x,y,z}$};
   
   \node[loc](m2) at (8.5, -1.2) {$m_2$};
   \node[rate] () [below of = m2,node distance = 5mm] {$\figrate{x,z}$};
   
   \node[loc](m3) at (10, -1.2) {$m_3$};
   \node[rate] () [below of = m3,node distance = 5mm] {$\figrate{x,y,z}$};
   
   \node[loc](x4) at (11.5, -1.2) {$ex_{4}$};
   
   \draw[trans] (n4)--(m1) node[midway,above] {$x{=}1$};
   \draw[trans] (n4)--(m1) node[midway,below] {$\set{x}$};

   \draw[trans] (m1)--(m2) node[midway,above] {$z{=}1$};
   \draw[trans] (m1)--(m2) node[midway,below] {$\set{z}$};
   
   \draw[trans] (m2)--(m3) node[midway,above] {$x{=}1$};
   \draw[trans] (m2)--(m3) node[midway,below] {$\set{x}$};
   
   \draw[trans] (m3)--(x4) node[midway,above] {$z{=}1$} node[midway,below] {$y{=}1$};
   

     \draw(0, -4.5) rectangle (5.3, -2.5);
  \draw (5, -2.3) node {$Po2$};

  \node[loc](n25) at (0, -3.5) {$en_{5}$};
  \node[loc](x25yes) at (5.3, -3) {$ex_{5}$};
  \node[loc](x25no) at (5.3, -4) {$ex'_{5}$};

  \node[boxloc](a3) at (1.8, -3.5) {$A_{3}:\db$};
  \node[port](a3n1) at (1.1, -3.5) {};
  \node[port](a3x1) at (2.5, -3.5) {};

  \draw[trans] (n25)--(a3n1)  node [midway, above]{$y {=} 0$};
  \draw[trans] (a3x1) -- +(0.8, 0.5) -- (x25yes) node [midway, above] {$y {=} 0 \wedge x    {>} 2$};
  \draw[trans] (a3x1) -- +(0.8, -0.5) -- (x25no) node [midway, above] {$y {=} 0 \wedge x = 2$};

  \draw[trans] (a3x1)-- +(0.5, -0.8) -- +(-1.9, -0.8) node [midway,
  above]{$y {=} 0  \wedge x {<} 2$} -- (a3n1)  ;

\end{tikzpicture}
}
   \caption{Glitch-free RSA 3 stopwatch : Decrement $c$, Increment $c$ and $Po2$, $Po3$}
 \label{fig_undec_3swGF_v2}
\end{figure}

At the entry node $en_2$ of $C_{DB}$, the rates of $y,z$ are one.
A time $x_{old}$ is spent at $en_2$, 
 obtaining $y=0,x=t$ and $z=x_{old}$  
  at $l_1$. 
  At $l_1$, a time $1-x_{old}$ is spent, obtaining 
  $x=1+t-x_{old}, y=1-x_{old}$ and $z=0$ at $l_2$.
  A time $x_{old}$ is spent at $l_2$ obtaining $x=1+t-x_{old}$, $y=0,z=x_{old}$ at $l_3$.
  A time $1-x_{old}$ is spent at $l_3$, obtaining $z=1,x=2+t-2x_{old}$ and $y=1-x_{old}$.
  To move out of $l_3$, $x$ must be 2; that is  
  possible iff 
   $t=2x_{old}$.
  In this case, from the return port of $A_1:C_{DB}$ (rates are all 0 here), we 
  reach the exit node $ex_1$ of $DB$ resetting $y$, obtaining $x=t=2x_{old},y=z=0$, thereby successfully 
  decrementing $c$.

\noindent{\bf{Simulate increment instruction}}:
The instruction $\ell_i$: $c=c+1$; goto $\ell_k$ is handled by the component $HF$ in Figure \ref{fig_undec_3swGF_v2}. A time $1-x_{old}$ is spent at entry node $en_3$ 
of $HF$, reaching location   $m$ with $x=0,y=1-x_{old}$ and $z=0$. 
A non-deterministic time $t$ is spent in $m$,  reaching the entry node $en_4$ of $C_{HF}$ 
with $x=t$, $y=1-x_{old}$ and $z=0$. 
The exit node $ex_4$ of $C_{HF}$ can be reached iff $t=\frac{x_{old}}{2}$.
The working of these components are similar to $DB, C_{DB}$.

\noindent{\bf{Zero check instruction}}: 
The component $zerocheck$ simulating $\ell_i$ : if $(d>0)$ then goto $\ell_k$ 
is the same as the $zerocheck$ component in 
Figure \ref{fig_undec_2swG}, where the subcomponent $Po2$ is called, passing all variables by value.
The subcomponent $Po2$ called can be found in 
Figure \ref{fig_undec_3swGF_v2}.  
At the entry node of $Po2$, no time is spent, and we are at the call port of $DB$.
 
We have drawn DB here like a box to avoid clutter, but it is 
actually a  
 transition that goes from $en_5$ on $y=0$ to a location called $en_1$. 
Continue with the transitions drawn inside DB (treat them like normal transitions), 
and we have the sequence of transitions from $en_1$ to $ex_1$, where $C_{DB}$ is called in between. 
The edge $x<2 \wedge y=0$ is a transition from $ex_1$ to $en_1$.  In the figure, 
to avoid clutter, we have drawn it from the return port to the call port of DB.  
This loop from $ex_1$ to $en_1$ is invoked 
repeatedly, until we obtain $x$ exactly equal to 2. If this happens, then we know that $d=0$ 
in $\frac{1}{2^c3^d}=x_{old}$. If this does not happen, then at some point of time, we 
will obtain $x$ as more than 2. 
In this case, $d \neq 0$. 
In the former case, we go the exit node $ex'$ of $Po2$ from $ex_1$, and 
in the latter case, we go to the exit node $ex$ of $Po2$ from $ex_1$. 
Note that, whenever a box is called, we have always passed all the variables only by value.

The propositions proving correctness of the main and sub components is similar to Lemma \ref{app:2sw}. 
Also, it is clear that the node $Halt$ is reached iff the two counter machine halts.
 \qed \end{proof}

\subsection{Unrestricted RTA over bounded time}
\label{app:rta}
\begin{lemma}
The time bounded reachability problem is undecidable for recursive timed automata 
  with at least 5 clocks.
\end{lemma}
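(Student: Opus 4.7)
The plan is to reduce the halting problem for two-counter Minsky machines to the time-bounded reachability problem on an RTA with five clocks, following the template used in the earlier lemmas of this section. As before, each instruction of the two-counter machine $M$ yields a main component, the components are chained together according to the control flow of $M$, and reaching an exit node labelled $\textsf{Halt}$ corresponds to $M$ halting. The novelty here is that all variables are clocks (rate $1$), so the stopwatch tricks used in the three-stopwatch construction are unavailable; moreover the total elapsed real time in any accepting run must not exceed a fixed bound $T$.

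I would encode a configuration $(\ell, c, d)$ by storing the counter values together in a single clock reading $x = \frac{1}{2^{c} 3^{d}} \in (0,1]$, leaving the other four clocks for auxiliary bookkeeping: one clock to hold a guessed new counter value $y$, and three further clocks to synchronise the verification of the arithmetic relation between $x$ and $y$ (e.g.\ $y = 2x$ for decrementing $c$, $y = x/2$ for incrementing $c$, etc.). The arithmetic check is implemented by a recursive subcomponent: one waits a non-deterministic delay, calls a subcomponent that passes the relevant clocks by value, and uses the restored values on return to compare against the guessed value by means of well-chosen guards. The zero-check on $d$ follows the $\textsf{Po2}$-style pattern already used in Lemma~\ref{app:2sw}: repeatedly double $x$ via calls to the decrement subcomponent, detecting whether some iterate equals exactly $2$ (meaning $d=0$) or strictly overshoots $2$ (meaning $d>0$).

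The essential trick to accommodate the time bound is pass-by-value recursion. Whenever we must elapse time in order to perform an arithmetic check, the delays are incurred \emph{inside} a subcomponent all of whose clocks are passed by value; because these clocks are restored on return, the outer caller's clocks are in effect frozen for the duration of the call. This way, only the unavoidable top-level clock advances count against $T$, and those can be kept uniformly small (or even zero) across the whole simulation, so that any fixed bound $T > 0$ suffices. The use of five clocks rather than three (as in the unbounded case from~\cite{TW10}) is precisely the price of having to preserve the counter-encoding clock through the calls while simultaneously measuring a candidate new value and verifying an arithmetic equation between the two, without the ability to stop a clock.

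The main obstacle I anticipate is engineering the exact guards, resets, and pass-by-value annotations so that (a) every genuinely correct counter operation admits a corresponding run in the RTA, (b) every spurious run that takes incorrect delays is blocked by a guard mismatch somewhere in the simulation, and (c) the total elapsed time stays below $T$ on every legal run. The bookkeeping across call boundaries, in particular keeping the $\frac{1}{2^{c} 3^{d}}$ encoding intact through many nested subcomponent invocations of increment, decrement, and zero-check, is the most delicate part; correctness will be argued via propositions analogous to Propositions~\ref{2sw-db}, \ref{2sw-inc}, and \ref{2sw-zc} describing the input-output behaviour of each main and sub component, followed by a standard induction on the number of simulated instructions.
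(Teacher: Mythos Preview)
Your proposal has a genuine gap in point~(c): pass-by-value does \emph{not} make time spent inside a call invisible to the global time budget. The duration $\Time(r)$ of a run is defined as the sum of \emph{all} delays $t_i$ along the run, including those taken inside arbitrarily deeply nested subcomponents. Passing a clock by value only means its \emph{value} is restored upon return; the physical time that elapsed during the call still counts toward $T$. So your claim that ``the outer caller's clocks are in effect frozen'' and that ``any fixed bound $T>0$ suffices'' is incorrect. With the encoding $x=\tfrac{1}{2^c3^d}$, each arithmetic operation (doubling, halving) costs $\Theta(1)$ real time, and the $\textsf{Po2}$-style zero-check alone performs unboundedly many doublings; hence the total time of a simulation of $N$ instructions is $\Omega(N)$ and cannot be bounded by any fixed $T$.

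The paper's construction avoids this by a different encoding that makes the delays \emph{geometrically decreasing} in the instruction index. At the entry to the component simulating the $(k{+}1)$-th instruction one maintains
\[
\nu(z_1)=\nu(z_2)=1-\tfrac{1}{2^{k}},\quad \nu(x)=1-\tfrac{1}{2^{c+k}},\quad \nu(y)=1-\tfrac{1}{2^{d+k}},
\]
so that all the delays incurred while simulating the $(k{+}1)$-th instruction are $O(\tfrac{1}{2^{k}})$, and the total time over the whole simulation is bounded by a convergent geometric series (in fact $<18$). The separate clock pair $z_1,z_2$ carries the step index and doubles as the reference for a one-shot zero-check (testing $z_1=x$ in place of repeated doubling). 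The fifth clock $b$ enforces urgency. The unrestricted pass-by-value/pass-by-reference mix is then used not to ``hide'' time but to let a single clock be updated while the others are preserved across a call. Your plan would need to be reworked around a step-indexed encoding of this kind to achieve a fixed time bound.
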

\begin{proof}
We prove that  the problem of reaching a chosen vertex in an RTA  
within 18 units of total elapsed time is undecidable. 
In order to get the undecidability result, we use a reduction from the halting problem 
for two counter machines. Our reduction uses an RTA with atleast 5 clocks.

We specify a main component for each instruction of the two counter machine. 
We maintain 3 sets of clocks. The first set $X=\{x\}$ 
 encodes correctly the current value of counter $c$;
the second set $Y=\{y\}$  
encodes correctly the current value of counter $d$;
the third set $Z=\{z_1,z_2\}$ of 2 clocks 
helps in zero-check.  
An extra clock $b$ is used to enforce urgency in some locations.
$b$ is zero at the entry nodes of all the main components.
Let $\mathcal{X}$ denote the set of all 5 clocks.  
The tuple of variables written below each box denotes 
variables passed by value.

To be precise, on entry into a main component 
simulating the $(k+1)$th instruction, 
we have the values of $z_1,z_2$ as $1-\frac{1}{2^k}$,
the value of $x$ as 
$1-\frac{1}{2^{c+k}}$, and the value of $y$ as
$1-\frac{1}{2^{d+k}}$, where $c,d$ are the current values of 
the counters after simulating the first $k$ instructions. 
We will denote this by saying that at the beginning of the $(k+1)$th instruction, 
we have $\nu(Z)=1-\frac{1}{2^k}$, $\nu(x)=1-\frac{1}{2^{c+k}}$ and 
$\nu(y)=1-\frac{1}{2^{d+k}}$. If the $(k+1)$th instruction $\ell_{k+1}$ is an 
increment counter $c$ instruction, then after the simulation 
of $\ell_{k+1}$, we need $\nu(Z)=1-\frac{1}{2^{k+1}}$,
$\nu(x)=1-\frac{1}{2^{c+k+2}}$ and 
$\nu(y)=1-\frac{1}{2^{d+k+1}}$. Similarly, if $\ell_{k+1}$ 
is a decrement $c$ instruction, then 
after the simulation 
of $\ell_{k+1}$, we need $\nu(Z)=1-\frac{1}{2^{k+1}}$,
$\nu(x)=1-\frac{1}{2^{c+k}}$ and 
$\nu(y)=1-\frac{1}{2^{d+k+1}}$.
Likewise, if $\ell_{k+1}$ 
is a zero check instruction, then 
after the simulation 
of $\ell_{k+1}$, we need $\nu(Z)=1-\frac{1}{2^{k+1}}$,
$\nu(x)=1-\frac{1}{2^{c+k+1}}$ and 
$\nu(y)=1-\frac{1}{2^{d+k+1}}$.

\noindent{\bf Simulate Increment Instruction}: Let us discuss the case of simulating an increment 
instruction for counter $c$. Assume that this is the $(k+1)$th instruction.  Figure \ref{fig_undec_5clk_Gl_TB_v3} gives the figure 
for incrementing counter $c$.  At the entry node $en_1$ of 
the component $Inc~c$, we have $\nu(x)=1-\frac{1}{2^{c+k}}$, $\nu(y)=1-\frac{1}{2^{d+k}}$ and $\nu(Z)=
1-\frac{1}{2^k}$, and $\nu(b)=0$.

The component $Inc~c$ has three subcomponents sequentially lined up one after the other: 
Let $\beta=\frac{1}{2^k}, \beta_c=\frac{1}{2^{c+k}}$, and $\beta_d=\frac{1}{2^{d+k}}$.
\begin{enumerate}
\item The first subcomponent 
is $Up_2^y$.  
If $Up_2^y$ is entered with $\nu(y)=1-\beta_d$, then on exit, 
we have $\nu(y)=1-\frac{\beta_d}{2}$. The values of $X,Z$ are unchanged. Also, the total time elapsed 
in $Up_2^y$ is $\leq \frac{5\beta}{2}$.
\item The next subcomponent is $Up_4^x$.  
If $Up_4^x$ is entered with $\nu(x)=1-\beta_c$, then on exit, 
we have $\nu(x)=1-\frac{\beta_c}{4}$. The values of $Z,Y$ are unchanged. Also, the total time elapsed 
in $Up_4^x$ is $\leq \frac{11\beta}{4}$.
\item The next subcomponent is $Up_2^Z$ updates the value of $Z$. 
If $Up_2^Z$ is entered with $\nu(Z)=1-\beta$, then on exit, 
we have $\nu(Z)=1-\frac{\beta}{2}$. The values of $X,Y$ are unchanged. Also, the total time elapsed 
in $Up_2^Z$ is $\leq \frac{5\beta}{2}$. 
\item Thus, at the end of the $Inc~c$, we obtain $\nu(Z)=1-\frac{1}{2^{k+1}}$, 
$\nu(x)= 1-\frac{1}{2^{c+k+2}}$, $\nu(y)=1-\frac{1}{2^{d+k+1}}$. Also, the total time elapsed 
in $Inc~c$ is $\leq [\frac{5}{2}+\frac{11}{4}+\frac{5}{2}]\beta < 8 \beta$.
\end{enumerate}
 On calling $Up_n^a$, for 
$a \in \{x,y\}$, the clock $a$  is passed by reference;
likewise, on calling $Up_n^Z$, clocks in $Z$ are passed by reference.  
Here, $n \in \{2,4\}$.
Next, we describe the structure of the components $Up_n^a$ for $a \in \{x,y\}$. At the entry node $en_2$ of 
 $Up_n^a$, we have the invariant $b=0$. Thus, no time is elapsed 
 in the entry node $en_1$ of $Inc~c$ also. $Up_n^a$ is made up of subcomponents 
 $D,C_{z_2}^{a=},D$ and $Chk_n^a$ lined sequentially.
 Let us discuss the details of $Up_2^y$, the others have similar functionality.
 \begin{enumerate}
 \item On entry into the first subcomponent $F_4{:}D$, we have $\nu(Z)=1-\beta$, $\nu(b)=0$,
 $\nu(x)=1-\beta_c$, $\nu(y)=1-\beta_d$. 
 $D$ is called, and clock $z_2$  is passed by reference and the rest by value.
  A non-deterministic amount of time $t_1$ 
 elapses at the entry node $en_3$ of $D$. Back at the return port of $F_4{:}D$, we 
 have clock $z_2$ added by $t_1$.

 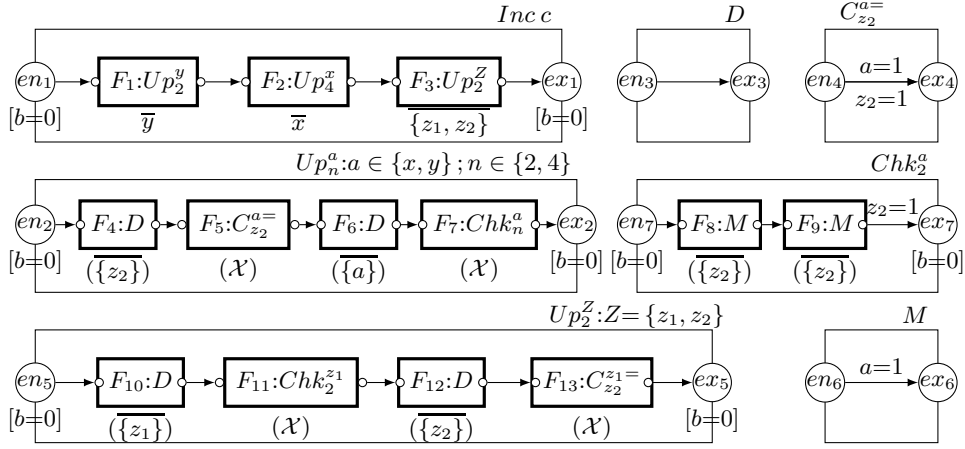
\begin{figure}[t]

{\small
\begin{tikzpicture}[node distance=4cm] 
  
  \draw(0, 0) rectangle (7,1.5);
  \draw (6.5, 1.7) node {$\incc$};

  \node[loc](n61) at (0, 0.8) {$en_{1}$};  
  \node[inv] () [below of = n61,node distance = 5mm] {$\figinv{b{=}0}$};    
  
  \node[boxloc](f1) at (1.5, 0.8) {$\begin{array}{c} F_1{:}\upAn{y}{2}\\ \end{array}$};
  \node () [below of = f1,node distance=5.4mm] {$\varbar{y}$};
  \node[port](f1n1) [left of=f1,node distance = 7mm] {}; 
  \node[port](f1x1) [right of=f1,node distance = 7mm] {};

  \node[boxloc](f2) at (3.5, 0.8) {$\begin{array}{c} F_2{:}\upAn{x}{4}\\ \end{array}$};
  \node () [below of = f2,node distance=5.4mm] {$\varbar{x}$};
  \node[port](f2n1) [left of=f2,node distance = 7mm] {}; 
  \node[port](f2x1) [right of=f2,node distance = 7mm] {};

  \node[boxloc](f3) at (5.5, 0.8) {$\begin{array}{c} F_3{:}\upAn{Z}{2}\\ \end{array}$};
  \node () [below of = f3,node distance=5.4mm] {$\varbar{\set{z_1,z_2}}$};
  \node[port](f3n1) [left of=f3,node distance = 7mm] {}; 
  \node[port](f3x1) [right of=f3,node distance = 7mm] {};

  \node[loc](x61) at (7, 0.8) {$ex_{1}$};
  \node[inv] () [below of = x61,node distance = 5mm] {$\figinv{b{=}0}$};  

   \draw[trans] (n61)-- (f1n1) {};
   \draw[trans] (f1x1)--(f2n1) {};
   \draw[trans] (f2x1)--(f3n1) {};
   \draw[trans] (f3x1)-- (x61)	;
   
 
    \draw(8, 0) rectangle (9.5,1.5);
   \draw (9.3, 1.7) node {$\Del$};

   \node[loc](n63) at (8, 0.8) {$\DEn$};  
   \node[loc](x63) at (9.5, 0.8) {$\DEx$};
     
   \draw[trans] (n63)--(x63);
    

 
    \draw(10.5, 0) rectangle (12,1.5);
   \draw (11, 1.7) node {$\chkEQn{a}{z_2}$};

   \node[loc](n64) at (10.5, 0.8) {$\chkEQEn$};  
   \node[loc](x64) at (12, 0.8) {$\chkEQEx$};
     
   \draw[trans] (n64)--(x64) node [midway, above]{$a{=}1$};
   \draw[trans] (n64)--(x64) node [midway, below]{$z_2{=}1$};
    
   

  \draw(0, -2) rectangle (7.2,-0.5);
  \draw (5.3, -0.3) node {$\upAn{a}{n} {:} a \in \set{x,y}; n \in \set{2,4}$};

  \node[loc](n62) at (0, -1.1) {$\upAnEn$}; 
  \node[inv] () [below of = n62,node distance = 5mm] {$\figinv{b{=}0}$};  
  
  \node[boxloc](f4) at (1.1, -1.1) {$\begin{array}{c} F_4{:}\Del\end{array}$};
  \node () [below of = f4, node distance = 6mm] {$(\varbar{\set{z_2}})$};
  \node[port](f4n1) [left of=f4,node distance = 5mm] {}; 
  \node[port](f4x1) [right of=f4,node distance = 5mm] {};  
  
  \node[boxloc](f5) at (2.7, -1.1) {$\begin{array}{c} F_5{:}\chkEQn{a}{z_2}\end{array}$};
  \node () [below of = f5, node distance = 6mm] {$(\variables)$};
  \node[port](f5n1) [left of=f5,node distance = 7.5mm] {}; 
  \node[port](f5x1) [right of=f5,node distance = 7.5mm] {};
  
  \node[boxloc](f6) at (4.3, -1.1) {$\begin{array}{c} F_6{:}\Del\end{array}$};
  \node () [below of = f6, node distance = 6mm] {$(\varbar{\set{a}})$};
  \node[port](f6n1) [left of=f6,node distance = 5mm] {}; 
  \node[port](f6x1) [right of=f6,node distance =5mm] {};

  \node[boxloc](f7) at (5.9, -1.1) {$\begin{array}{c} F_7{:}\chkAn{a}{n}\end{array}$}; 
  \node () [below of = f7, node distance = 6mm] {$(\variables)$};
  \node[port](f7n1) [left of=f7,node distance = 7.5mm] {}; 
  \node[port](f7x1) [right of=f7,node distance = 7.5mm] {};

  \node[loc](x62) at (7.2, -1.1) {$\upAnEx$};
  \node[inv] () [below of = x62,node distance = 5mm] {$\figinv{b{=}0}$};

  \draw[trans] (n62)--(f4n1);
  \draw[trans] (f4x1)--(f5n1);
  \draw[trans] (f5x1)--(f6n1);
  \draw[trans] (f6x1)--(f7n1);
  \draw[trans] (f7x1) -- (x62);
  
 
   \draw(8, -2) rectangle (12,-0.5);
   \draw (11.5, -0.3) node {$\chkAn{a}{2}$};

   \node[loc](n67) at (8, -1.1) {$\chkAtwoEn$};  
   \node[inv] () [below of = n67,node distance = 5mm] {$\figinv{b{=}0}$};  
   
   \node[boxloc](f8) at (9.1, -1.1) {$\begin{array}{c} F_8{:}\DEq{a}\end{array}$};
   \node () [below of =f8,node distance=6mm]{$(\varbar{\set{z_2}})$};
   \node[port](f8n1) [left of=f8,node distance = 5mm] {}; 
   \node[port](f8x1) [right of=f8,node distance = 5mm] {};
   
   \node[boxloc](f9) at (10.5, -1.1) {$\begin{array}{c} F_9{:}\DEq{a} \end{array}$};
   \node () [below of =f9,node distance=6mm]{$(\varbar{\set{z_2}})$};
   \node[port](f9n1) [left of=f9,node distance = 5mm] {}; 
   \node[port](f9x1) [right of=f9,node distance = 5mm] {};
   
   \node[loc](x67) at (12, -1.1) {$\chkAtwoEx$};
   \node[inv] () [below of = x67,node distance = 5mm] {$\figinv{b{=}0}$};

   \draw[trans] (n67)--(f8n1);
   \draw[trans] (f8x1) -- (f9n1);  
   \draw[trans] (f9x1) -- (x67) node[midway,above]{$ z_2{=}1$};
    
%
%
 
   \draw(10.5, -4) rectangle (12,-2.5);
   \draw (11.7, -2.3) node {$\DEq{a}$};

   \node[loc](n63) at (10.5, -3.2) {$en_6$};  
   \node[loc](x63) at (12, -3.2) {$ex_6$};
     
   \draw[trans] (n63)--node[midway,above]{$a{=}1$}(x63);
    
%
%
%
%
%
%
%
%
%
  

  \draw(0, -4) rectangle (9,-2.5);
  \draw (8, -2.3) node {$\upAn{Z}{2} {:} Z {=} \set{z_1,z_2}$};

  \node[loc](n65) at (0, -3.2) {$en_5$}; 
  \node[inv] () [below of = n65,node distance = 5mm] {$\figinv{b{=}0}$};  
  
  \node[boxloc](f10) at (1.4, -3.2) {$\begin{array}{c} F_{10}{:}\Del\end{array}$};
  \node () [below of = f10, node distance = 6mm] {$(\varbar{\set{z_1}})$};
  \node[port](f10n1) [left of=f10,node distance = 5.5mm] {}; 
  \node[port](f10x1) [right of=f10,node distance = 5.5mm] {};  
  
  \node[boxloc](f11) at (3.4, -3.2) {$\begin{array}{c} F_{11}{:}\chkAn{z_1}{2}\end{array}$};
  \node () [below of = f11, node distance = 6mm] {$(\variables)$};
  \node[port](f11n1) [left of=f11,node distance = 9.5mm] {}; 
  \node[port](f11x1) [right of=f11,node distance = 9.5mm] {};
  
  \node[boxloc](f12) at (5.4, -3.2) {$\begin{array}{c} F_{12}{:}\Del\end{array}$};
  \node () [below of = f12, node distance = 6mm] {$(\varbar{\set{z_2}})$};
  \node[port](f12n1) [left of=f12,node distance = 5.5mm] {}; 
  \node[port](f12x1) [right of=f12,node distance =5.5mm] {};

  \node[boxloc](f13) at (7.4, -3.2) {$\begin{array}{c} F_{13}{:}\chkEQn{z_1}{z_2}\end{array}$}; 
  \node () [below of = f13, node distance = 6mm] {$(\variables)$};
  \node[port](f13n1) [left of=f13,node distance = 7.5mm] {}; 
  \node[port](f13x1) [right of=f13,node distance = 7.5mm] {};

  \node[loc](x65) at (9, -3.2) {$ex_5$};
  \node[inv] () [below of = x65,node distance = 5mm] {$\figinv{b{=}0}$};

  \draw[trans] (n65)--(f10n1);
  \draw[trans] (f10x1)--(f11n1);
  \draw[trans] (f11x1)--(f12n1);
  \draw[trans] (f12x1)--(f13n1);
  \draw[trans] (f13x1) -- (x65);

\end{tikzpicture}
}
 \caption{$TB~Term$ in RTA: Increment $c$. Note that $\varbar{S} = \variables-S$ and $a \in \set{x,y}$. 
 $C_{z_2}^{z_1=}$ is obtained by instantiating $a=z_1$ in  $C_{z_2}^{a=}$. 
  The component $\chkAn{a}{4}$ is similar to $\chkAn{a}{2}$. It has 4 calls to $\DEq{a}$ inside it each time passing only $z_2$ by reference. Zero Check component follows the same pattern as $\incc$ calling $\upAn{a}{2}$, for all $a \in \set{x,y}$, followed by $Up_2^Z$, and then calls $ZC$ passing all variables by value. $ZC$ checks if $z_1=x$ (with guard $z_1{=}1\wedge x=1$) to check if counter $c$ is 0 and $z_1=y$ to check if $d$ is 0.}
 \label{fig_undec_5clk_Gl_TB_v3}
\end{figure}

 \item We are then at the entry node of the subcomponent $F_5{:}C_{z_2}^{y=}$ with values
 $\nu(z_2)=1-\beta+t_1$,  and $\nu(z_1)=1-\beta$,  
 $\nu(x)=1-\beta_c$, $\nu(y)=1-\beta_d$ and $\nu(b)=0$.
 $C_{z_2}^{y=}$ is called by passing all clocks by value. 
 The subcomponent $C_{z_2}^{y=}$ ensures that $t_1=\beta-\beta_d$.
 \item To ensure $t_1=\beta-\beta_d$, at the entry node $en_4$ of $C_{z_2}^{y=}$, 
 a time $\beta_d$ elapses. This makes $y=1$. If $z_2$ must be 1, then 
 we need $1-\beta+t_1+\beta_d=1$, or the time $t_1$ elapsed is 
 $\beta-\beta_d$. That is, 
 $C_{z_2}^{y=}$ ensures that $z_2$ has grown 
 to be equal to $y$ by calling $F_4{:}D$.
  Back at the return port of $F_5{:}C_{z_2}^{y=}$, we next 
  enter the call port  of $F_6{:}D$ with $\nu(z_2)=\nu(y)=1-\beta_d$ and 
  $\nu(z_1)=1-\beta$. $D$ is called by passing 
  $y$ by reference, and all others by value. 
  A non-deterministic amount of time $t_2$ is elapsed in $D$.  
  At the return port of $F_6{:}D$, we get $\nu(z_1)=1-\beta$, 
  $\nu(z_2)=1-\beta_d$, and $\nu(y)=1-\beta_d+t_2$. 
  
  \item At the call port of $F_7{:}Chk_2^y$, we have the same values, since 
  $b=0$ has to be satisfied at the exit node $ex_7$ of $Chk_2^y$. 
  That is, at the call port of $F_7{:}Chk_2^y$, we have 
  $\nu(z_1)=1-\beta$, 
  $\nu(z_2)=1-\beta_d$, and $\nu(y)=1-\beta_d+t_2$.
  $F_7$ calls $Chk_2^y$, and 
  passes all clocks by value. 
  $Chk_2^y$ checks that $t_2=\frac{\beta_d}{2}$. 
  
  \item At the entry port $en_7$ of $Chk_2^y$, no time elapses.
  $Chk_2^y$ sequentially calls $M$ twice, each time passing $z_2$ by reference, and all others by value.
 In  the first invocation of $M$, we want $y$ to reach 1; thus a time $\beta_d-t_2$  
  is spent at $en_6$. This makes $z_2=1-\beta_d+\beta_d-t_2=1-t_2$.
  After the second invocation, we obtain $z_2=1+\beta_d-2t_2$ 
  at the return port of $F_9{:}M$. No time can elapse at the return port 
  of $F_9{:}M$; 
  for $z_2$ to be 1, we need $t_2=\frac{\beta_d}{2}$. 
\item No time elapses in the return port of $F_7{:}Chk_2^y$, and we are at the exit node $ex_2$ 
of $Up_2^y$.  
\item Thus, at the exit node of $Up_2^y$, we have 
$\nu(z_1)=1-\beta$, $\nu(z_2)=1-\beta_d$ and $\nu(y)=1-\beta_d+t_2=1-\frac{\beta_d}{2}$.
\item The time elapsed in $Up_2^y$ is 
the sum of $t_1, t_2$ and the times elapsed in $C_{z_2}^{y=}$ and 
$Chk_2^y$. That is, $(\beta - \beta_d)+\frac{\beta_d}{2}+\beta_d+2(\beta_d-t_2)$ 
    =$\beta+\frac{3\beta_d}{2} \leq \frac{5\beta}{2}$ since $\beta_d \leq \beta$.
 \end{enumerate}
At the return port of $F_1:Up_2^y$, we thus have 
$\nu(Z)=1-\beta$ ($\nu(z)$ restored to $1-\beta$ as it was passed by value to $Up^y_2$), $\nu(x)=1-\beta_c$, and $\nu(y)=1-\frac{\beta_d}{2}$. No time elapses 
here, and we are at the call port of $F_2:Up_4^x$. 
The component $\chkAn{x}{4}$ is similar to $\chkAn{y}{2}$. It has 4 calls to $M$ inside it, 
 each passing respectively, $z_2$ by reference to $M$ and $x$ by value. 
An analysis similar to the above gives that 
the total time elapsed in $Up_4^x$ is $\leq \frac{11 \beta}{4}$, and at the return port 
of $F_2:Up_4^x$, we get $\nu(x)=1-\frac{\beta_c}{4}$, $\nu(y)=1-\frac{\beta_d}{2}$ and 
$\nu(Z)=1-\beta$. This is followed by entering $F_3:Up_2^Z$, 
with these values. 
At the return port of  $F_3:Up_2^Z$, we obtain 
$\nu(x)=1-\frac{\beta_c}{4}$, $\nu(y)=1-\frac{\beta_d}{2}$ and 
$\nu(Z)=1-\frac{\beta}{2}$, with the total time elapsed 
in $Up_2^Z$ being $\leq \frac{5\beta}{2}$. 

From the explanations above, the following propositions 
can be proved. The same arguments given above will apply to prove this.

\begin{proposition}
 For any box $B$ and context $\langle \kappa \rangle$, and $\nu(Z)=1-\beta$, 
 we have that $(\langle \kappa \rangle, (B,en), (\nu(x), \nu(y),1-\beta, \nu(b))) 
 \stackrel [\upAn{Z}{2}]{*}{\longrightarrow} (\langle \kappa \rangle, (B,ex), (\nu(x), \nu(y),1-\frac{\beta}{2}, \nu(b)))$.
\end{proposition}

\begin{proposition}
 For any box $B$ and context $\langle \kappa \rangle$, and $\nu(x)=1-\beta_c$, 
 we have that $(\langle \kappa \rangle, (B,en), (1-\beta_c, \nu(y), \nu(Z), \nu(b))) 
 \stackrel [\upAn{x}{4}]{*}{\longrightarrow} (\langle \kappa \rangle, (B,ex), 1-\frac{\beta_c}{4}, \nu(y), \nu(Z), \nu(b)))$.
\end{proposition}

\begin{proposition}
 For any box $B$ and context $\langle \kappa \rangle$, and $\nu(y)=1-\beta_d$, 
 we have that $(\langle \kappa \rangle, (B,en), (\nu(x),1-\beta_d, \nu(Z),\nu(b))) 
 \stackrel [\upAn{y}{2}]{*}{\longrightarrow} (\langle \kappa \rangle, (B,ex), (\nu(x),1-\frac{\beta_d}{2}, \nu(Z), \nu(b)))$.
\end{proposition}

\noindent{\bf Simulate Decrement Instruction}: 
Assume that the $(k+1)$st instruction is decrementing counter $c$. Then 
we construct the main component $Dec~c$
 similar to the component $Inc~c$ above. The main change is the following:
 \begin{itemize}
 \item The main component $Dec~c$ will have the subcomponents 
   $Up_2^y$ and  $Up_2^Z$ lined up sequentially. 
   There is no need for any $Up_n^x$ subcomponent here, since 
   the value of $x$ stays unchanged on decrementing $c$. 
   Also, the subcomponents $Up_2^y$ and  $Up_2^Z$ do not alter 
   the value of $x$ : 
   the functionality 
   $Up_2^Z$ and  $Up_2^y$ are the same as the one in $Inc~c$.
  The total time spent in $Dec~c$ is also, less than $8 \beta$.
  \end{itemize}

\noindent{\bf Zero Check Instruction}:
The main component for Zero Check  follows the same pattern as $\incc$.
The main change is the following:
 \begin{itemize}
 \item The main component $ZeroCheck$ will have the subcomponents 
   $Up_2^y$ and $Up_2^x$ and $Up_2^Z$ lined up sequentially. The functionality 
   $Up_2^Z,Up_2^x$ and  $Up_2^y$ are the same as the one in $Inc~c$. 
   After these three, we invoke a subcomponent $ZC$.
   $ZC$ is called by passing all clocks by value. 
         At the entry node $en$ of 
   $ZC$, we have two transitions, one on $z_1=1 \wedge x=1$ leading to an exit node $ex$, and another 
   one on $z_1= 1\wedge x \neq 1$ leading to $ex'$.
   Recall that $z_1=\frac{1}{2^k}$. Thus, for $z_1$  to reach 1, 
     a time elapse $\frac{\beta}{2}=\frac{1}{2^{k+1}}$ is needed. If this also makes makes $x=1$, then 
   we know that $x$ on entry was $1-\frac{1}{2^{c+k+1}}= 1-\frac{1}{2^{k+1}}$ 
   impying that $c=0$. Likewise, if $z_1$ attains 1, but $x$ does not, then $c \neq 0$.  
 Since all clocks are passed by value, at the return port of $ZC$ within the main component
 $ZeroCheck$, we regain back the clock values obtained 
 after going through $Up_2^Z,Up_2^x$ and  $Up_2^y$: that is, 
 $\nu(Z)=1-\frac{\beta}{2}$, $\nu(x)=1-\frac{\beta_c}{2}$ and $\nu(y)=1-\frac{\beta_d}{2}$. 
  The time elapsed in $ZC$ is $\frac{\beta}{2}$. The times elapsed in   
     $Up_2^Z$ and $Up_2^x$ and $Up_2^y$ are same as calculated 
     in the case of Increment $c$. Thus, the total time elapsed here is $< 8\beta+\beta=9\beta$.
           \end{itemize}

 We conclude by calculating the total time elapsed during the entire simulation. 
 We have established so far that for the $(k+1)$th instruction, the time elapsed is no more than $9 \beta$, 
 for $\beta=\frac{1}{2^k}$.  For the first instruction, the time elapsed is at most $9 $, for the second instruction it is 
 $\frac{9}{2}$, for the third it is $\frac{9}{2^2}$ and so on. 
\begin{equation}
  \text{Total~time~duration} = \left \{
 \begin{array}{l}
 9 (1 + \frac{1}{2} +\frac{1}{4} + \frac{1}{8} + \frac{1}{16} + \cdots)\\
  =9 ( 1 + (<1))\\
  < 18
 \end{array}
 \right.
\end{equation}

Note that the components for incrementing, decrementing and zero check of counter $d$ 
can be obtained in a manner similar to the above.

The proof that we reach the vertex $Halt$ of the RTA iff the two counter machine halts 
follows: Clearly, the exit node of each main component is reached iff the corresponding instruction is simulated correctly. Thus, if the counter machine halts, we will indeed reach the exit node of the main component corresponding 
to the last instruction. However, if the machine does not halt, then 
we keep going between the various main components simulating each instruction, and never reach $Halt$. 
 \qed
\end{proof}

\subsection{Glitchfree RHA with 14 stopwatches}
\begin{lemma}
The time bounded reachability problem is undecidable for recursive hybrid automata 
  with at least 14 stopwatches.
\end{lemma}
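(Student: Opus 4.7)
The plan is to combine the time-bounded $5$-clock RTA construction of the previous subsection with the glitch-free encoding techniques used in Lemma~\ref{app:3sw} for the $3$-stopwatch RHA. I will keep the same Fischer-style encoding: after the $(k{+}1)$-st instruction the counter values $c,d$ are represented by variables $x = 1 - 1/2^{c+k}$, $y = 1 - 1/2^{d+k}$, with auxiliary variables $z_1, z_2 = 1 - 1/2^k$ and an urgency variable $b$, so that each step consumes at most $9 \beta$ with $\beta = 1/2^k$ and the geometric series still sums to less than $18$ time units. What must change is the \emph{box-calling discipline} of every main component: under glitch-freeness, each box invocation must pass either all variables by value or none, whereas the $5$-clock construction relies heavily on boxes such as $F_4{:}D$ that pass $\overline{\{z_2\}}$ (i.e., only $z_2$ by reference) in order to selectively update one variable while freezing the others.

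The central gadget will be a \emph{shadow stopwatch pair} $(v, v')$ for each of the $5$ computational variables $v \in \{x, y, z_1, z_2, b\}$. At any given time one of the pair runs at rate $1$ and the other is parked at rate $0$; when a recursive call must update $v$ but preserve everything else, the caller invokes a pass-by-value box in which the shadow $v'$ plays the role of the ``growing'' copy while the real $v$ is restored on return. A short, bounded-time \emph{swap routine} at the return port reads the shadow's final value into $v$, using a comparison against $z_1$ or $z_2$ analogous to the $C_{z_2}^{a=}$ and $\mathit{Chk}_n^a$ subroutines of the previous subsection. Counting: the $5$ computational variables doubled into shadow pairs give $10$ stopwatches; the doubling/halving boxes $\mathit{Up}_n^a$ need two extra stopwatches to record the non-deterministic delay $t$ without perturbing the shadow swap; and the zero-check subcomponent needs two further stopwatches to gate the comparison $z_1 = x$ (resp.\ $z_1 = y$) under the rate-$\{0,1\}$ restriction, totaling $14$.

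The main obstacle will be designing the swap routine so that (i) after a glitch-free pass-by-value call the caller can provably reconstruct the intended post-call value of the updated variable from the shadow, and (ii) the reconstruction itself fits within the $O(\beta)$ per-instruction time budget, so that the overall $< 18$ bound from the $5$-clock proof is preserved. Once the gadget is verified, I would rewrite the correctness propositions of Lemma~\ref{app:3sw} and the $5$-clock lemma (the $\mathit{Up}_2^y$, $\mathit{Up}_4^x$, $\mathit{Up}_2^Z$, and $\mathit{ZC}$ propositions) with the shadow pairs threaded through every configuration, and chain them exactly as before: the exit node $\mathit{Halt}$ is reachable in $\sem{\Hh}$ within $18$ time units iff the two-counter machine halts, since each main component simulates its instruction faithfully iff the non-deterministically chosen delays match the unique values that realign the shadow stopwatches with their primaries.
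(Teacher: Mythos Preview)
Your central mechanism has a gap. You write that ``the caller invokes a pass-by-value box in which the shadow $v'$ plays the role of the growing copy while the real $v$ is restored on return,'' and then a swap routine at the return port ``reads the shadow's final value into $v$.'' But under glitch-free pass-by-value, \emph{every} variable---including the shadow $v'$---is restored to its pre-call value on return. Nothing computed inside the box survives, so at the return port there is no ``shadow's final value'' to read; it has been overwritten by the restore. The shadow pair idea therefore cannot transport information out of a pass-by-value call, and the swap routine you defer as ``the main obstacle'' is in fact impossible as stated, not merely delicate.

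The paper's proof takes a different route that avoids this trap entirely. The key observation is that with \emph{stopwatches} you do not need selective pass-by-reference to update one variable while freezing the others: rate control already does this. A box call that in the $5$-clock proof passed only $z_2$ by reference becomes, in the stopwatch setting, an ordinary location in the caller where only one stopwatch has rate $1$ and the rest have rate $0$. What remains is the \emph{check} that the non-deterministic delay had the right value, and here the paper uses a genuinely different gadget: it keeps several copies of each logical quantity ($X=\{x_1,\dots,x_5\}$, $Y=\{y_1,\dots,y_5\}$, $Z=\{z_1,z_2,z_3\}$, plus $b$, totalling $14$) and lets different subsets tick in consecutive locations so that the exit constraint $a_1=a_2=\cdots=1$ encodes a small linear system whose unique solution forces the delay to be exactly $\beta/n$. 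All actual box calls are then uniformly pass-by-value (they are pure checks whose internal state is meant to be discarded), so glitch-freeness is trivially respected, and the per-instruction time budget and the $<18$ bound carry over unchanged from the $5$-clock argument.
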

\begin{proof}
We prove that  the problem of reaching a chosen vertex in an RHA  
within 18 units of total elapsed time is undecidable. 
In order to get the undecidability result, we use a reduction from the halting problem 
for two counter machines. Our reduction uses an RHA with atleast 14 stopwatches.

We specify a main component for each instruction of the two counter machine. 
We maintain 3 sets of stopwatches. The first set $X=\set{x_1,\cdots x_5}$ encodes correctly the current value of counter $c$;
the second set $Y=\set{y_1,\cdots y_5}$ encodes correctly the current value of counter $d$;
and third set $Z=\set{z_1,z_2,z_3}$ encodes the end of $(k)$th instruction.

An extra stopwatch $b$ is used to enforce urgency in some locations.
$b$ is zero at the entry nodes of all the main components.
Let $\variables$ denote the set of all 14 stopwatches.  

To be precise, on entry into a main component 
simulating the $(k+1)$th instruction, 
we have the values of $z_1,z_2,z_3$ as $1-\frac{1}{2^k}$,
the values of $x_1, \dots, x_5$ as 
$1-\frac{1}{2^{c+k}}$, and the values of $y_1, \dots, y_5$ as
$1-\frac{1}{2^{d+k}}$, where $c,d$ are the current values of 
the counters after simulating the first $k$ instructions. 
We will denote this by saying that at the beginning of the $(k+1)$th instruction, 
we have $\nu(Z)=1-\frac{1}{2^k}$, $\nu(X)=1-\frac{1}{2^{c+k}}$ and 
$\nu(Y)=1-\frac{1}{2^{d+k}}$. If the $(k+1)$th instruction $\ell_{k+1}$ is an 
increment counter $c$ instruction, then after the simulation 
of $\ell_{k+1}$, we need $\nu(Z)=1-\frac{1}{2^{k+1}}$,
$\nu(X)=1-\frac{1}{2^{c+k+2}}$ and 
$\nu(Y)=1-\frac{1}{2^{d+k+1}}$. Similarly, if $l_{k+1}$ 
is a decrement $c$ instruction, then 
after the simulation 
of $\ell_{k+1}$, we need $\nu(Z)=1-\frac{1}{2^{k+1}}$,
$\nu(X)=1-\frac{1}{2^{c+k}}$ and 
$\nu(Y)=1-\frac{1}{2^{d+k+1}}$.
Likewise, if $\ell_{k+1}$ 
is a zero check instruction, then 
after the simulation 
of $\ell_{k+1}$, we need $\nu(Z)=1-\frac{1}{2^{k+1}}$,
$\nu(X)=1-\frac{1}{2^{c+k+1}}$ and 
$\nu(Y)=1-\frac{1}{2^{d+k+1}}$.

\noindent{\bf Simulate Increment Instruction}: Let us discuss the case of simulating an increment 
instruction for counter $c$. Assume that this is the $(k+1)$th instruction.  Figure \ref{fig_undec_14sw_GF_TB_v3} gives the figure 
for incrementing couner $c$.  At the entry node $en_1$ of 
the component $Inc~c$, we have $\nu(X)=1-\frac{1}{2^{c+k}}$, $\nu(Y)=1-\frac{1}{2^{d+k}}$ and $\nu(Z)=
1-\frac{1}{2^k}$, and $\nu(b)=0$.

The component $Inc~c$ has three subcomponents sequentially lined up one after the other: 
Let $\beta=\frac{1}{2^k}, \beta_c=\frac{1}{2^{c+k}}$, and $\beta_d=\frac{1}{2^{d+k}}$.
\begin{enumerate}
\item The first subcomponent $Up_2^Z$ (component $UP_n^A$ in the figure \ref{fig_undec_14sw_GF_TB_v3} with $A$ as $Z$ and $n=2$)  updates the value of $Z$. 
If $Up_2^Z$ is entered with $\nu(Z)=1-\beta$, then on exit, 
we have $\nu(Z)=1-\frac{\beta}{2}$. The values of $X,Y$ are unchanged as their rate of growth is 0 throughout the component $UP_2^Z$ and they are always passed by value to the subcomponents. 
Also, the total time elapsed in $Up_2^Z$ is $\leq \frac{5\beta}{2}$. 
\item The next subcomponent is $Up_4^X$.  
If $Up_4^X$ is entered with $\nu(X)=1-\beta_c$, then on exit, 
we have $\nu(X)=1-\frac{\beta_c}{4}$. The values of $Z,Y$ are unchanged. Also, the total time elapsed 
in $Up_4^X$ is $\leq \frac{11\beta}{4}$.
\item The next subcomponent is $Up_2^Y$.  
If $Up_2^Y$ is entered with $\nu(Y)=1-\beta_d$, then on exit, 
we have $\nu(Y)=1-\frac{\beta_d}{2}$. The values of $X,Z$ are unchanged. Also, the total time elapsed 
in $Up_2^Y$ is $\leq \frac{5\beta}{2}$.
\item Thus, at the end of the $Inc~c$, we obtain $\nu(Z)=1-\frac{1}{2^{k+1}}$, 
$\nu(X)= 1-\frac{1}{2^{c+k+2}}$, $\nu(Y)=1-\frac{1}{2^{d+k+1}}$. Also, the total time elapsed 
in $Inc~c$ is $\leq [\frac{5}{2}+\frac{11}{4}+\frac{5}{2}]\beta < 8 \beta$.
\end{enumerate}
To avoid clutter, We have drawn $Up_2^Z$ like a box inside $Inc~c$; actually, think 
of it as the sequence of transitions 
from $en_2$ to $ex_2$, with 2 boxes called in between. The same holds 
for ``boxes'' $Up_4^X$ and $Up_2^Y$.

Next, we describe the structure of the components $Up_n^A$. At the entry node $en_2$ of 
 $Up_n^A$, we have the invariant $b=0$. Thus, no time is elapsed 
 in the entry node $en_1$ of $Inc~c$ also. $Up_n^A$ is made up of subcomponents 
 $Chk_n^A$ and $Chk^{=}$.
 Let us discuss the details of $Up_2^Z$, the others have similar functionality.
 \begin{enumerate}
 \item On entry into the location $m_1$, we have $\nu(Z)=1-\beta$, $\nu(b)=0$,
 $\nu(X)=1-\beta_c$, $\nu(Y)=1-\beta_d$. 
 In $m_1$ only stopwatches $z_1,z_3$  are grow. A non-deterministic amount of time $t_1$ 
 elapses here. Thus when leaving $m_1$, we 
 have stopwatches $z_1,z_3$ added by $t_1$. 
 
 \begin{figure}[t]

{\small
\begin{tikzpicture}[node distance=4cm] 
  
  \draw(0, 0) rectangle (7,1.5);
  \draw (6.5, 1.7) node {$\incc$};

  \node[loc](n61) at (0, 0.8) {$en_{1}$};  
  \node[inv] () [below of = n61,node distance = 5mm] {$\figinv{b{=}0}$};    
  
  \node[boxloc](f1) at (1.5, 0.8) {$\begin{array}{c} F_1{:}\upAn{Z}{2}\\ \end{array}$};
  \node[port](f1n1) [left of=f1,node distance = 7mm] {}; 
  \node[port](f1x1) [right of=f1,node distance = 7mm] {};

  \node[boxloc](f2) at (3.5, 0.8) {$\begin{array}{c} F_2{:}\upAn{X}{4}\\ \end{array}$};
  \node[port](f2n1) [left of=f2,node distance = 7mm] {}; 
  \node[port](f2x1) [right of=f2,node distance = 7mm] {};

  \node[boxloc](f3) at (5.5, 0.8) {$\begin{array}{c} F_3{:}\upAn{Y}{2}\\ \end{array}$};
  \node[port](f3n1) [left of=f3,node distance = 7mm] {}; 
  \node[port](f3x1) [right of=f3,node distance = 7mm] {};

  \node[loc](x61) at (7, 0.8) {$ex_{1}$};
  \node[inv] () [below of = x61,node distance = 5mm] {$\figinv{b{=}0}$};  

   \draw[trans] (n61)-- (f1n1) {};
   \draw[trans] (f1x1)--(f2n1) {};
   \draw[trans] (f2x1)--(f3n1) {};
   \draw[trans] (f3x1)-- (x61)	;
   
%
%
%
%
%

 
    \draw(8, 0) rectangle (10.5,1.5);
   \draw (10, 1.7) node {$\chkEQ{a_1}{a_2}$};

   \node[loc](n64) at (8, 0.8) {$\chkEQEn$};  
   \node[rate]() [below of =n64,node distance=5mm] {$\figrate{a_1,a_2}$};
   \node[rate]() [above of =n64,node distance=5mm] {$\figrate{\neg b}$};
   \node[loc](x64) at (10.5, 0.8) {$\chkEQEx$};
     
   \draw[trans] (n64)--(x64) node [midway, above]{$a_1{=}1$};
   \draw[trans] (n64)--(x64) node [midway, below]{$a_2{=}1$};
    
   

  \draw(0, -2) rectangle (8.1,-0.5);
  \draw (6.1, -0.3) node {$\upAn{A}{n} {:} A \in \set{X,Y,Z}; n \in \set{2,4}$};

  \node[loc](n62) at (0, -1.1) {$\upAnEn$}; 
  \node[inv] () [below of = n62,node distance = 5mm] {$\figinv{b{=}0}$};

  \node[loc](m1) at (1.3, -1.1) {$m_1$};
  \node[rate] () [below of = m1, node distance = 5mm] {$\figrate{\varbar{A{-}\set{a_2}}}$};
  \node[rate] () [above of = m1, node distance = 5mm] {$\neg b$};
  
  \node[boxloc](f5) at (3.1, -1.1) {$\begin{array}{c} F_5{:}\chkAn{A}{n}\end{array}$};
  \node () [below of = f5, node distance = 6mm] {$(\variables)$};
  \node[port](f5n1) [left of=f5,node distance = 7.5mm] {}; 
  \node[port](f5x1) [right of=f5,node distance = 7.5mm] {};

  \node[loc](m2) at (4.9, -1.1) {$m_2$};
  \node[rate] () [below of = m2, node distance = 5mm] {$\figrate{a_2}$};
  \node[rate] () [above of = m2, node distance = 5mm] {$\neg b$};

  \node[boxloc](f7) at (6.6, -1.1) {$\begin{array}{c} F_7{:}\chkEQ{a_1}{a_2}\end{array}$};
  \node () [below of = f7, node distance = 6mm] {$(\variables)$};
  \node[port](f7n1) [left of=f7,node distance = 7.5mm] {}; 
  \node[port](f7x1) [right of=f7,node distance = 7.5mm] {};

  \node[loc](x62) at (8.1, -1.1) {$\upAnEx$};
  \node[inv] () [below of = x62,node distance = 5mm] {$\figinv{b{=}0}$};

  \draw[trans] (n62)--(m1);
  \draw[trans] (m1)--(f5n1);
  \draw[trans] (f5x1)--(m2);
  \draw[trans] (m2)--(f7n1);
  \draw[trans] (f7x1) -- (x62);

   \draw(0, -4) rectangle (6,-2.5);
   \draw (5.7, -2.3) node {$\chkAn{A}{2}$};

   \node[loc](n67) at (0, -3.2) {$\chkAtwoEn$};  
   \node[inv] () [below of = n67,node distance = 5mm] {$\figinv{b{=}0}$};

  \node[loc](f8) at (1.5, -3.2) {$n_1$};
   \node[rate] () [below of =f8,node distance=5mm]{$\figrate{a_1,a_2}$};
   \node[rate] () [above of =f8,node distance=5mm]{$\figrate{\neg b}$};

  \node[loc](f9) at (3.5, -3.2) {$n_2$};
   \node[rate] () [below of =f9,node distance=5mm]{$\figrate{a_3,a_2}$};
   \node[rate] () [above of =f9,node distance=5mm]{$\figrate{\neg b}$};

   \node[loc](x67) at (6, -3.2) {$\chkAtwoEx$};
   \node[inv] () [below of = x67,node distance = 5mm] {$\figinv{b{=}0}$};

   \draw[trans] (n67)--(f8);
   \draw[trans] (f8) -- (f9);  
   \draw[trans] (f9) -- (x67) node[midway,below]{$ a_3{=}1$} 
    node [midway, above]{$\begin{array}{c} a_1 {=} a_2 {=}1\end{array}$};
    
   

  \draw(7.5, -4) rectangle (10.5,-2.5);
  \draw (9.5, -2.3) node {$ZC {:} c=0?$};

  \node[loc](n65) at (7.5, -3.2) {$en_{5}$};  
   \node[rate] () [below of =n65,node distance=5mm]{$\figrate{z_1,x_1}$};
   \node[rate] () [above of =n65,node distance=5mm]{$\figrate{\neg b}$};
  
  \node[loc](x65yes) at (10.5, -2.8) {$ex_{5}$};
  \node[loc](x65no) at (10.5, -3.6) {$ex_{5}'$};

   \draw[trans] (n65)--(x65yes) node [rotate=11,midway, above]{$z_1=1 \wedge x_1=1$};
   \draw[trans] (n65)--(x65no) node [rotate=-11,midway, below]{$z_1=1 \wedge x_1 \not =1$};

\end{tikzpicture}
}
 \caption{Time bounded reachability in 14 stopwatch RSA: Increment $c$. Note that the variables which tick in a location are indicated below it. $b$ ticks everywhere except in locations where it is specified as $\neg b$. Also $\varbar{S}$ denoted stopwatches $\variables-S$. 
 }
 \label{fig_undec_14sw_GF_TB_v3}
\end{figure}
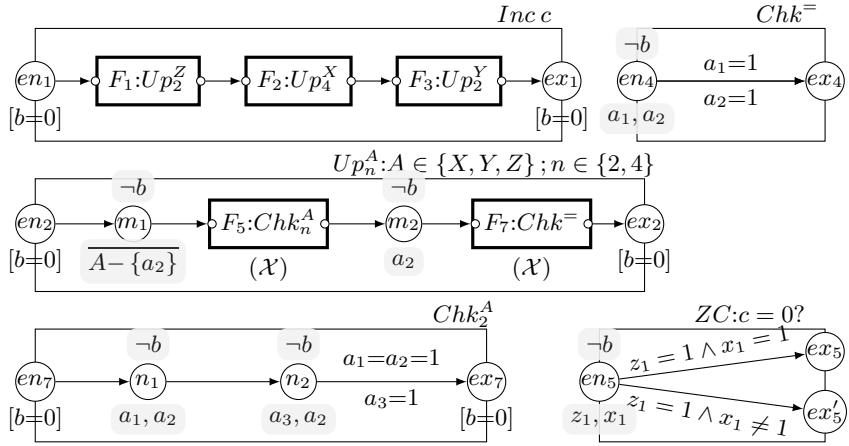

 \item We are then at the entry node of the subcomponent $F_5{:}Chk_2^Z$ with values
 $\nu(z_2)=1-\beta$,  and $\nu(z_i)=1-\beta+t_1$ for $i=1,3$, 
 $\nu(X)=1-\beta_c$, $\nu(Y)=1-\beta_d$ and $\nu(b)=0$.
 $Chk_2^Z$ is called by passing all stopwatches by value. 
 The subcomponent $Chk_2^Z$ ensures that $t_1=\frac{\beta}{2}$.
 
 \item To ensure $t_1=\frac{\beta}{2}$, at the entry node $en_7$ of $Chk_2^Z$, no time can elapse.
 If $t_2$ and $t_3$ are times elapsed in $n_1$ and $n_2$ then, upon reaching exit node $en_7$, we have $z_1=1-\beta+t_1+t_2$, $z_2=1-\beta+t_2+t_3$ and $z_3=1-\beta+t_1+t_3$. Additionally, $z_1=z_2=z_3=1$ implies $t_1+t_2=\beta=t_2+t_3=t_1+t_3$. 
  Thus, we get $t_1=t_2=t_3=\frac{\beta}{2}$. 
  Thus, the total time spent in $Chk_2^Z$ is $t_2+t_3=\beta$.
 
 \item At the return port of $F_5{:}Chk_2^Z$, we 
    restore all values to what they were, at 
    the call port of $F_5{:}Chk_2^Z$. That is, $\nu(z_2)=1-\beta$,  and $\nu(z_i)=1-\beta+t_1$ for $i=1,3$, 
    with the guarantee that $t_1=\frac{\beta}{2}$. 
    
    A time $t_4$ is elapsed in location $m_2$ affecting only $z_2$ to become $z_2=1-\beta+t_4$.

\item Finally, we call the subcomponent $Chk^{=}$ with $z_1=1-\frac{\beta}{2}$ and $z_2=1-\beta+t_4$.  All stopwatches are passed by value.
$Chk^{=}$ ensures that $z_1=z_2$; that is, $t_4=\frac{\beta}{2}$. 
A time $t_5=\frac{\beta}{2}$ is spent at the entry node $en_4$ of $Chk^{=}$ to ensure this. 
Thus, 
at the return port of $F_7{:}Chk^{=}$, we have $z_1=z_2=z_3=1-\frac{\beta}{2}$, and the rest of the stopwatches unchanged.
No time can elapse at the exit node $ex_2$ of $Up_2^Z$. Thus, at the return port of $F_1:Up_2^Z$, we get $\nu(Z)=1-\frac{1}{2^{k+1}}$.   
\item The total time elapsed in $F_1:Up_2^Z$ is $t_1+t_2+t_3+t_4+t_5=\frac{\beta}{2}+\beta+\frac{\beta}{2}+\frac{\beta}{2}=\frac{5\beta}{2}$.
 \end{enumerate}
At the return port of $F_2:Up_2^Z$, we thus have 
$\nu(Z)=1-\frac{\beta}{2}$, $\nu(X)=1-\beta_c$, and $\nu(Y)=1-\beta_d$. No time elapses 
here, and we are at the call port of $F_2:Up_4^X$. 
The component $\chkAn{X}{4}$ is similar to $\chkAn{Z}{2}$. It has 4 locations $h_1,h_3,h_4,h_5$ inside it, each $h_i$ has stopwatches $x_2$ and $x_i$ ticking. An analysis similar to the above gives that 
the total time elapsed in $Up_4^X$ is $\frac{11 \beta}{4}$, and at the return port 
of $F_2:Up_4^X$, we get $\nu(X)=1-\frac{\beta_c}{4}$, $\nu(Y)=1-\beta_d$ and 
$\nu(Z)=1-\frac{\beta}{2}$. This is followed by entering $F_3:Up_2^Y$, 
with these values. 
At the return port of  $F_3:Up_2^Y$, we obtain 
$\nu(X)=1-\frac{\beta_c}{4}$, $\nu(Y)=1-\frac{\beta_d}{2}$ and 
$\nu(Z)=1-\frac{\beta}{2}$, with the total time elapsed 
in $Up_2^Y$ being $\frac{5\beta_d}{4}$. 

From the explanations above, the following propositions 
can be proved. The same arguments given above will apply to prove this.

\begin{proposition}
 For any box $B$ and context $\langle \kappa \rangle$, and $\nu(Z)=1-\beta$, 
 we have that $(\langle \kappa \rangle, (B,en), (\nu(X), \nu(Y),1-\beta, \nu(b))) 
 \stackrel [\upAn{Z}{2}]{*}{\longrightarrow} (\langle \kappa \rangle, (B,ex), (\nu(X), \nu(Y),1-\frac{\beta}{2}, \nu(b)))$.
\end{proposition}

\begin{proposition}
 For any box $B$ and context $\langle \kappa \rangle$, and $\nu(X)=1-\beta_c$, 
 we have that $(\langle \kappa \rangle, (B,en), (1-\beta_c, \nu(Y), \nu(Z), \nu(b))) 
 \stackrel [\upAn{X}{4}]{*}{\longrightarrow} (\langle \kappa \rangle, (B,ex), (1-\frac{\beta_c}{4}, \nu(Y), \nu(Z), \nu(b)))$.
\end{proposition}

\begin{proposition}
 For any box $B$ and context $\langle \kappa \rangle$, and $\nu(Y)=1-\beta_d$, 
 we have that $(\langle \kappa \rangle, (B,en), (\nu(X),1-\beta_d, \nu(Z),\nu(b))) 
 \stackrel [\upAn{Y}{2}]{*}{\longrightarrow} (\langle \kappa \rangle, (B,ex), (\nu(X),1-\frac{\beta_d}{2}, \nu(Y), \nu(b)))$.
\end{proposition}

\noindent{\bf Simulate Decrement Instruction}: 
Assume that the $(k+1)$st instruction is decrementing counter $c$. Then 
we construct the main component $Dec~c$
 similar to the component $Inc~c$ above. The main change is the following:
 \begin{itemize}
 \item The main component $Dec~c$ will have the subcomponents 
   $Up_2^Z$ and  $Up_2^Y$ lined up sequentially. The functionality 
   $Up_2^Z$ and  $Up_2^Y$ are the same as the one in $Inc~c$.
  The total time spent in $Dec~c$ is also, less than $8 \beta$.
  \end{itemize}

\noindent{\bf Zero Check Instruction}:
The main component for Zero Check  follows the same pattern as $\incc$.
The main change is the following:
 \begin{itemize}
 \item The main component $ZeroCheck$ will have the subcomponents 
   $Up_2^Z$ and $Up_2^X$ and $Up_2^Y$ lined up sequentially. The functionality 
   $Up_2^Z,Up_2^X$ and  $Up_2^Y$ are the same as the one in $Inc~c$. 
   After these three, we invoke the subcomponent $ZC:c=0?$ shown in Figure \ref{fig_undec_14sw_GF_TB_v3}. 
   $ZC:c=0?$ is called by passing all stopwatches by value. At the entry node $en_5$ of 
   $ZC$, $z_1$ and $x_1$ are ticking. At $en_5$ if a time elapse $\frac{\beta}{2}=\frac{1}{2^{k+1}}$ makes $z_1=x_1=1$, then 
   we know that $x_1$ on entry was $1-\frac{1}{2^{c+k+1}}= 1-\frac{1}{2^{k+1}}$ 
   impying that $c=0$. Likewise, if $z_1$ attains 1, but $x_1$ does not, then $c \neq 0$.  
 Since all stopwatches are passed by value, at the return port of $ZC$ within the main component
 $ZeroCheck$, we regain back the stopwatch values obtained 
 after going through $Up_2^Z,Up_2^X$ and  $Up_2^Y$: that is, 
 $\nu(Z)=1-\frac{\beta}{2}$, $\nu(X)=1-\frac{\beta_c}{2}$ and $\nu(Y)=1-\frac{\beta_d}{2}$. 
  The time elapsed in $ZC$ is $\frac{\beta}{2}$. The times elapsed in   
     $Up_2^Z$ and $Up_2^X$ and $Up_2^Y$ are same as calculated 
     in the case of Increment $c$. Thus, the total time elapsed here is $< 8\beta+\beta=9\beta$.
           \end{itemize}
  
 We conclude by calculating the total time elapsed during the entire simulation. 
 We have established so far that for the $(k+1)$th instruction, the time elapsed is no more than $9 \beta$, 
 for $\beta=\frac{1}{2^k}$.  For the first instruction, the time elapsed is at most $9 $, for the second instruction it is 
 $\frac{9}{2}$, for the third it is $\frac{9}{2^2}$ and so on. 
\begin{equation}
  \text{Total~time~duration} = \left \{
 \begin{array}{l}
 9 (1 + \frac{1}{2} +\frac{1}{4} + \frac{1}{8} + \frac{1}{16} + \cdots)\\
  9 ( 1 + (<1))\\
  < 18~t.u
 \end{array}
 \right.
\end{equation}

The proof that we reach the vertex $Halt$ of the RHA iff the two counter machine halts 
follows: Clearly, the exit node of each main component iff the corresponding instruction is simulated correctly. Thus, if the counter machine halts, we will indeed reach the exit node of the main component corresponding 
to the last instruction. However, if the machine does not halt, then 
we keep going between the various main components simulating each instruction, and never reach $Halt$.

\qed
\end{proof}



\section{Undecidability Resuls with two players}
\label{sec:undec-rhg}
 For the undecidability results for reachability games, we construct a recursive automaton (timed/hybrid)
as per the case, whose main components are the modules for the instructions and 
the counters are encoded in the variables of the automaton.  
In these reductions,  the reachability of the exit node of each component
corresponding to an instruction is linked to a faithful simulation of various
increment, decrement and zero check  instructions of the machine by choosing
appropriate delays to adjust the clocks/variables, to reflect changes  in
counter values. 
We specify a main component for each type instruction of the two counter machine, for example $\Hh_{inc}$ for increment. 
The entry node and exit node of a main component $\Hh_{inc}$ corresponding to an 
instruction [$\ell_i:  c := c +1$;  goto  $\ell_k$] 
are respectively $\ell_i$ and $\ell_k$.  
 Similarly, a main component corresponding to a zero check instruction 
[$l_i$: if $(c >0)$ then goto $\ell_k$]
  else goto $\ell_m$, has a unique entry node  $\ell_i$, and two exit nodes corresponding to $\ell_k$ and 
  $\ell_m$ respectively. 
The various main components corresponding to the various instructions, when connected appropriately, gives the higher level component $\Hh_{M}$ and this completes the RHA $\Hh$. 
The entry node of $\Hh_{M}$ is the entry node of the main component for the
first instruction of $M$ and the exit node is $Halt$. 
\ach simulates the machine while \tort verifies the simulation. Suppose in each main component for each type of instruction correctly \ach simulates the instruction by accurately updating the counters encoded in the variables of $\Hh$. Then, the unique run in $M$ corresponds to an unique run in $\Hh_{M}$. The halting  problem of the two counter machine now boils down to existence of a \ach strategy to ensure the reachability of an exit node $Halt$ (and $\ddot \smile$) in $\Hh_{M}$.

\subsection{Time Bounded Reachability games in Unrestricted RTA with 3 clocks}
\label{undec:rtg_3clk}

\begin{lemma}
\label{3clk}
The time bounded reachability game problem is undecidable for recursive timed automata 
  with at least 3 clocks.
\end{lemma}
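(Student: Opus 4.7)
The plan is to adapt the five-clock single-player reduction from the previous subsection, exploiting \tort's presence to eliminate two clocks. As before, I would reduce from the halting problem for two-counter machines via the encoding that at the start of the $(k{+}1)$-st simulated instruction the clocks $x,y$ satisfy $x = 1 - 1/2^{c+k}$ and $y = 1 - 1/2^{d+k}$, with $c,d$ the current counter values; a third clock $z$ plays the combined role of the four auxiliaries $z_1,z_2,b$ used in the single-player proof, and is $0$ at each main-component entry. The exponentially shrinking per-instruction time budget ($\le 9/2^k$) together with $\sum_k 9/2^k < 18$ keeps the overall time bound of $18$ satisfiable provided the machine halts.

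Next I would construct main components $\mathit{Inc}_c, \mathit{Inc}_d, \mathit{Dec}_c, \mathit{Dec}_d, \mathit{ZC}_c, \mathit{ZC}_d$, mirroring the structure of Lemma above but with \ach controlling the proposed delays and \tort controlling a verifier branch after each proposed update. Concretely, after \ach proposes a delay encoding, e.g., the halving of $1-x$, control passes to a \tort-state offering two choices: (i) continue to the next instruction, or (ii) enter a pass-by-value recursive subcomponent that checks a \emph{single} arithmetic relation witnessing faithfulness of that delay (such as $x=1 \wedge z=1$, or $x=y$). Because \tort may challenge any step, \ach is forced to play faithfully; because each verification branch tests only one relation, one auxiliary clock suffices, so the two helper clocks $z_1,z_2$ of the single-player proof collapse into $z$. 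The urgency clock $b$ becomes unnecessary because in the verifier subcomponents \tort can always punish any non-urgent delay by selecting the appropriate alternative.

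I would then argue correctness. If the two-counter machine halts, \ach has a strategy that simulates the unique computation step-by-step; every \tort-challenge is won by reaching the corresponding verifier exit node and returning via the recursive stack, and the total elapsed time stays below $18$ by the geometric bound above. Conversely, if the machine does not halt, then against any \ach-strategy \tort either stays on the simulation trunk (in which case $Halt$ is never reached) or, at the first step where \ach cheats, enters the verification branch that exposes the cheat, whose exit then becomes unreachable under the clock guards. Pass-by-value of $x,y$ into each verification box preserves the counter-encoding invariant across challenges, while $z$ is freely reused and reset locally.

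The main obstacle is the design of the three-clock verification gadgets: with only a single auxiliary clock $z$, checking relations such as ``proposed delay equals $(1-x)/2$'' without destroying $x$ or $y$ requires careful sequencing of recursive calls. This is handled exactly as in the single-player five-clock proof, by passing $x,y$ by value into the verifier subcomponent so that they are restored on return, and by performing the actual arithmetic check in the callee using $z$ alone together with a short chain of further pass-by-value recursive calls that each test one equality against a clock reaching $1$. Once these gadgets are in place, the correctness argument is the standard one: \ach wins the reachability game to $Halt$ iff the two-counter machine halts.
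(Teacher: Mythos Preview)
There is a genuine gap in your zero-check construction. In the five-clock single-player proof, the clocks $z_1,z_2$ are \emph{not} merely ephemeral auxiliaries: $z_1$ persistently stores $1-1/2^{k}$ across instructions, and the zero test for $c$ is exactly the comparison ``does $x$ equal $z_1$?'' (i.e.\ is $1/2^{c+k}=1/2^{k}$). Your proposal resets $z$ to $0$ at each main-component entry, so no clock carries the value $1-1/2^{k}$. With only $x=1-1/2^{c+k}$ and $y=1-1/2^{d+k}$ you can compare $c$ with $d$ but never with $0$; there is no reference value against which to test. Having \tort\ verify a single arithmetic relation does not help here, because the missing quantity $1/2^{k}$ is simply not encoded anywhere in the three clocks. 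Your plan collapses $z_1,z_2,b$ into one clock, but $z_1$ was load-bearing in a way that $z_2$ and $b$ were not.

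The paper circumvents this by a different encoding: both counters are G\"odel-packed into a single clock $x=1/(2^{k+c}3^{k+d})$, the second clock $y=1/2^{k}$ tracks the instruction count persistently, and the third clock $z$ handles urgency via the invariant $z=0$. Zero-check of $d$ then works by first ``unwinding'' the $k$: repeatedly multiply $y$ by $2$ and $x$ by $6$ (each step verified by \tort) until $y=1$, at which point $x=1/(2^{c}3^{d})$; then repeatedly double $x$ and test whether it hits $2$ exactly. The game structure is essential precisely because this unbounded chain of multiplications cannot be certified by a single deterministic check, only by \tort's option to challenge any one step and reach $\ddot\smile$ if \ach\ was honest. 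Your high-level game idea (\ach\ proposes, \tort\ may challenge) is right, but the clock budget only closes with the single-clock counter encoding, not with the two-clocks-for-two-counters layout you inherited from the five-clock proof.
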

\begin{proof}
We prove that the reachability problem is undecidable for  unrestricted RTA with 3 clocks.  
In order to obtain the undecidability result, we 
use a reduction from the halting problem 
for two counter machines. 
Our reduction uses a RTA with three clocks $x,y,z$.

We specify a main component for each instruction of the two counter machine. On entry into 
a main component for increment/decrement/zero check, 
we have $x=\frac{1}{2^{k+c}3^{k+d}}$, $y=\frac{1}{2^k}$ and $z=0$,
where $c,d$ are the current values of the counters and $k$ is the current instruction. Note that $z$ is used only to enforce urgency in several vertices. 
Given a two counter machine, we build a 3 clock RTA 
whose building blocks are the main components for the instructions.   
The purpose of the components is to simulate faithfully the counter machine 
by choosing appropriate delays  to adjust the variables to reflect changes 
in counter values. 
On entering the entry node $en$ of a main component corresponding to an instruction $l_i$,
we have the configuration $(\langle \epsilon \rangle, en, (\frac{1}{2^{k+c}3^{k+d}},\frac{1}{2^{k}},0))$ 
of the three clock RTA.  

We shall now present the components for increment/decrement and zero check instructions.
In all the components, the variables passed by value are written below the boxes and the invaraints of the 
locations are indicated below them.

\noindent{\bf{Simulate increment instruction}}: Lets consider the increment instruction  
$\ell_i$: $c=c+1$; goto $\ell_k$. The component for this instruction is component $Inc~c$ given in Figure 
\ref{fig_3clk_rtg_inc}. Assume that $x=\frac{1}{2^{k+c}3^{k+d}}$, $y=\frac{1}{2^k}$ and $z=0$ at 
the entry node $en_1$ of the component $Inc~c$. To correctly simulate the increment of counter $c$, the clock values at the exit node $ex_1$ should be $x=\frac{1}{2^{k+c+2}3^{k+d+1}}$, $y=\frac{1}{2^{k+1}}$ and $z=0$. 

Let $\alpha=\frac{1}{2^{k+c}3^{k+d}}$ and $\beta=\frac{1}{2^k}$.  We want  $x = \frac{\alpha}{12}$ and $y = \frac{\beta}{2}$ at $ex_1$. We utilise the component $Div\{a,n\}$ (with $a=x,n=12$ and $a=y,n=2$) to perform these divisions. 
Lets walk through the working of the component $Inc~c$. As seen above, at the entry node $en_1$, we have 
$x=\frac{1}{2^{k+c}3^{k+d}}$, $y=\frac{1}{2^k}$ and $z=0$. 

\begin{enumerate}
\item No time is spent at $en_1$ due to the invariant $z=0$. 
$Div\{y,2\}$ is called, passing $x,z$ by value.  
At the call port  of $A_1:Div\{y,2\}$, we have the same values of $x,y,z$. Let us examine the component $Div\{y,2\}$. 
We instantiate $Div\{a,n\}$ with $a=y,n=2$. Thus, the clock referred to as $b$ 
in $Div\{a,n\}$ is $x$ after the instantiation. 
At the entry node $en_2$ of $Div\{y,2\}$, no time is spent due to the invariant $z=0$; we have 
$a=y=\beta,b=x=\alpha,z=0$.  
Resetting $b=x$, we are at the call port of $A_3:D$. 
$A_3$ is called, passing $a,z$ by value. 
A nondeterministic time $t$ is spent at the entry node $en_3$ of $D$. 
 Thus, at the return port of $A_3$, we have $a=y=\beta,b=x=t,z=0$. 
The return port of $A_3$ is a node belongining to \tort; for \ach to 
reach $\ddot \smile$, $t$ must be $\frac{\beta}{2}$. \tort has 
two choices to make at the return port of $A_3$: he can continue the simulation, 
by resetting $a=y$ and going to the call port of $A_5:D$, 
or he can verify if $t$ is indeed $\frac{\beta}{2}$, by going to the call port of $A_4$. 

\begin{itemize}
\item Assume \tort goes to the call port of $A_4: C^{x=}_{y/2}$ (recall, that by the instantiation, $b=x$, $a=y$ and 
$n=2$). $z$ is passed by value. At the entry node $en_5$ of  
$C^{x=}_{y/2}$, no time elapses due to the invariant $z=0$. 
Thus, we have $x=b=t$, $a=y=\beta,z=0$ at $en_5$. 
The component $A_7:M_x$ is 
invoked, passing $x,z$ by value. At the entry node $en_6$ of $M_b$, a time $1-t$ 
is spent, giving $a=y=\beta+1-t$, $b=x=t$ and $z=0$ at the return port of $A_7$. 
Since $n=2$, one more invocation of $A_7: M_x$ is made, obtaining 
$a=y=\beta+2(1-t)$, $b=x=t$ and $z=0$ at the return port 
of $A_7$ after the second invocation. To reach the 
exit node $ex_5$
of $C^{x=}_{y/2}$, $a$ must be exactly 2, since 
no time can be spent at the return port of $A_7$; this is so since the invariant $z=0$ at the exit node $ex_5$ of 
$C^{x=}_{y/2}$ is satisfied only when no time 
is spent at the return port of $A_7$.  
If $a$ is exactly 2, we have 
 $\beta=2t$. In this case, 
from the return port of $A_4$,  
$\ddot \smile$ can be reached.  
 
 \item Now consider the case that \tort moves ahead from 
   the return port of $A_3$, resetting $a=y$ to the call port of $A_5:D$.
 The values are $a=y=0,b=x=t=\frac{\beta}{2}$ and $z=0$. 
   $A_5:D$ is invoked passing $b=x$ and $z$ by value. 
   A non-deterministic amount of time $t'$ is spent at the entry node $en_3$ of 
   $D$, giving $a=y=t'$, $b=x=\frac{\beta}{2}$ and $z=0$ 
   at the return port of $A_5$.  Again, the return port of $A_5$ is a node 
   belonging to \tort. Here \tort, thus has two choices: he can continue 
   with the simulation going to $ex_2$, or can 
   verify that $t'=\frac{\beta}{2}$ by going to the call port of $A_6:C^{y=}_x$.
   $C^{y=}_x$ is a component that checks if $y$ has ``caught up'' with $x$; that is, 
   whether $t'=t=\frac{\beta}{2}$. At the entry node $en_4$ of $C^{y=}_x$,
   $a$ and $b$ can simultaneeously reach 1 iff $t=t'$; that is, $t'=\frac{\beta}{2}$.
   Then, from the return port of $A_6$, we can reach $\ddot \smile$.   
   
  \item Thus, we reach $ex_2$ with $x=y=\frac{\beta}{2},z=0$. At the return port of 
  $A_1:Div\{y,2\}$, we thus have $x=\alpha, y=\frac{\beta}{2},z=0$.       
\end{itemize}
\item From the return port of $A_1:Div\{y,2\}$, we reach the call port of $A_2:Div\{x,12\}$.
$y,z$ are passed by value. The functioning of $A_2$ is similar to that of $A_1$:
at the return port of $A_1$, we obtain $x=\frac{\alpha}{12}$, $y=\frac{\beta}{2}$ and $z=0$. 
\end{enumerate}

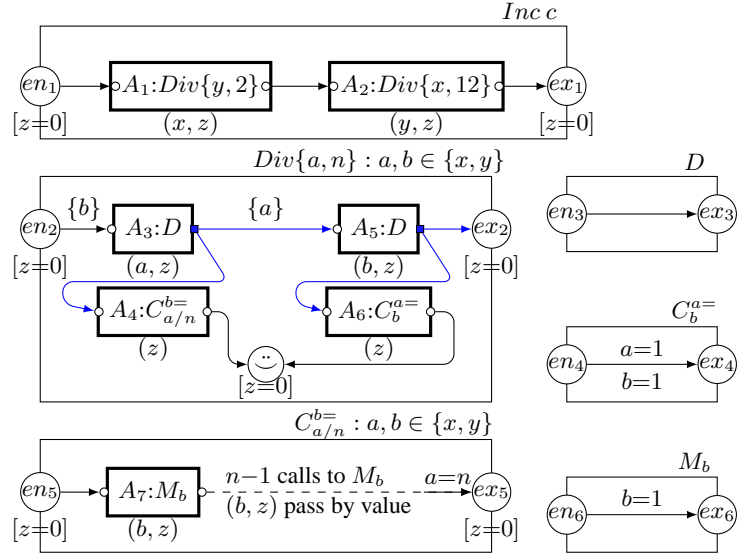
\begin{figure}
\begin{center}
\begin{tikzpicture}[node distance=4cm] 
\draw(0, 0) rectangle (7,1.5);
\draw (6.5, 1.7) node {$\incc$};

  \node[loc](n1) at (0, 0.7) {$en_{1}$};  
  \node[inv]() [below of =n1, node distance=5mm]{$\figinv{z{=}0}$};
  
  \node[boxloc](a1) at (2, 0.7) {$\begin{array}{c} A_1{:}\Div{y}{2}\end{array}$};
  \node[varpass] () [below of =a1, node distance=5mm]{$(x,z)$};
  \node[port](a1n1) [left of=a1,node distance = 10mm] {}; 
  \node[port](a1x1) [right of=a1,node distance = 10mm] {};

  \node[boxloc](a2) at (5, 0.7) {$\begin{array}{c} A_2{:}\Div{x}{12}\end{array}$};
  \node[varpass] () [below of =a2, node distance=5mm]{$(y,z)$};
  \node[port](a2n1) [left of=a2,node distance = 11mm] {}; 
  \node[port](a2x1) [right of=a2,node distance = 11mm] {};
  
  \node[loc](x1) at (7, 0.7) {$ex_{1}$};
  \node[inv]() [below of =x1, node distance=5mm]{$\figinv{z{=}0}$};
  
  \draw[trans] (n1)--(a1n1);
  \draw[trans] (a1x1) -- (a2n1);
  \draw[trans] (a2x1) -- (x1);

  \draw(0, -3.5) rectangle (6,-0.5);
  \draw (4.5, -0.3) node {$\Div{a}{n}:a,b\in\set{x,y}$};

  \node[loc](n2) at (0, -1.2) {$en_{2}$};  
  \node[inv]() [below of =n2,node distance=5mm] {$\figinv{z{=}0}$};  
  
  \node[boxloc](a3) at (1.5, -1.2) {$\begin{array}{c} A_3{:}\Del\end{array}$};
  \node[varpass] () [below of =a3, node distance=5mm]{$(a,z)$};
  \node[port](a3n1) [left of=a3,node distance = 5.5mm] {}; 
   \node[oport](a3x1) [right of=a3,node distance =5.5mm] {};  
  
  \node[boxloc](a4) at (1.5, -2.3) {$\begin{array}{c} A_4{:}\chkDiv{b}{a}{n}\end{array}$};
  \node[varpass] () [below of =a4, node distance=5mm]{$(z)$};
  \node[port](a4n1) [left of=a4,node distance = 7.2mm] {}; 
  \node[port](a4x1) [right of=a4,node distance = 7.2mm] {}; 
  
  \node[boxloc](a5) at (4.5, -1.2) {$\begin{array}{c} A_5{:}\Del\end{array}$};
  \node[varpass] () [below of =a5, node distance=5mm]{$(b,z)$};
  \node[port](a5n1) [left of=a5,node distance = 5.5mm] {}; 
    \node[oport](a5x1) [right of=a5,node distance =5.5mm] {};  
  
  \node[boxloc](a6) at (4.5, -2.3) {$\begin{array}{c} A_6{:}\chkEQn{a}{b}\end{array}$};
  \node[varpass] () [below of =a6, node distance=5mm]{$(z)$};
  \node[port](a6n1) [left of=a6,node distance = 6.9mm] {}; 
   \node[port](a6x1) [right of=a6,node distance = 6.9mm] {};
   
   \node[loc] (x2) at (6,-1.2) {$ex_2$};
     \node[inv]() [below of =x2,node distance=5mm] {$\figinv{z{=}0}$};  
  
   \node[loc] (x2s) at (3,-3) {$\ddot \smile$};
     \node[inv]() [below of =x2s,node distance=3mm] {$\figinv{z{=}0}$};  
  
   \draw[trans] (n2)--(a3n1)  node [midway, above]{$\set{b}$};
   \draw[otrans] (a3x1)--(2.5,-1.9)--(2,-1.9)--(0.3,-1.9)--(0.3,-2.2)--(a4n1);
   \draw[otrans] (a3x1)--node [midway, above]{\textcolor{black}{$\set{a}$}}(a5n1);
   \draw[otrans] (a5x1)--(5.5,-1.9) -- (3.4,-1.9)--(3.4,-2.2)--(a6n1);
   \draw[otrans] (a5x1)--(x2);
   \draw[trans] (a4x1) -- (2.5,-2.3)-- (2.5,-3) -- (x2s);
   \draw[trans] (a6x1) -- (5.5,-2.3) -- (5.5,-3) -- (x2s);
   

  \draw(7, -1.5) rectangle (9,-0.5);
  \draw (8.7, -0.3) node {$\Del$};
  
  \node[loc] (n3) at (7,-1) {$en_3$};
  \node[loc] (x3) at (9,-1) {$ex_3$};
  
  \draw[trans] (n3)--(x3);


  \draw(7, -3.5) rectangle (9,-2.5);
  \draw (8.7, -2.3) node {$\chkEQn{a}{b}$};
  
  \node[loc] (n4) at (7,-3) {$en_4$};
  \node[loc] (x4) at (9,-3) {$ex_4$};
  
  \draw[trans] (n4)--(x4) node[midway,above]{$a{=}1$} node[midway,below]{$b{=}1$};


  \draw(0, -5.5) rectangle (6,-4);
  \draw (4.7, -3.8) node {$\chkDiv{b}{a}{n}:a,b\in\set{x,y}$};

  \node[loc](n5) at (0, -4.7) {$en_{5}$};  
  \node[inv]() [below of =n5,node distance=5mm] {$\figinv{z{=}0}$};  
  
  \node[boxloc](a7) at (1.5, -4.7) {$\begin{array}{c} A_7{:}M_b\end{array}$};
  \node[varpass] () [below of =a7, node distance=5mm]{$(b,z)$};
  \node[port](a7n1) [left of=a7,node distance = 6.5mm] {}; 
   \node[port](a7x1) [right of=a7,node distance =6.5mm] {};  
  
  \node (a) at (3.7,-4.7) {$\begin{array}{l} \mbox{$n{-}1$ calls to $M_b$} \\ \mbox{$(b,z)$ pass by value} \end{array}$};
  
   \node[loc] (x5) at (6,-4.7) {$ex_5$};
   \node[inv]() [below of =x5,node distance=5mm] {$\figinv{z{=}0}$};  
  
   \draw[trans] (n5)--(a7n1);
   \draw[trans,dashed] (a7x1) -- (x5);
   \draw[trans] (a) -- (x5) node[midway,above] {$a{=}n$};


  \draw(7, -5.5) rectangle (9,-4.5);
  \draw (8.7, -4.3) node {$M_b$};
  
  \node[loc] (n6) at (7,-5) {$en_6$};
  \node[loc] (x6) at (9,-5) {$ex_6$};
  
  \draw[trans] (n6)--(x6) node[midway,above]{$b{=}1$};

\end{tikzpicture}

 \caption{Games on RTA with 3 clocks : Increment $c$.}
 \label{fig_3clk_rtg_inc}
 
\end{center}
\end{figure}

\noindent{\it Time taken}: 
Now we discuss the total time to reach a $\ddot \smile$ node   
or the exit node $ex_1$ of the component $Inc~c$ while simulating the increment instruction. 
 At the entry node $en_1$,
 clock values are $x=\frac{1}{2^{k+c}3^{k+d}}$, $y=\frac{1}{2^k}$ and $z=0$. 
 Let $\alpha = \frac{1}{2^{k+c}3^{k+d}}$ and $\beta = \frac{1}{2^{k}}$. The invaraint $z=0$ at the entry and the exit nodes $en_1$ and $ex_1$ ensures that no time elapses in these nodes and also in the return ports of $A_1$ and $A_2$. 
From the analysis above, it follows that at the return port of $A_1:\Div{y}{2}$,  $x=\alpha$, $y=\frac{\beta}{2}$ and $z=0$. Similarly at the return port of $A_2:Div\{x,12\}$, the clock values are $x=\frac{\alpha}{12}$, $y=\frac{\beta}{2}$ and $z=0$. Thus, counter $c$ has been incremented and the end of instruction $k$ has been recorded in $x$ and $y$. 
The time spent along the path from $en_1$ to $ex_1$ is the sum of times spent in $A_1:Div\{y,2\}$ and 
$A_2:Div\{x,12\}$. 
\begin{itemize}
\item Time spent in $A_1: Div\{y,2\}$. The time spent in $A_3:D$, as 
well as $A_5:D$ is both 
  $\frac{\beta}{2}$. Recall that \tort can verify that the times  $t,t'$ 
  spent in $A_3,A_5$ are both $\frac{\beta}{2}$.
If \tort enters $A_4$ to verify $t=\frac{\beta}{2}$, then the time taken is 
$2(1-t)$. In this case, the time taken to reach $\ddot \smile$  from the return port of $A_4$ is 
$t+ 2(1-t)=2-\frac{\beta}{2}$. Likewise, if \tort continued from $A_3$ to $A_5$, and goes on to verify that 
the time $t'$ spent in $A_5$ is also $\frac{\beta}{2}$, 
then the total time spent before reaching the $\ddot \smile$ from 
the return port of $A_6$ is 
  $t+t'+(1-t')=1+t=1+\frac{\beta}{2}$. Thus, 
if we are back at the return port of $A_1$, the time spent in $A_1$ is $t+t'=\beta$. 

\item Time spent in $A_2: Div\{x,12\}$. Here, the time spent in $A_3:D$ as well as $A_5:D$ is 
$\frac{\alpha}{12}$. In case \tort verifies that the time $t$ spent in $A_3:D$ is indeed $\frac{\alpha}{12}$,
then he invokes $A_4$. The time elapsed in $C^{y=}_{x/12}$ is $12(1-t)=12(1-\frac{\alpha}{12}) < 12$. 
Likewise, if \tort continued from $A_3$ to $A_5$, and goes on to verify that 
the time $t'$ spent in $A_5$ is also $\frac{\alpha}{12}$, 
then the total time spent before reaching the $\ddot \smile$ from 
the return port of $A_6$ is 
  $t+t'+(1-t')=1+t=1+\frac{\alpha}{12}$. Thus, 
if we are back at the return port of $A_2$, the time spent in $A_2$ is $t+t'=\frac{2\alpha}{12}$. 

\item In general, the component $Div\{a,n\}$ divides the value in clock $a$ by $n$. 
 If $a=\zeta$ on entering $\Div{a}{n}$, then upon exit, its value is $a=\frac{\zeta}{n}$.  The time taken to reach the exit $ex_2$ is $2 * (\frac{\zeta}{n})$. The time taken to reach the node $\ddot \smile$ in $\Div{a}{n}$ is $<n$ (due to $n$ calls to $M_b$ component).

\item Total time spent in $Inc~c$. Thus, if we come back to the return port of $A_2$, the total time spent is 
$\beta+\frac{2\alpha}{12} < 2\beta$, on entering with $y=\beta$. Recall that $x=\alpha=\frac{1}{2^{k+c}3^{k+d}}$ and $y=\beta=\frac{1}{2^k}$ and thus $\alpha \leq \beta$ always.
 \end{itemize}

From the analysis above, the following propositions are easy to see. 
 \begin{proposition}
  \label{prop:3clk_rtg_div}
  For any context $\kappa \in (B \times V)^*$, any box $b \in B$, and
  $x,y \in [0, 1]$, there exists a unique strategy of \ach such that
    
    
  \begin{equation}
   \begin{array}{ll}
    (\langle \kappa \rangle, (b,en_2), (x,y=\beta,0)) \stackrel[\Div{y}{2}]{\beta}{\longrightarrow} (\langle \kappa \rangle, (b,ex_2), (x,\frac{\beta}{2},0)),\\
    
    \text{ or } (\langle \kappa \rangle, (b,en_2), (x,y,0)) \stackrel[\Div{y}{2}]{2-\frac{\beta}{2}}{\longrightarrow} (\langle \kappa,(b,(x,y,0)) \rangle, \ddot \smile, (\frac{\beta}{2},y,0)),\\
    
    \text{ or } (\langle \kappa \rangle, (b,en_2), (x,y,0)) \stackrel[\Div{y}{2}]{1+\frac{\beta}{2}}{\longrightarrow} (\langle \kappa,(b,(x,y,0)) \rangle, \ddot \smile, (\frac{\beta}{2},\frac{\beta}{2},0)).\\
   \end{array}
  \end{equation}
\end{proposition}
 

\begin{proposition}
  \label{prop:3clk_rtg_incc}
  For any context $\kappa \in (B \times V)^*$, any box $b \in B$, and $x,y \in [0,1]$, we have that \\
   $(\langle \kappa \rangle, (b,en_1), (x,y,0)) \stackrel[\incc]{<2\beta}{\longrightarrow} (\langle \kappa \rangle, (b,ex_1), (\frac{x}{12},\frac{y}{2},0))$. 
  \end{proposition}
  
  The proof essentially relies on the argument given above. Using summary edges, we can easily 
obtain the result.

\noindent{\bf{Simulate Zero check instruction:}} Let us now simulate the instruction $l_i$: if $(d >0)$ then goto $\ell_k$, else goto $\ell_j$.
Figure \ref{fig_3clk_rtg_zc} describes this. The component for this instruction is component $Zero~Check: d=0?$. 
Starting with $x=\frac{1}{2^{k+c}3^{k+d}}=\alpha, y=\frac{1}{2^k}=\beta$ and $z=0$ at the entry node $en_1$, 
we want to reach the node corresponding to $\ell_k$ if $d>0$, with 
$x=\frac{1}{2^{k+c+1}3^{k+d+1}}=\frac{\alpha}{6}, y=\frac{1}{2^{k+1}}=\frac{\beta}{2}$ and $z=0$, and 
to the node corresponding to $\ell_j$ if $d=0$, with the same clock values.
The nodes $d=0$ and $d>0$ are respectively the exit nodes of the component $Zero~Check: d=0?$. 
In the following, we analyse the zero check component in detail.
\begin{enumerate}
\item The first component invoked on entry is $A:Div\{y,2\}$ that records the $k+1$th instruction 
by dividing $y$ by 2. Clocks $x,z$ are passed by value. This component is the same as seen in the $Inc~c$ component. 
As seen there, at the return port of $A:Div\{y,2\}$, we obtain $x=\alpha,y=\frac{\beta}{2},z=0$.
A time of $\beta$ is spent in the process. Similarly, 
at the return port of $B:Div\{x,6\}$, we obtain $x=\frac{\alpha}{6}, y=\frac{\beta}{2}$ and $z=0$.
A total time of $\frac{2\alpha}{6}$ is elapsed 
in $B:Div\{x,6\}$. Thus, the total time spent on coming to the return port of $B:Div\{x,6\}$ is 
$\beta+\frac{2\alpha}{6} < 2 \beta$. 
\item At the return port of $B$, we goto the node $m$, elapsing 
no time. This is needed since the exit nodes of the zero check component have the invariant $z=0$.
At $m$, \ach guesses whether $d=0$ or not, and goes to one of $m_1,m_2$. 
Both these nodes belong to \tort. At both $m_1,m_2$, \tort has two choices:
he can go to an exit node of the zero check component, or choose to verify the correctness 
of the guess of \ach. The $\ddot \smile$ node is reachable from the upper component $B_1:ZC_{=0}^d$
 if $d=0$, while $\ddot \smile$ node is reachable from the lower component $B_1:ZC_{=0}^d$
  if $d>0$. Lets now look at the component $ZC_{=0}^d$.
  \item At the entry node $en_2$, we have $x=\frac{\alpha}{6}, y=\frac{\beta}{2}$ and $z=0$.
  To check if $d=0$, we first eliminate the $k$ from $x,y$, obtaining 
  $x=6^{k+1}.\frac{\alpha}{6}=\frac{1}{2^c3^d}$ and $y=2^{k+1}.\frac{1}{2^{k+1}}=1$. 
  The component $B_3$ multiplies $y$ by 2 once, and invokes $B_4$, which multiplies $x$ by 6;
  this is repeated until $y$ becomes 1. $B_3$ is invoked passing $x,z$ by value, while 
  $B_4$ is invoked passing $y,z$ by value. Lets examine the functioning of $Mul\{y,2\}$, 
  the functioning of $Mul\{x,6\}$ is similar. 
  \item  At the entry node $en_3$ of $Mul\{a,n\}$, 
  with $a=y,n=2$ and $b=x$, 
  we have $z=0, x=\frac{\alpha}{6}, y=\frac{\beta}{2}$.
  Resetting $b=x$, we goto the call port of $B_6:D$. 
  $D$ is called passing $y,z$ by value.  
  A non-deterministic time $t$ is spent at the entry node $en_5$ of $D$; 
  thus, at the return port of $B_6$, we have $b=x=t,a=y=\frac{\beta}{2},z=0$.  
  The time $t$ must be $\beta$; \tort can verify this
  by invoking $B_7:C_{2*y}^{x=}$. 
  $C_{2*y}^{x=}$  invokes $M_y$ two times, passing $y,z$ by value. 
  Each time,  in $M_y$, a time of $1-\frac{\beta}{2}$ is spent.
  After the two invoctions, we obtain $b=x=t+2(1-\frac{\beta}{2})$, 
  $a=y=\frac{\beta}{2}$ and $z=0$. This $b$ must be exactly 2 to reach the exit node $ex_4$ 
  of $C_{2*y}^{x=}$; this is possible iff $t=\beta$. In this case, 
  \tort will allow \ach to goto the $\ddot \smile$ node from the return port of $B_6:D$. 
  If \tort skips the verification and goes directly to $B_8:D$, 
  then at the call port of $B_8:D$, we have $b=x=\beta$, $a=y=0$ and $z=0$. 
  $D$ is called by passing $b=x,z$ by value. 
  A time $t'$ is elapsed in $D$, obtaining $b=x=\beta$, $a=y=t'$ and $z=0$. 
  This $t'$ must be exactly $\beta$; \tort can verify this by 
  invoking $B_9:C_x^{y=}$, at the return port of $B_8$. 
   $C_x^{y=}$ checks if $y$ has ``caught up'' with $x$; that is, if $y$ is also $\beta$.
   Clearly, the exit node $ex_6$ is reached iff $a=b$; that is, $t'=\beta$. 
  At the return port of $B_8$, we thus have $a=b=\beta$, $z=0$. 
  Back at the return port of $B_3$, we thus obtain $x=\frac{\alpha}{6}, y=\beta,z=0$.
    \item In a similar way, $Mul\{x,6\}$ multiplies $x$ by 6. This, at the return port of $B_4$,
  $y$ is mutiplied by 2 and $x$ by 6, once. The process repeats until 
  we obtain $y=1$ at the return port of $B_4$. At this time, 
  we know that the loop has happened $k+1$ times, that is, 
  $y=1$ and $x=\frac{1}{2^c3^d}$. 
   \item Now we can check if $d$ is zero or not, by multiplying $x$ by 2
   $c+1$ times. If $x$ becomes exactly 2 at sometime, then clearly $d$ is zero; 
   otherwise, $x$ will never become exactly 2. Then 
   the only option is to goto the exit node $d>0$ of $ZC_{=0}^d$.
    If \ach had guessed corerctly that $d=0$ and gone to node $m_1$, 
    in the zero check component, then $ZC^d_{=0}$ will reach the upper exit node $d=0$;
    From this return port of $B_1$, $\ddot \smile$ is reachable. Similar is the case when \ach guesses correctly 
    that $d>0$ at $m$.
 \end{enumerate}

 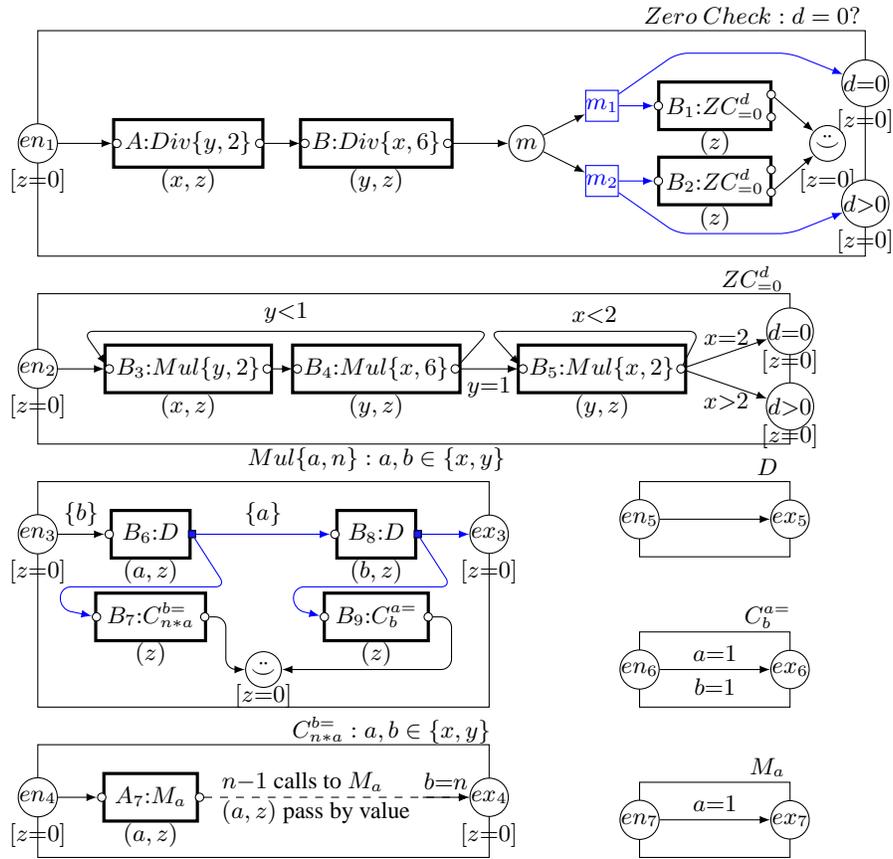
\begin{figure}
\begin{center}
\begin{tikzpicture}[node distance=4cm] 

\draw(0, 0) rectangle (11,3);
\draw (9.5, 3.2) node {$Zero~Check:d=0?$};

  \node[loc](n1) at (0, 1.5) {$en_{1}$};  
  \node[inv]() [below of =n1, node distance=5mm]{$\figinv{z{=}0}$};

  \node[boxloc](a1) at (2, 1.5) {$\begin{array}{c} A{:}\Div{y}{2}\end{array}$};
  \node[varpass] () [below of =a1, node distance=5mm]{$(x,z)$};
  \node[port](a1n1) [left of=a1,node distance = 9.5mm] {}; 
  \node[port](a1x1) [right of=a1,node distance = 9.5mm] {};

  \node[boxloc](a2) at (4.5, 1.5) {$\begin{array}{c} B{:}\Div{x}{6}\end{array}$};
  \node[varpass] () [below of =a2, node distance=5mm]{$(y,z)$};
  \node[port](a2n1) [left of=a2,node distance = 9.5mm] {}; 
  \node[port](a2x1) [right of=a2,node distance = 9.5mm] {};

  \node[loc](m) at (6.5, 1.5) {$m$};    
  
  \node[oloc] (m1) at (7.5,2){$m_1$};
  
  \node[boxloc](b1) at (9,2) {$B_1{:}\zchk{d}$};
  \node[varpass] () [below of =b1, node distance=5mm]{$(z)$};
   \node[port](b1n1) [left of=b1,node distance = 7.5mm] {}; 
   \node[port](b1x1s) at (9.75,2.15) {};
    \node[port](b1x1n) at (9.75,1.85) {};

  \node[oloc] (m2) at (7.5,1){$m_2$};
  
  \node[boxloc](b2) at (9,1) {$B_2{:}\zchk{d}$};
  \node[varpass] () [below of =b2, node distance=5mm]{$(z)$};
   \node[port](b2n1) [left of=b2,node distance = 7.5mm] {}; 
    \node[port](b2x1n) at (9.75,0.85) {};  
    \node[port](b2x1s) at (9.75,1.15) {};

  \node[loc](x1yes) at (11, 2.3) {$d{=}0$};
  \node[inv]() [below of =x1yes, node distance=5mm]{$\figinv{z{=}0}$};
  
   \node[loc](x1sm) at (10.5, 1.5) {$\ddot \smile$};
  \node[inv]() [below of =x1sm, node distance=5mm]{$\figinv{z{=}0}$};  
  
  \node[loc](x1no) at (11, 0.7) {${d{>}0}$};
  \node[inv]() [below of =x1no, node distance=5mm]{$\figinv{z{=}0}$};
  
  \draw[trans] (n1)--(a1n1);
  \draw[trans] (a1x1) -- (a2n1);
  \draw[trans] (a2x1) -- (m);
  \draw[trans] (m) -- (m1);
  \draw[trans] (m) -- (m2);
  \draw[otrans] (m1) -- (8.5,2.7) -- (10,2.7)-- (x1yes);
  \draw[otrans] (m1) -- (b1n1);
  \draw[otrans] (m2) -- (8.5,0.3) -- (10,0.3)--(x1no);
  \draw[otrans] (m2) -- (b2n1);
  \draw[trans] (b1x1s) -- (x1sm);
  \draw[trans] (b2x1n) -- (x1sm);

   \draw(0, -2.5) rectangle (10,-0.5);
   \draw (9.5, -0.3) node {$\zchk{d}$};
   
  \node[loc](n2) at (0, -1.5) {$en_{2}$};  
  \node[inv]() [below of =n2,node distance=5mm] {$\figinv{z{=}0}$};  
   
   \node[boxloc](a3) at (2, -1.5) {$\begin{array}{c} B_3{:}\Mul{y}{2}\end{array}$};
   \node[varpass] () [below of =a3, node distance=5mm]{$(x,z)$};
   \node[port](a3n1) [left of=a3,node distance = 10.5mm] {}; 
   \node[port](a3x1) [right of=a3,node distance =10.5mm] {};  
  
  \node[boxloc](a4) at (4.5, -1.5) {$\begin{array}{c} B_4{:}\Mul{x}{6}\end{array}$};
   \node[varpass] () [below of =a4, node distance=5mm]{$(y,z)$};
   \node[port](a4n1) [left of=a4,node distance = 10.5mm] {}; 
   \node[port](a4x1) [right of=a4,node distance =10.5mm] {};  
   
  \node[boxloc](a5) at (7.5, -1.5) {$\begin{array}{c} B_5{:}\Mul{x}{2}\end{array}$};
   \node[varpass] () [below of =a5, node distance=5mm]{$(y,z)$};
   \node[port](a5n1) [left of=a5,node distance = 10.5mm] {}; 
   \node[port](a5x1) [right of=a5,node distance =10.5mm] {};  
   
  \node[loc](x2e) at (10, -1) {$d{=}0$};  
  \node[inv]() [below of =x2e,node distance=4mm] {$\figinv{z{=}0}$};  

  \node[loc](x2g) at (10, -2) {$d{>}0$};  
  \node[inv]() [below of =x2g,node distance=4mm] {$\figinv{z{=}0}$};  
  
  \draw[trans] (n2) -- (a3n1);
  \draw[trans] (a3x1) -- (a4n1);
   \draw[trans] (a4x1) -- (6,-1) -- (0.6,-1) node [midway,above]{$y{<}1$} -- (a3n1);
  \draw[trans] (a4x1) -- (a5n1) node [midway,below]{$y{=}1$};
  
  \draw[trans] (a5x1) -- (8.8,-1) -- (6,-1) node[midway,above]{$x{<}2$} --(a5n1);
  \draw[trans] (a5x1) -- (x2e)node[midway,above]{$x{=}2$};
  \draw[trans] (a5x1) -- (x2g)node[midway,below]{$x{>}2$};


  \draw(0, -6) rectangle (6,-3);
  \draw (4.5, -2.7) node {$\Mul{a}{n}:a,b\in\set{x,y}$};

  \node[loc](n3) at (0, -3.7) {$en_{3}$};  
  \node[inv]() [below of =n3,node distance=5mm] {$\figinv{z{=}0}$};  
  
  \node[boxloc](b6) at (1.5, -3.7) {$\begin{array}{c} B_6{:}\Del\end{array}$};
  \node[varpass] () [below of =b6, node distance=5mm]{$(a,z)$};
  \node[port](b6n1) [left of=b6,node distance = 5.5mm] {}; 
   \node[oport](b6x1) [right of=b6,node distance =5.5mm] {};  
  
  \node[boxloc](b7) at (1.5, -4.8) {$\begin{array}{c} B_7{:}\chkMul{b}{n}{a}\end{array}$};
  \node[varpass] () [below of =b7, node distance=5mm]{$(z)$};
  \node[port](b7n1) [left of=b7,node distance = 7.2mm] {}; 
  \node[port](b7x1) [right of=b7,node distance = 7.2mm] {}; 
  
  \node[boxloc](b8) at (4.5, -3.7) {$\begin{array}{c} B_8{:}\Del\end{array}$};
  \node[varpass] () [below of =b8, node distance=5mm]{$(b,z)$};
  \node[port](b8n1) [left of=b8,node distance = 5.5mm] {}; 
    \node[oport](b8x1) [right of=b8,node distance =5.5mm] {};  
  
  \node[boxloc](b9) at (4.5, -4.8) {$\begin{array}{c} B_9{:}\chkEQn{a}{b}\end{array}$};
  \node[varpass] () [below of =b9, node distance=5mm]{$(z)$};
  \node[port](b9n1) [left of=b9,node distance = 6.9mm] {}; 
   \node[port](b9x1) [right of=b9,node distance = 6.9mm] {};
   
   \node[loc] (x3) at (6,-3.7) {$ex_3$};
     \node[inv]() [below of =x3,node distance=5mm] {$\figinv{z{=}0}$};  
  
   \node[loc] (x3s) at (3,-5.5) {$\ddot \smile$};
     \node[inv]() [below of =x3s,node distance=3.5mm] {$\figinv{z{=}0}$};  
  
   \draw[trans] (n3)--(b6n1)  node [midway, above]{$\set{b}$};
   \draw[otrans] (b6x1)--(2.5,-4.4)--(2,-4.4)--(0.3,-4.4)--(0.3,-4.7)--(b7n1);
   \draw[otrans] (b6x1)--node [midway, above]{\textcolor{black}{$\set{a}$}}(b8n1);
   \draw[otrans] (b8x1)--(5.5,-4.4) -- (3.4,-4.4)--(3.4,-4.7)--(b9n1);
   \draw[otrans] (b8x1)--(x3);
   \draw[trans] (b9x1)--(5.5,-4.8) -- (5.5,-5.5) --(x3s);
   \draw[trans] (b7x1) -- (2.5,-4.8) -- (2.5,-5.5)-- (x3s);
   

  \draw(0, -8) rectangle (6,-6.5);
  \draw (4.7, -6.3) node {$\chkMul{b}{n}{a}:a,b\in\set{x,y}$};

  \node[loc](n4) at (0, -7.2) {$en_{4}$};  
  \node[inv]() [below of =n4,node distance=5mm] {$\figinv{z{=}0}$};  
  
  \node[boxloc](a7) at (1.5, -7.2) {$\begin{array}{c} A_7{:}M_a\end{array}$};
  \node[varpass] () [below of =a7, node distance=5mm]{$(a,z)$};
  \node[port](a7n1) [left of=a7,node distance = 6.5mm] {}; 
   \node[port](a7x1) [right of=a7,node distance =6.5mm] {};  
  
  \node (a) at (3.7,-7.2) {$\begin{array}{l} \mbox{$n{-}1$ calls to $M_a$} \\ \mbox{$(a,z)$ pass by value} \end{array}$};
  
   \node[loc] (x4) at (6,-7.2) {$ex_4$};
   \node[inv]() [below of =x4,node distance=5mm] {$\figinv{z{=}0}$};  
  
   \draw[trans] (n4)--(a7n1);
   \draw[trans,dashed] (a7x1) -- (x4);
   \draw[trans] (a) -- (x4) node[midway,above] {$b{=}n$};

 
   \draw(8, -4) rectangle (10,-3);
   \draw (9.7, -2.8) node {$\Del$};
   
   \node[loc] (n5) at (8,-3.5) {$en_5$};
   \node[loc] (x5) at (10,-3.5) {$ex_5$};
   
   \draw[trans] (n5)--(x5);


  \draw(8, -6) rectangle (10,-5);
  \draw (9.7, -4.8) node {$\chkEQn{a}{b}$};
  
  \node[loc] (n6) at (8,-5.5) {$en_6$};
  \node[loc] (x6) at (10,-5.5) {$ex_6$};
  
  \draw[trans] (n6)--(x6) node[midway,above]{$a{=}1$} node[midway,below]{$b{=}1$};

 
   \draw(8, -8) rectangle (10,-7);
   \draw (9.7, -6.8) node {$M_a$};
   
   \node[loc] (n7) at (8,-7.5) {$en_7$};
   \node[loc] (x7) at (10,-7.5) {$ex_7$};
   
   \draw[trans] (n7)--(x7)node[midway,above]{$a{=}1$};

\end{tikzpicture}

 \caption{Games on RTA with 3 clocks : Zero check $c=0?$.}
 \label{fig_3clk_rtg_zc}
 \end{center}
 
\end{figure}

\noindent{\it Time taken}: 
\begin{itemize}
\item The total time taken to reach the return port of $B:Div\{x,6\}$ is 
$\beta+\frac{\alpha}{3} < 2\beta$,  on entering $en_1$ with $y=\beta,x=\alpha,z=0$.
\item The total time taken to reach the return port of $B_3$, having entered 
the call port of $B_3$ 
with 
$y=\frac{\beta}{2}$ is $ 2\beta=4\frac{\beta}{2}$. Likewise, the total time taken to 
reach the return port of $B_4$, having entered 
 the call port of $B_4$ with $x=\frac{\alpha}{6}$ is 
 $2 \alpha = 12\frac{\alpha}{6}$. Thus, the total time to reach 
 the return port of $B_4$ after one round of multiplication of $x,y$ is $4 \frac{\beta}{2}+ 12 \frac{\alpha}{6}< 4 \beta+12 \beta=16 \beta$. 
 The second time $B_3,B_4$ loop is invoked is with $y=\beta$, $x=\alpha$,
 the times taken respectively will be $4.\beta$ and $12.\alpha$, and so on. 
 Thus, the total time taken until $y$ becomes 1 is 
 $<16(\beta+2\beta+2^2\beta+ \dots +1)<16$. Recall that $x=\alpha=\frac{1}{2^{k+c}3^{k+d}}$ and $y=\beta=\frac{1}{2^k}$ and thus $\alpha \leq \beta$ always.
 \item Once $y$ becomes 1, the $B_5:Mul\{x,2\}$ loop is taken until $x$ reaches 2 or beyond. 
 $B_5$ is entered with $x=\frac{1}{2^c3^d}=\gamma$, and $B_5$ is invoked $c+1$ times. 
 The first time $Mul\{x,2\}$ is invoked with $x=\gamma$, the time elapsed is $2\gamma$; 
 the next time $Mul\{x,2\}$ is invoked with $x=2\gamma$, the time elapsed is $4\gamma$ and so on. Thus, the total time 
 elapsed in $B_5$ loop is $2\gamma+2^2 \gamma+ \dots + 2^{c+1} \gamma <2$, where $2^{c+1}\gamma=\frac{1}{3^d}$. If $d=0$, 
 then after $c+1$ steps, the exit node $d=0$ of $ZC_{=0}^d$ is reached; if $d>0$, then 
 the loop is taken $d+2$ more times; in this case also, the total time elapsed 
 to reach the exit node $d>0$ is $<2$. 
 Thus the total time taken in $ZC_{=0}^d$ component is $<16$ from the $B_3-B_4$ loop and $<2$ from $B_5$ loop. Thus time to reach either exit of this component is $<18$.
 \item In general, the component $\Mul{a}{n}$ multiplies the value in clock $a$ by $n$. 
 If $a=\zeta$ on entering $\Mul{a}{n}$, then upon exit, its value is $a=n * \zeta$. The functioning of this component is very similar to that of $\Div{a}{n}$ described earlier.
 The time taken to reach the exit $ex_3$ is $2 * (n * \zeta)$. The time taken to reach the node $\ddot \smile$ in $\Mul{a}{n}$ is $<n$ (due to $n$ calls to $M_a$ component in $ \chkMul{b}{n}{a}$). 
 \item Total time taken in $Zero~Check:~d=0?$. Time taken to come to the return port 
 of $B$ is $<2 \beta$. No time is spent at the return port of $B$, at node $m,m_1,m_2$. No time is thus spent 
 on reaching the exit nodes $d=0$ or $d>0$ from the return port of $B$. Thus, total time taken to reach 
 an exit node of  $Zero~Check:~d=0?$ is $< 2 \beta$, on entering with $y=\beta$. 
 The time taken to reach $\ddot \smile$ node in this component is $<18+2\beta$ where $<18$ t.u is the time elapsed in component $ZC_{=0}^d$.
   \end{itemize}

\noindent{\bf{Other instructions}}: The main component to simulate other instructions are as follows.
\begin{itemize}
 \item Decrement $c$ : In main component $\incc$ of Figure \ref{prop:3clk_rtg_incc}, the second call $\Div{x}{12}$ is replaced by $\Div{x}{3}$ thus updating $x$ from $\frac{1}{2^{k+c}3^{k+d}}$ to $\frac{1}{2^{k+c}3^{k+d+1}}$ to record end of $k$ instruction. 
 \item Increment $d$ : $\Div{x}{12}$ is replaced by $\Div{x}{18}$ to update $x$ to $\frac{1}{2^{k+c+1}3^{k+d+2}}$. 
 \item Decrement $d$ : $\Div{x}{2}$ is used to update $x$ to $\frac{1}{2^{k+c+1}3^{k+d}}$ recording end of $k$ instruction.
 \item Zero check $c=0?$ : Call $B_5:\Mul{x}{2}$ is replaced by $B_5:\Mul{x}{3}$ and the time taken to reach the exits remains the same.
\end{itemize}
In all these cases, the time taken to reach $\ddot \smile$ would be $< 18$ time units(in $\Div{x}{18}$).
Also, on entering any of the main components with $y=\beta$, an exit node is reached in $<2 \beta$ units of time.

\noindent{\bf{Complete RTA}} : We obtain the full RTA simulating the two counter machine 
by connecting the entry and exit of main components of instructions according to the machine's sequence of instructions. 
If the machine halts, then the RTA has an exit node corresponding to $HALT$. 
Anytime \tort embarks on a check, a $\ddot \smile$ is reachable if \ach has simulated the instruction 
correctly. As observed above, on entering any component corresponding to an instruction 
with $y=\beta$, the exit node of that component can be reached in time $< 2 \beta$, and 
a $\ddot \smile$ node can be reached in time $<18$.   
   Now the time to reach the exit node $HALT$ is the time taken for the entire simulation of the machine. As \tort can enter any of the check components, \ach is bound to choose the correct delays to update the counters accurately.
We conclude by calculating the total time elapsed during the entire simulation. 
 We have established so far that for the $(k)$th instruction, the time elapsed is no more than $2 \beta$, 
 for $\beta=\frac{1}{2^k}$.  For the first instruction, the time elapsed is at most $2$, for the second instruction it is 
 $\frac{2}{2}$, for the third it is $\frac{2}{2^2}$ and so on. 
\begin{equation}
  \text{Total~time~duration} = \left \{
 \begin{array}{l}
 2 (1 + \frac{1}{2} +\frac{1}{4} + \frac{1}{8} + \frac{1}{16} + \cdots)\\
  =2 ( 1 + (<1))\\
  < 4
 \end{array}
 \right.
\end{equation}

We now show that the two counter machine halts iff \ach has a strategy to reach $HALT$ or $\ddot \smile$. 
Suppose the machine halts. Then the strategy for \ach is to choose the appropriate delays to update the counters in each main component. Now if \tort does not verify (by entering check components) in any of the main components, then the exit $ex_1$ of the main component is reached. If \tort decides to verify then the node $\ddot \smile$ (follows from Proposition \ref{prop:3clk_rtg_div} and \ref{prop:3clk_rtg_incc}) is reached. Thus, if \ach simulates the machine correctly then either the $HALT$ exit or $\ddot \smile$  is reached  if the machine halts.    
  
Conversely, assume that the two counter machine does not halt. Then we show that \ach has no strategy to reach either $HALT$ or $\ddot \smile$. Consider a strategy of \ach which correctly simulates all the instructions. Then $\ddot \smile$ is reached only if \tort chooses to verify. But if \tort does not choose to verify then $\ddot \smile$ can not be reached. The simulation continues and as the machine does not halt, the exit node $HALT$ is never reached. Now, consider any other strategy of \ach which does an error in simulation (in a hope to reach $HALT$). \tort could verify this, and in this case, the node  $\ddot \smile$ will not be reached as the delays are incorrect. Thus \ach can not ensure reaching $HALT$ or $\ddot \smile$  with a simulation error. 
\end{proof}

\subsection{Time bounded reachability games in RSA}
\label{undec:3ws-GF}

\begin{lemma}
\label{4sw-GF-TB}
The time bounded reachability game problem is undecidable for glitchfree recursive stopwatch automata 
  with at least 4 stopwatches.
\end{lemma}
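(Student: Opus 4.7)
The plan is to reduce the halting problem for two-counter machines to the time-bounded reachability game on a glitch-free RSA with four stopwatches, combining the game-based verification trick from Lemma~\ref{3clk} (which reduced the clock count to $3$ in the two-player unrestricted RTA setting) with the time-contracting encoding from the $14$-stopwatch glitch-free single-player proof (Figure~\ref{fig_undec_14sw_GF_TB_v3}). The idea is that the game setting lets \ach propose concrete delays and forces \tort to either accept them or invoke a check-gadget ending in $\ddot\smile$, so the many witness-copies of each counter variable used in the single-player $14$-stopwatch construction can be collapsed into a single stopwatch whose value \tort is allowed to challenge.

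I would use four stopwatches $\set{x,y,z,b}$ with the invariant $x=\frac{1}{2^{k+c}3^{k+d}}$ and $y=\frac{1}{2^k}$ at the entry of the $(k{+}1)$-th main component, $z$ serving as an urgency stopwatch (nodes carry $\figinv{z{=}0}$ and $z$ is reset before each sensitive block), and $b$ serving as a pauseable auxiliary used both to preserve values across pass-by-value boundaries and to realise the check-gadgets. For each instruction type I would build a main component that calls, in sequence, glitch-free sub-components $\Div{y}{2}$ and then $\Div{x}{n}$ with $n \in \set{2,3,12,18}$ (as in Lemma~\ref{3clk}). Each $\Div{a}{n}$ gadget would work in the usual game style: \ach resets the target clock and elapses a non-deterministic delay $t$, after which \tort either proceeds to the next block (committing to $t = a/n$) or branches into $\chkDiv{b}{a}{n}$, which, by alternately pausing and releasing $b$ and performing $n$ calls to a $M_b$ subroutine, reaches $\ddot\smile$ iff $n t = a$. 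Zero-check is implemented as in Figure~\ref{fig_3clk_rtg_zc} by iterated $\Mul{y}{2}$ / $\Mul{x}{6}$ blocks until $y{=}1$, followed by a $\Mul{x}{2}$ (or $\Mul{x}{3}$) loop whose exit branches encode $c{=}0$ or $c{>}0$ (resp.\ $d$). All box calls pass \emph{all} variables by value, satisfying glitch-freeness; correctness on the \emph{callee} side is not affected because the sub-components either (a) terminate in $\ddot\smile$, in which case it does not matter what the caller clocks become, or (b) return without altering clocks, because inside the callee all the ``real'' updates are performed by components that are themselves either terminating-in-$\ddot\smile$ checks or equal-rate delay blocks whose final valuation matches the entry valuation.

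The time budget argument would mirror the one in Lemma~\ref{3clk} and the $14$-stopwatch lemma: the $(k{+}1)$-th instruction runs in time at most $c\cdot\beta = c/2^{k}$ for some small constant $c$, giving a total time bounded by a geometric series, say $<18$ time units. Using summary edges (as in Proposition~\ref{prop:3clk_rtg_div} and Proposition~\ref{prop:3clk_rtg_incc}) one then proves by induction on instruction count that \ach has a strategy reaching $HALT$ or $\ddot\smile$ within $18$ time units iff the counter machine halts.

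The main obstacle is designing the $\chkDiv{b}{a}{n}$ and $\chkEQn{a}{b}$ gadgets with only four stopwatches under glitch-free pass-by-value. In the $3$-clock RTG proof, the corresponding gadgets tacitly exploit the ability to pass some variables by value and others by reference so that a witness clock survives a sub-call unchanged; here every call copies \emph{all} clocks, so a witness cannot be ``stored'' inside a callee and used later. I expect to resolve this by using the extra stopwatch $b$ as a \emph{paused} witness (rate $0$ in the main component, released to rate $1$ only inside the check gadget), so that at the moment of the check the value of $b$ already encodes the original quantity we want to compare against, and the check becomes a local timing equality between two running stopwatches. Verifying that four stopwatches actually suffice for this—i.e.\ that every division/multiplication check can be carried out with one paused witness plus the urgency stopwatch, without needing a fifth—is the delicate combinatorial point that the formal proof has to pin down.
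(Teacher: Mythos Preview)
Your proposal follows essentially the same approach as the paper: adapt the construction of Lemma~\ref{3clk} by introducing a fourth stopwatch (the paper calls it $u$, you call it $b$) used purely for rough work, keep the encoding $x=\frac{1}{2^{c+k}3^{d+k}}$, $y=\frac{1}{2^k}$ with $z$ enforcing urgency, have \ach propose delays while \tort may branch into check gadgets terminating in $\ddot\smile$, and bound the total time by a geometric series exactly as in Lemma~\ref{3clk}. The paper's proof is in fact presented as a short delta over Lemma~\ref{3clk}, and your outline matches that delta.

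One point where your exposition is off and would not work as written: you declare that ``all box calls pass all variables by value'' and justify this by saying that structural callees ``return without altering clocks''. But a structural sub-component like $\Div{y}{2}$ is supposed to change $y$ from $\beta$ to $\beta/2$; if it is a genuine box with all-by-value semantics, that update is erased on return and the simulation never progresses. The paper's actual mechanism is the one you gesture at in your last paragraph but do not commit to: the delay boxes of Lemma~\ref{3clk} (which needed \emph{selective} pass-by-value) are replaced by ordinary \emph{nodes} in which only the auxiliary stopwatch $u$ ticks, so the counter-carrying variables are frozen without any recursion at all. The structural pieces ($\Div{a}{n}$, $\Mul{a}{n}$) are then either inlined or invoked by reference; only the check gadgets are real boxes, and for those either passing convention works because they lead straight to $\ddot\smile$ and the simulation does not continue (the paper explicitly notes that inside the check gadget the ``other'' counter variable is destroyed, which is harmless for this reason). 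With that correction your plan coincides with the paper's proof.
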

\begin{proof}
We outline quickly the changes as compared to Lemma \ref{3clk}. The proof proceeds by the simulation 
of a two counter machine. Figure \ref{fig_3sw_rsg_inc} gives the component for incrementing 
counter $c$. There are 4 stopwatches $x,y,z,u$. The encoding 
of the counters in the variables is similar to Lemma \ref{3clk}: at the entry 
node of each main component simulating the $k$th instruction, we have 
$x=\frac{1}{2^{c+k}3^{d+k}}=\alpha$, $y=\frac{1}{2^k}=\beta$ and $z=0$, where $c,d$ are the current values of the counters. We use the extra stopwatch $u$ for rough work and hence we do not ensure that $u=0$ when a component is entered.

\noindent{\bf{Simulate increment instruction}}:
As was the case in Lemma \ref{3clk}, simulation of the $(k+1)$th instruction, incrementing $c$ amounts to 
dividing $y$ by 2 and $x$ by 12. 
 In Lemma \ref{3clk}, it was possible to pass some clocks by value, and some by reference, but 
 here, all variables must be either passed by value or by reference. 
 
 The $Div\{a,n\}$ module here is similar to that in Lemma \ref{3clk}: 
  the box $A_3:D$ in Figure \ref{fig_3clk_rtg_inc} is replaced by the node $l_1$, where only $u$ ticks and accumulates a time $t$. (Recall that $u$ is the stopwatch used for rough work and has no bearing on the encoding.) In node $l_2$, 
  only $z$ ticks.
  $l_2$ is a node belonging to \tort. 
  The time $t$ spent at $l_1$ must be exactly $t=\frac{\beta}{2}$, where $\beta=\frac{1}{2^k}$ 
  is the value of $a=y$ on entering $Div\{y,2\}$. In this case, \tort, 
  even if he enters the check module $C^{u=}_{y/2}$, 
    will reach $\ddot \smile$. 
    
    Again, note that 
  the module $C^{u=}_{y/2}$ is similar to the one in Figure  \ref{fig_3clk_rtg_inc}. We use $b=x$ for rough work in this component. Due to this the earlier value of $x$ is lost. However, this does not affect the machine simulation as only $\ddot \smile$ of $Div\{y,2\}$ is reached and not the exit node and the simulation does not continue.
   At $en_5$ (of $M_u$), we have $b=x=0$, $u=t$ and $a=y=\beta$. $a,b,u$ tick at $en_5$. 
   A time $1-t$ is spent at $en_5$, obtaining $b=1-t,u=0,a=\beta+(1-t)$ at $l$. 
   At $l$, only $b,u$ tick obtaining $b=0,u=t,a=\beta+(1-t)$ at $ex_5$. 
   A second invocation of 
    $C^{u=}_{y/2}$ gives $b=0,u=t,a=\beta+2(1-t)$.
    For $a=2$, to reach $ex_3$, we thus need $t=\frac{\beta}{2}$. The time elapsed 
    in one invocation of $M_u$ is 
1 time unit; thus a total of 2+t time units is elapsed  before reaching $\ddot \smile$ (via module $C^{u=}_{y/2}$in $\Div{a}{n}$).

   If \tort skips the check at 
    $l_2$ and proceeds to $l_3$ resetting $a=y$, we
      have at $l_3$, $z=0,u=t=\frac{\beta}{2}$ and $a=0$.
      Only $a$ ticks at $l_3$, $a$ is supposed to ``catch up'' with $u$ at $l_3$, by elapsing $t=\frac{\beta}{2}$ 
      in $l_3$. Again, at $l_4$, only $z$ ticks. \tort can verify whethere $a=u$ 
      by going to $C_{u}^{a=}$. The component 
$C_{u}^{a=}$ is exactly same as that in Figure 
\ref{fig_3clk_rtg_inc}. 
A time of $1-t$ is elapsed in 
$C_{u}^{a=}$. Thus, the time taken to reach $\ddot \smile$ 
from $C_{u}^{a=}$ is $t+t+1-t=1+t$. Thus, 
$ex_2$ is reached in time $2t=2\frac{\beta}{2}=\beta$. 
As was the case in Lemma \ref{3clk}, the time taken to reach the exit node 
of $Inc~c$, starting with $y=\beta,x=\alpha,z=0$ is 
$\beta+ 2 \frac{\alpha}{12} < 2\beta$. Also, the time taken by $Div\{a,n\}$ 
on entering with $a=\zeta$ is $2\frac{\zeta}{n}$.

To summarize, the time taken to reach the exit node of the $Inc~c$ 
component is $< 2 \beta$, on entering with $y=\beta$. 
Also,  the component $Div\{a,n\}$ divides the value in clock $a$ by $n$. 
 If $a=\zeta$ on entering $\Div{a}{n}$, then upon exit, its value is $a=\frac{\zeta}{n}$.  The time taken to reach the exit $ex_2$ is $2 * (\frac{\zeta}{n})$. The time taken to reach the node $\ddot \smile$ in $\Div{a}{n}$ is $< n+1$ (due to $n$ calls to $M_u$ component). 

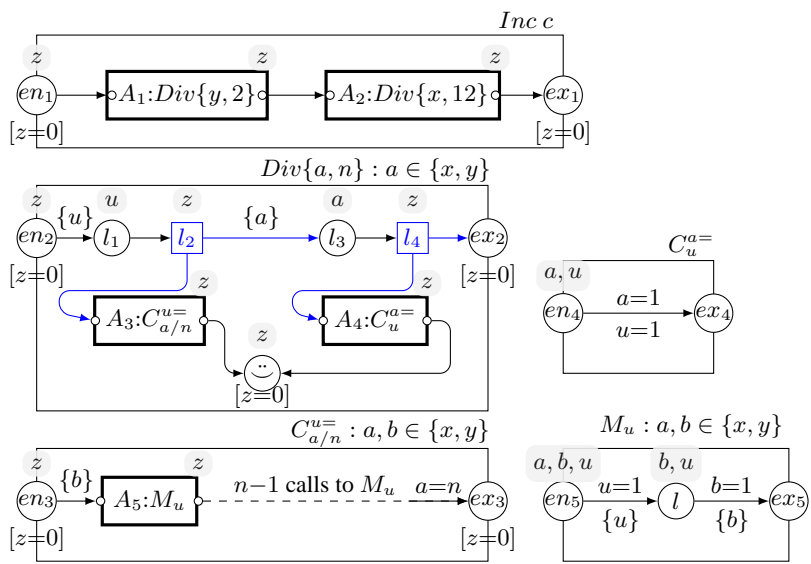
\begin{figure}
\begin{center}
 \begin{tikzpicture}[node distance=4cm] 

\draw(0, 0) rectangle (7,1.5);
\draw (6.5, 1.7) node {$\incc$};

  \node[loc](n1) at (0, 0.7) {$en_{1}$};  
  \node[inv]() [below of =n1, node distance=5mm]{$\figinv{z{=}0}$};
  \node[rate]() [above of =n1,node distance=5mm] {$\figrate{z}$};  
  
  \node[boxloc](a1) at (2, 0.7) {$\begin{array}{c} A_1{:}\Div{y}{2}\end{array}$};
  \node[port](a1n1) [left of=a1,node distance = 10mm] {}; 
  \node[port](a1x1) [right of=a1,node distance = 10mm] {};
  \node[rate]() [above of =a1x1,node distance=5mm] {$\figrate{z}$};  
  
  \node[boxloc](a2) at (5, 0.7) {$\begin{array}{c} A_2{:}\Div{x}{12}\end{array}$};
  \node[port](a2n1) [left of=a2,node distance = 11mm] {}; 
  \node[port](a2x1) [right of=a2,node distance = 11mm] {};
  \node[rate]() [above of =a2x1,node distance=5mm] {$\figrate{z}$};  
  
  \node[loc](x1) at (7, 0.7) {$ex_{1}$};
  \node[inv]() [below of =x1, node distance=5mm]{$\figinv{z{=}0}$};
  
  \draw[trans] (n1)--(a1n1);
  \draw[trans] (a1x1) -- (a2n1);
  \draw[trans] (a2x1) -- (x1);

  \draw(0, -3.5) rectangle (6,-0.5);
  \draw (4.5, -0.3) node {$\Div{a}{n}:a\in\set{x,y}$};

  \node[loc](n2) at (0, -1.2) {$en_{2}$};  
  \node[inv]() [below of =n2,node distance=5mm] {$\figinv{z{=}0}$};  
  \node[rate]() [above of =n2,node distance=5mm] {$\figrate{z}$};  
  
  \node[loc](l1) at (1, -1.2) {$l_1$};    
  \node[rate]() [above of = l1, node distance=5mm] {$\figrate{u}$};
  
  \node[oloc](l2) at (2, -1.2) {$l_2$}; 
  \node[rate]() [above of =l2,node distance=5mm] {$\figrate{z}$};  
  
  \node[boxloc](a3) at (1.5, -2.3) {$\begin{array}{c} A_3{:}\chkDiv{u}{a}{n}\end{array}$};

  \node[port](a3n1) [left of=a3,node distance = 7.2mm] {}; 
  \node[port](a3x1) [right of=a3,node distance = 7.2mm] {}; 
  \node[rate]() [above of =a3x1,node distance=5mm] {$\figrate{z}$};  
  
  \node[loc](l3) at (4, -1.2) {$l_3$};  
  \node[rate]() [above of =l3,node distance=5mm] {$\figrate{a}$};  
  
  \node[oloc](l4) at (5, -1.2) {$l_4$}; 
  \node[rate]() [above of =l4,node distance=5mm] {$\figrate{z}$};  
  
  \node[boxloc](a4) at (4.5, -2.3) {$\begin{array}{c} A_4{:}\chkEQn{a}{u}\end{array}$};
  \node[port](a4n1) [left of=a4,node distance = 6.9mm] {}; 
   \node[port](a4x1) [right of=a4,node distance = 6.9mm] {};
   \node[rate]() [above of =a4x1,node distance=5mm] {$\figrate{z}$};  
   
   \node[loc] (x2) at (6,-1.2) {$ex_2$};
     \node[inv]() [below of =x2,node distance=5mm] {$\figinv{z{=}0}$};  
  
   \node[loc] (x2s) at (3,-3) {$\ddot \smile$};
     \node[inv]() [below of =x2s,node distance=3mm] {$\figinv{z{=}0}$};  
     \node[rate]() [above of =x2s,node distance=5mm] {$\figrate{z}$};  
  
   \draw[trans] (n2)--(l1)  node [midway, above]{$\set{u}$};
   \draw[trans] (l1) -- (l2);
   \draw[otrans] (l2)--(2,-1.9)--(0.3,-1.9)--(0.3,-2.2)--(a3n1);
   \draw[otrans] (l2)--node [midway, above]{\textcolor{black}{$\set{a}$}}(l3);
   \draw[trans] (l3) -- (l4);
   \draw[otrans] (l4)--(5,-1.9) -- (3.4,-1.9)--(3.4,-2.2)--(a4n1);
   \draw[otrans] (l4)--(x2);
   \draw[trans] (a3x1) -- (2.5,-2.3)-- (2.5,-3) -- (x2s);
   \draw[trans] (a4x1) -- (5.5,-2.3) -- (5.5,-3) -- (x2s);


  \draw(7, -3) rectangle (9,-1.5);
  \draw (8.7, -1.3) node {$\chkEQn{a}{u}$};
  
  \node[loc] (n4) at (7,-2.2) {$en_4$};
  \node[rate]() [above of =n4,node distance=5mm] {$\figrate{a,u}$};

  \node[loc] (x4) at (9,-2.2) {$ex_4$};
  
  \draw[trans] (n4)--(x4) node[midway,above]{$a{=}1$} node[midway,below]{$u{=}1$};


  \draw(0, -5.5) rectangle (6,-4);
  \draw (4.7, -3.8) node {$\chkDiv{u}{a}{n}:a,b\in\set{x,y}$};

  \node[loc](n3) at (0, -4.7) {$en_{3}$};  
  \node[inv]() [below of =n3,node distance=5mm] {$\figinv{z{=}0}$};  
  \node[rate]() [above of =n3,node distance=5mm] {$\figrate{z}$};    
  
  \node[boxloc](a5) at (1.5, -4.7) {$\begin{array}{c} A_5{:}M_u\end{array}$};
  \node[port](a5n1) [left of=a5,node distance = 6.5mm] {}; 
   \node[port](a5x1) [right of=a5,node distance =6.5mm] {};  
    \node[rate]() [above of =a5x1,node distance=5mm] {$\figrate{z}$};    
  
  \node (a) at (3.7,-4.7) {$\begin{array}{l} \mbox{$n{-}1$ calls to $M_u$} \\ \mbox{   } \end{array}$};
  
   \node[loc] (x3) at (6,-4.7) {$ex_3$};
   \node[inv]() [below of =x3,node distance=5mm] {$\figinv{z{=}0}$};  
  
   \draw[trans] (n3)--(a5n1)node[midway,above] {$\set{b}$};
   \draw[trans,dashed] (a5x1) -- (x3);
   \draw[trans] (a) -- (x3) node[midway,above] {$a{=}n$};


  \draw(7, -5.5) rectangle (10,-4);
  \draw (8.7, -3.7) node {$M_u:a,b\in\set{x,y}$};
  
  \node[loc] (n5) at (7,-4.7) {$en_5$};
  \node[rate]() [above of =n5,node distance=5mm] {$\figrate{a,b,u}$};  
  
  \node[loc] (l) at ( 8.5,-4.7) {$l$};
  \node[rate]() [above of =l,node distance=5mm] {$\figrate{b,u}$};  
  
  \node[loc] (x5) at (10,-4.7) {$ex_5$};
  
  \draw[trans] (n5)--(l) node[midway,above]{$u{=}1$}node[midway,below]{$\set{u}$};
  \draw[trans] (l)--(x5) node[midway,above]{$b{=}1$}node[midway,below]{$\set{b}$};

\end{tikzpicture}

 \caption{Games on Glitchfree-RSA with 4 stopwatches : Increment $c$. Note that the variables that tick in a location are indicated above it. Due to semantics of RHA, no time elapses in the call ports and exit nodes and hence no variables ticking is not mentioned for these locations.}
 \label{fig_3sw_rsg_inc}
 
\end{center}
\end{figure}

\noindent{\bf{Simulate Zero check instruction:}}
Here again, we illustrate the changes as compared to the zero check done in Lemma \ref{3clk}. 
Figure \ref{fig_3sw_rsg_zc} describes the zero check module.
As in the case of Figure \ref{fig_3clk_rtg_zc}, on entering $en_1$ with $x=\alpha,y=\beta,z=0$, 
we divide $y$ by 2 and $x$ by 6, to record the $(k+1)$th instruction in $x,y$. 
These modules are already discussed in the increment instruction above. 
We only discuss the module $Mul\{a,n\}$ here. This is similar to the $Div\{a,n\}$ module seen above.
If we enter $Mul\{a,n\}$ with $a=\zeta$, 
at location $l_1$, a time $t=\zeta.n$ should be spent. This makes $u=\zeta.n$, the values 
of $a,z$ are unchanged. \tort can verify that $t=\zeta.n$ using $C^{u=}_{n*a}$.
It can be seen that the $Mul\{a,n\}$ module here is similar to the $Mul\{a,n\}$ 
module in Figure \ref{fig_3clk_rtg_zc}. 

\input{./figs/fig-tb-rhg-4-zc.tex}

\noindent{\bf{Complete RSA:}} As in the case of Lemma \ref{3clk}, the complete RSA is constructed by connecting components according to the machine instructions. The time elapses in the components are exactly the same as those in Lemma \ref{3clk}. Thus the total time duration for machine simulation is $<4$. Along the same lines, we can also prove that \ach has a strategy to reach HALT or $\ddot \smile$ iff the machine halts.
\end{proof}

\begin{lemma}
 The time bounded reachability game problem is undecidable for unrestricted recursive stopwatch automata with at least 3 stopwatches. 
\end{lemma}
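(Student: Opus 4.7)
The plan is to observe that the lemma is essentially a direct corollary of Lemma~\ref{3clk}. A clock is, by definition, a stopwatch whose rate is identically $1$, which is a special case of rates in $\{0,1\}$. Hence every recursive timed automaton is syntactically an RSA, and every recursive timed game arena is a recursive stopwatch game arena with the same players, partition, invariants, guards, resets, and pass-by-value/pass-by-reference annotations. Moreover, the semantic rules for the two models coincide when every flow rate equals $1$.

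First I would invoke the reduction of Lemma~\ref{3clk}, which, from a two-counter machine $M$, produces an unrestricted RTA $\Hh_M$ with three clocks $x,y,z$ together with a time bound ($<4$, in particular $<18$) and a target vertex $\HALT$ such that \ach has a winning strategy in the time-bounded reachability game on $\Hh_M$ iff $M$ halts. The clock $z$ is used only for urgency via invariants $z{=}0$; the counters are encoded by the equalities $x=\tfrac{1}{2^{k+c}3^{k+d}}$ and $y=\tfrac{1}{2^{k}}$; and the construction crucially exploits the ability to pass some variables by value while passing others by reference (hence the use of the \emph{unrestricted} class).

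Next I would simply reinterpret $\Hh_M$ as an unrestricted RSA $\Hh'_M$ over the same three variables by setting $F(q)(x)=F(q)(y)=F(q)(z)=1$ in every location $q$. This flow function is a valid flow for a stopwatch automaton, and the induced labelled transition system $\sem{\Hh'_M}$ is literally equal to $\sem{\Hh_M}$. The partition of locations between \ach and \tort is inherited, as is the set of final states and the time bound. Consequently \ach's strategies, \tort's verification gadgets, and the time-accounting arguments transfer verbatim, and the two-counter machine halts iff \ach wins the time-bounded reachability game on $\Hh'_M$. Since the latter has three stopwatches, this proves the lemma.

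There is no real obstacle; the only thing to check is that no step of Lemma~\ref{3clk} relied on a property that distinguishes clocks from stopwatches (for instance, region-graph finiteness), and indeed the proof there proceeds purely by delay/reset manipulation within a time budget, which is flow-rate agnostic as long as every rate is $1$. Alternatively, one could re-derive the result by starting from the glitchfree $4$-stopwatch construction of Lemma~\ref{4sw-GF-TB} and using pass-by-reference to eliminate the scratch stopwatch $u$, but the RTA-as-RSA embedding above already yields the stronger statement with minimal effort.
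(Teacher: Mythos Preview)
Your proposal is correct and takes essentially the same approach as the paper: the paper's proof is the one-liner ``This follows from Lemma~\ref{3clk},'' which is precisely your observation that every RTA with three clocks is already an unrestricted RSA with three stopwatches (rate $1$ everywhere), so the reduction of Lemma~\ref{3clk} applies verbatim. Your additional remarks about why no step of that proof depends on clock-specific properties are a reasonable elaboration but not required.
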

This follows from Lemma \ref{3clk}.

\subsection{Reachability games on RSA}
Reachability problem in recursive stopwatch automata with a single player is studied in Section \ref{sec:undec-rha}. The problem is undecidable for unrestricted recursive stopwatch automata with atleast two stopwatches. Further, it is undecidable for the glitchfree variant with atleast 3 stopwatches. The details of these results in Sections \ref{app:2sw} and \ref{app:3sw}. Due to these, following results in two player games on RSA are easy to see.

\begin{lemma}
\label{3sw-GF}
The reachability game problem is undecidable for glitchfree recursive stopwatch automata with at least 3 stopwatches. 
\end{lemma}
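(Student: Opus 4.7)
The plan is to obtain this result essentially as a corollary of the single-player undecidability result for glitch-free RSA with three stopwatches, namely Lemma~\ref{app:3sw}. The key observation is that the single-player reachability problem on an RSA is literally a special case of the two-player reachability game, obtained by assigning every location to \ach so that \tort never moves. Consequently, any undecidability result for the one-player setting lifts automatically to the two-player setting without further construction.

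Concretely, I would take the glitch-free RSA $\Hh$ with three stopwatches built in Lemma~\ref{app:3sw} for an arbitrary two-counter machine $M$, and form the game arena $\Gamma = (\Hh, Q, \emptyset)$ by placing every location under \ach's control. Since $Q_\Tor = \emptyset$, the set $\LSigma_\Tor$ has a unique (empty) element, and the map sending an \ach strategy $\alpha$ to the run $\RUN((\sseq{\varepsilon}, q_0, \nu_0), \alpha, \tau_\emptyset)$ is a bijection between \ach strategies and runs of $\sem{\Hh}$ from the initial configuration. Thus $\VAL^{\sem{\Hh}}_F((\sseq{\varepsilon}, q_0, \nu_0)) < \infty$ if and only if some run of $\sem{\Hh}$ reaches $\sem{F}_\Hh$, which by Lemma~\ref{app:3sw} is equivalent to $M$ halting.

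Because the Halt node of the RSA built in Lemma~\ref{app:3sw} is reached precisely when the two-counter machine halts, this yields a computable reduction from the halting problem to the two-player reachability game problem on glitch-free RSA with three stopwatches, proving undecidability. The hierarchical restriction on $\Hh$ asserted in Theorem~\ref{th:main_game} is inherited unchanged from the single-player construction, since declaring all locations to belong to \ach does not alter the component/box structure.

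There is no substantive obstacle: the main combinatorial content, encoding the counters in the values $\frac{1}{2^c 3^d}$ and simulating increment/decrement/zero-check with the $\db$, $\hf$, and $Po2$ components under pass-by-value, has already been carried out in Lemma~\ref{app:3sw}. The only remark worth making explicit in the write-up is that because \tort plays no role, the usual game-theoretic device of having \tort ``challenge'' \ach's guesses is not needed here; the more interesting two-player constructions appear instead in the sharper bounds under time-bounded reachability (Lemmas~\ref{3clk} and~\ref{4sw-GF-TB}).
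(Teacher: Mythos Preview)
Your proposal is correct and matches the paper's own approach: the paper simply remarks that the single-player undecidability results of Sections~\ref{app:2sw} and~\ref{app:3sw} immediately yield the corresponding two-player game results, and states Lemma~\ref{3sw-GF} without further proof. Your explicit reduction, assigning all locations to \ach so that the game degenerates to the one-player reachability problem of Lemma~\ref{app:3sw}, is exactly the intended argument.
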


\begin{lemma}
 \label{2sw-UR}
 The reachability game problem is undecidable for unrestricted recursive stopwatch automata with at least 2 stopwatches. 
\end{lemma}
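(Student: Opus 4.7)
The plan is to obtain this as an immediate corollary of the single-player undecidability result for unrestricted RHA with two stopwatches (Lemma~\ref{app:2sw}). The key observation is that the single-player reachability problem for an RSA $\Hh$ is precisely the special case of the two-player reachability game on the arena $(\Hh, Q_\Ach, Q_\Tor)$ obtained by setting $Q_\Ach = Q$ and $Q_\Tor = \emptyset$, since then \ach alone resolves every nondeterministic choice and $\VAL^{\sem{\Hh}}_F(s) < \infty$ holds iff there exists a run from $s$ reaching $F$.

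First, I would take the RSA $\Hh$ constructed in the proof of Lemma~\ref{app:2sw} that simulates a given two-counter machine $M$ using two stopwatches $x,y$ in the unrestricted pass-by-reference/pass-by-value setting. Second, I would lift $\Hh$ to a game arena $\Gamma = (\Hh, Q, \emptyset)$ by assigning every location to \ach. Third, I would take the set of final states to be $\{\ell_n\}$, the location corresponding to the HALT instruction. By the correctness argument already established in Lemma~\ref{app:2sw}, a run reaching $\sem{\ell_n}_\Hh$ from the initial configuration exists iff $M$ halts; since \ach controls every choice and \tort is inactive, this is equivalent to the existence of a winning strategy for \ach in the reachability game on $\Gamma$.

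Because the halting problem for two-counter machines is undecidable~\cite{Min67}, the reachability game problem for unrestricted RSA with two stopwatches is undecidable, and hierarchicality is preserved because $\Hh$ is already hierarchical. There is no genuine obstacle here: the only thing to verify is that the partition $(Q,\emptyset)$ is admissible under the definition of recursive hybrid game arena from Section~\ref{sec:recursive-timed-automata}, which it is since $Q_\Ach$ and $Q_\Tor$ are only required to partition $Q$, and the empty set is a valid block of a partition. The same reasoning underlies Lemma~\ref{3sw-GF}, which inherits undecidability from the single-player glitch-free result of Lemma~\ref{app:3sw}.
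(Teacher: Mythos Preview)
Your proposal is correct and matches the paper's approach: the paper does not give a standalone proof of this lemma but simply remarks that the single-player undecidability results of Sections~\ref{app:2sw} and~\ref{app:3sw} immediately yield the corresponding game results, which is exactly the reduction you spell out (assign all locations to \ach so the game degenerates to the one-player reachability problem). Your write-up is in fact more explicit than the paper's, and your observation about the admissibility of the partition $(Q,\emptyset)$ is a reasonable sanity check.
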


\section{Decidability with one player : Bounded Context RHA using only pass-by-reference}
\label{sec:dec-ref}
We mainly discuss the results of Theorem \ref{thm:dec} here; 

\subsection{Hybrid automata : Time bounded reachability \cite{BG13}}
\label{sec:ha_cnt}
Time bounded reachability was shown to be decidable for hybrid automata
with no negative rates and no diagonal constraints \cite{BG13}. 
The main idea here is that if there is a run $\rho$ between two configurations $(q_1,\nu_1)$ 
and $(q_2,\nu_2)$ in a hybrid automata $H$ such that 
$\duration{\rho} \leq T$ (called $T-$time bounded run), then there exists a {\it contracted} run $\rho'$
between the same configurations, such that $\duration{\rho'} \leq T$, length of $\rho'$ is atmost $C$, a constant
 exponential in $H$ and linear in $T$, and  is dependent on $rmax$ (maximal rate in $H$) and $cmax$ (largest constant in the constraints of $H$). The construction of $\rho'$ from $\rho$ relies on a {\it contraction} operator. 
 This operator identifies positions $i<j$ in $\rho$, such that all locations between $i$ and $j$ are visited before $i$ in $\rho$ and locations $l_i = l_j$ and $e_{i+1}=e_{j+1}$ the outgoing edges from $l_i$ and $l_j$ respectively. The operator then deletes all the locations $i+1, \dots, j$ and adds their time to the other occurrences before $i$. It then connects $l_i \stackrel{e_{j+1}}{\rightarrow}{l_{j+1}}$ with sum of time delays accompanying $e_{i+1}$ and $e_{j+1}$. 
This operator is used as many times as required until a fixpoint is reached.
  Care should be taken to ensure that the \textit{contracted run} is a valid run :  
it should satisfy the constraints. To ensure this, the run $\rho$ is first carefully partitioned into exponentially many pieces, 
so that contracting the pieces and concatenating them yields a valid run.

Firstly, to help track whether the valuations resulting from contraction satisfy constraints, 
the region information is stored in the locations to form another hybrid automaton $R(H)$.
Given $cmax$, the set of regions is $\{(a-1,a), [a,a] | a \in \{1, \cdots cmax\}\} \cup \{\mathbf{0^=},\mathbf{0^+},(cmax,+\infty)\}$. It differs from the classical region notion due to lack of fractional part ordering (no diagonal constraints) and special treatment of valuations which are 0. 
$R(H)$ checks whether a variable $x$ never changes from 0 before the next transition, or if it becomes $> 0$ before the next transition. This helps bound the number of sub-runs that are constructed later, and prevents the contraction operator from merging locations 
where x remains 0 with those where x becomes $>0$. The construction ensures that $H$ admits a run between two states  of duration T iff $R(H)$ admits a run between the same
states and for the same time $T$.

As the rest of the automaton is untouched, the equivalent of run $\rho$ in $R(H)$ is a run
same as $\rho$, but having region information along with locations. Let us continue to call the run in $R(H)$ as $\rho$. 
$\rho$ is called a \textit{type-0 run}. $\rho$ is chopped into fragments of duration  $\leq \frac{1}{rmax}$,
 each of which is called a \textit{type-1} run.  There will be atmost $T . rmax +1$ type-1 runs. 
Additionally, as $rmax$ is the maximal rate of growth of any variable, a variable changes its region atmost 
3 times such that, when starting in $(b,b+1)$ region, growing through $[b+1,b+1], (b+1,b+2)$, gets reset and stays in $[0,1)$. 
Each type-1 run is further split into type-2 runs based on region changes which is atmost 
3 times per variable. 
Thus each type-1 run is split into atmost $3. |\variables|$ type-2 runs. 
Respecting region changes ensures that constraints continue to be satisfied post contraction. 
Type-2 runs are again split into type-3 runs based on the first and last reset of a variable. 
This is to enable concatenation of consecutive contracted fragments by ensuring the valuations in the start configuration and end configuration of each fragment are compatible with their 
neighbors. Each type-2 run is split into atmost $2 . |\variables| + 1$ type-3 runs. 
The contraction is applied to type-3 runs, removing second occurrences of loops. 
Hence, each contracted type-3 run will be atmost 
$|Loc'|^2+1$ long (Lemma 7 of \cite{BG13}), where $Loc'$ is the set of locations of $R(H)$. 
Note that $|Loc'| = |Loc| . (2.cmax+1)^{|\variables|}$ where $Loc$ is the set of locations of $H$.
After concatenating these contracted type-3 runs, we get contracted type-2 runs with the same start and end states.
These contracted type-2 runs are then concatenated to obtain a run $\rho'$ of 
that has the same start and end states as $\rho$, $\duration{\rho'} = \duration{\rho}$ 
and $|\rho'|\leq C = 24 . (T . rmax +1 ) . |\variables| ^2 . |Loc|^2 . (2 . cmax + 1)^{2 . |\variables|}$. To solve time-bounded reachability, we  nondeterministically guess a run of length at most $C$, and solve an  LP to check if there are time delays and valuations for each step to make the run feasible.

\subsection{Bounded-Context RHA with pass-by-reference only mechanism : Time bounded reachability}
Along the lines of contraction operator, we define a {\it context-sensitive} contraction operator $cnt$ for a run in the bounded context RHA. 
As seen in Section \ref{sec:ha_cnt}, we convert the bounded context RHA $H$ into $R(H)$, where we 
remember the respective regions along with the vertices of $H$. In the rest of this discussion, 
when we say $H$, we mean $R(H)$.  

The contraction operator in \cite{BG13} matches locations in the run while we match the (context, location) pairs of the configurations in the run. The context matching ensures that we do not alter the sequence of recursive calls made in the contracted run, thus maintaining validity w.r.t recursion. The second occurence is then deleted and the time delays are added to the first occurence of the loop. Let us denote the (context, location) pair to be used for matching as $cl = (\sseq{\kappa},q)$.
Ignoring the valuations, we denote a context as $\kappa \in B^*$ since all the variables are passed by reference and hence need not be stored in the context. Henceforth, we shall denote a run as $\rho = (\sseq{\kappa_0},q_0,\nu_0) \stackrel{t_1,e_1}{\rightarrow}   (\sseq{\kappa_1},q_1,\nu_1) \stackrel{t_2,e_2}{\rightarrow}  \cdots (\sseq{\kappa_{n-1}},q_{n-1},\nu_{n-1}) \stackrel{t_n,e_n}{\rightarrow} (\sseq{\kappa_n},q_n,\nu_n)$ where $e_i$ is the discrete transition enabled after the time delay $t_i$ in the vertex $q_{i-1}$.

\begin{definition}[Context-sensitive contraction $cnt$]
\label{def-cnt}
Consider a run $\rho =(\sseq{\kappa_0},q_0,\nu_0) \stackrel{t_1,e_1}{\rightarrow}   (\sseq{\kappa_1},q_1,\nu_1) \stackrel{t_2,e_2}{\rightarrow}  \cdots (\sseq{\kappa_{n-1}},q_{n-1},\nu_{n-1}) \stackrel{t_n,e_n}{\rightarrow} (\sseq{\kappa_n},q_n,\nu_n)$.  Assume there are two positions $0 \leq i < j <n $ and a function $ h : \{i+1, \cdots, j\} \rightarrow \{0, \cdots, i-1\}$ such that 
(i) $(\sseq{\kappa_i},q_i) = (\sseq{\kappa_j},q_j)$ and (ii) for all $ i < p < j : (\sseq{\kappa_p},q_p) = (\sseq{\kappa_{h(p)}},q_{h(p)})$. 

Then $cnt(\rho) = (\sseq{\kappa_0'},q_0',\nu_0') \stackrel{t_1',e_1'}{\rightarrow}   (\sseq{\kappa_1'},q_1',\nu_1') \stackrel{t_2',e_2'}{\rightarrow}  \cdots (\sseq{\kappa_{m-1}'},q_{m-1}',\nu_{m-1}') \stackrel{t_m',e_m'}{\rightarrow} (\sseq{\kappa_m'},q_m',\nu_m')$ where 
\begin{enumerate}
\item $m = n -(j-i)$
\item for all $ 0 \leq p < i,  (\sseq{\kappa_p'},q_p') = (\sseq{\kappa_p},q_p)$
\item for all $ 1 \leq p < i, e_p' = e_p$ and $ t_p' = t_p + \Sigma_{k \in h^{-1}(p-1)}t_{k+1}$
\item $e'_{i+1} = e_{j+1}$ and $t'_{i+1}=t_{i+1}+t_{j+1}$
\item for all $i+1 <p \leq m, (\sseq{\kappa_p'},q_p') = (\sseq{\kappa_{p+j-i}},q_{p+j-i})$
\end{enumerate}
\end{definition}

Given a run $\rho$, $cnt^0(\rho)=\rho$, $cnt^1(\rho)=cnt(\rho)$, $cnt^i(\rho)=cnt(cnt^{i-1}(\rho))$. The fixpoint $cnt^*(\rho)=cnt^n(\rho)$ such that $cnt^n(\rho)=cnt^{n-1}(\rho)$. We shall prove in the following lemmas that the length of $cnt^*(\rho)$ is independent of $\rho$.

\begin{lemma}
 \label{lem:cnt-len-type3}
 Given a type-3 run $\rho$ in the bounded context RHA $H$, $|cnt^*(\rho)| \leq (\alpha .|Q| . (2.cmax+1)^{|\variables|})^2 +1$, where 
  $\alpha=\sum\limits_{i=1}^K n^i$, $K$ is the bound on the context length, and $n$ is the number of boxes in $H$.
\end{lemma}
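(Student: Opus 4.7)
The plan is to lift Lemma~7 of~\cite{BG13} from plain hybrid automata to the bounded-context, pass-by-reference recursive setting, using the (context, location, region) triple as the labelling alphabet that the operator $cnt$ tests for matching. Since variables are always passed by reference, the stack never stores valuations, so the full ``discrete'' fingerprint of a configuration of $R(H)$ that $cnt$ inspects is captured by this triple.

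First I would count how many distinct triples are possible. A context in $B^*$ of length at most $K$ has at most $\alpha = \sum_{i=1}^{K} n^{i}$ nonempty values (the empty context adds one extra label, absorbed into $\alpha$ by a loose overestimate). Each context pairs with one of the $|Q|$ locations of $H$ and, after passing to $R(H)$, with one of the $(2\cdot cmax + 1)^{|\variables|}$ region stamps. Thus the label alphabet $\mathcal{L}$ over which $cnt$ searches for matches satisfies
\[
|\mathcal{L}| \;\leq\; \alpha \cdot |Q| \cdot (2\cdot cmax + 1)^{|\variables|} \;=:\; N.
\]

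The heart of the argument is then a combinatorial bound showing $|cnt^*(\rho)| \leq N^2 + 1$. Call a pair $(i,j)$ in a run \emph{contractible} when it fits Definition~\ref{def-cnt}: the (context, location) labels at $i$ and $j$ agree, and every intermediate position $p$ with $i<p<j$ carries a label already witnessed at some earlier position $h(p) < i$. Being a fixpoint of $cnt$ is exactly the property that no contractible pair exists. Following~\cite{BG13}, one argues that in any contraction-free run, between any two positions sharing the same label there must appear a label that has not occurred anywhere earlier in the run; counting such ``fresh'' labels ($\leq N$) and the positions between two consecutive fresh labels (again $\leq N$) yields the quadratic bound $N^2 + 1$. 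Because the type-3 partitioning already respects region changes, resets and context depth, the regional and recursive fingerprints remain stable within a single type-3 fragment, so the pigeonhole argument of~\cite{BG13} transfers verbatim once labels are interpreted as elements of $\mathcal{L}$.

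The main obstacle is checking that the two new ingredients beyond~\cite{BG13} — the context component of the label and the bounded-context hypothesis — interact correctly with both $cnt$ and the pigeonhole argument. One must verify that when $cnt$ deletes positions $i{+}1,\ldots,j$ and reroutes $e_{j+1}$ after $e_{i}$, the result is still a legal recursive run; this is immediate from $(\sseq{\kappa_i}, q_i) = (\sseq{\kappa_j}, q_j)$, since the edge outgoing from $q_j$ is also outgoing from $q_i$ and the stack reached after $e_{j+1}$ from $\sseq{\kappa_j}$ is the same as the stack reached via $e_{j+1}$ from $\sseq{\kappa_i}$. This is precisely where passing every variable by reference is crucial: if valuations were stored in the stack, the equality $\sseq{\kappa_i}=\sseq{\kappa_j}$ would demand matching real numbers and the alphabet $\mathcal{L}$ would cease to be finite, breaking the pigeonhole step. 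With this verification in place, the bound $|cnt^*(\rho)| \leq N^2+1 = (\alpha\cdot|Q|\cdot(2\cdot cmax+1)^{|\variables|})^2+1$ follows.
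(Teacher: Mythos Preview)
Your proposal is correct and follows essentially the same route as the paper: both lift Lemma~7 of~\cite{BG13} by replacing the matching alphabet ``locations of $R(H)$'' with ``(context, location) pairs of $R(H)$'', bound that alphabet by $N=\alpha\cdot|Q|\cdot(2\,cmax+1)^{|\variables|}$, and then run the first-occurrence/portion pigeonhole argument to obtain the $N^2+1$ bound. Your phrasing (``between any two equal labels there is a fresh label'') is exactly the contrapositive of the paper's portion argument. One organizational remark: your final paragraph, verifying that rerouting $e_{j+1}$ after position $i$ yields a legal recursive run, is not needed for this lemma's length bound---the paper isolates that check as the separate Lemma~\ref{lem:cnt-valid}.
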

\begin{proof}
 Contraction of \cite{BG13}, matches the locations in a type-3 run. Thus the size of a contracted type-3 run is $|Loc|^2+1$ (Lemma 7 of \cite{BG13}) where $Loc$ is the set of locations in the region hybrid automata. However, we match (context,location) pairs in our context-sensitive contraction. 

Suppose $\rho$ is a type-3 run. Let $\rho' = cnt^*(\rho)$ have $M$ unique (context, location) pairs. 
Highlighting the first occurrences of these $M$ unique pairs and ignoring the valuations, we have in $\rho'$, 
$\underline{(\sseq{\kappa_1},q_1)} w_1 \underline{(\sseq{\kappa_2},q_2)} w_2 \dots  \underline{(\sseq{\kappa_{M-1}},q_{M-1})} w_{M-1} \underline{(\sseq{\kappa_M},q_M)} w_M$
where $w_i$ are strings over (context,location) pairs, which does not have any first occurrence of a 
(context,location) pair.  Clearly,  there are $M$ first occurences of (context, location) 
pairs $((\sseq{\kappa_i},q_i)$ for $1 \leq i \leq M$. Let a {\it portion} be a part of $\rho'$ between two such first occurences $(\sseq{\kappa_{i-1}},q_{i-1})$ and 
$(\sseq{\kappa_i},q_i)$, $2 \leq i \leq M+1$. 
  In a portion, contraction cannot be applied anymore. If it could be, then $cnt^*(\rho)$ is not a fixpoint.
    There could be a $cl$ pair such that 
    its first occurrence is at index $i$, and second occurrence is at index $j$, $i< j$; that is, 
    $cl_i=(\sseq{\kappa_i},q_i)=(\sseq{\kappa_j},q_j)=cl_j$, and the index $j$ is part of 
    a later portion ($cl_i$ is thus underlined, but $cl_j$ is not). We cannot contract the pairs at indices $i,j$, since 
         {\it all pairs} between $cl_i$ and $cl_j$ would not occur prior to $cl_i$ (if they occur, then 
         $\rho' \neq cnt^*(\rho))$. 
                  Thus the number of unique pairs in a portion could be more than 1 and can be atmost $M$. Thus the maximum length of $\rho' \leq M^2+1$. 

There are atmost $\alpha = \sum\limits_{i=1}^K n^i$ different contexts of size atmost $K$ with any sequence of the $n$ boxes including a box being called more than once. We know that $M$ is the number of unique (context, location) pairs. 
Clearly, $M \leq \alpha . |Q| . (2.cmax+1)^{|\variables|} $, where $Q=\bigcup_{i=1}^nQ_i$ is the union of the set of vertices $Q_i$ of all the $n$ components of the RHA, $|Q|.(2.cmax+1)^{|\variables|}$ is the number of vertices in the region RHA and $K$ is the context bound. Thus, the length of a type-3 contracted run is $\leq (\alpha . |Q|. (2.cmax+1)^{|\variables|} )^2+1$. Thus proved.
\end{proof}

Let us illustrate with an example why a contracted type-3 run $|cnt^*(\rho)|$ could be of length $>\alpha .|Q|. (2.cmax+1)^{|\variables|}$. Let $CL$ be a set of unique (context, location) pairs ($|CL| \leq \alpha . |Q|. (2.cmax+1)^{|\variables|}$). Assume $CL = \set{a,b,c,d,e,f}$. Let us abuse notation of a run for a short while and depict it to be only a sequence of pairs ignoring the valuations. Now let $cnt^*(\rho) = \rho'= \underline{a} \rightarrow \underline{b} \rightarrow \underline{c}\rightarrow \underline{d}\rightarrow a\rightarrow \underline{e} \rightarrow b\rightarrow \underline{f}$. Here the portions are 
$\epsilon$ (between $a$ and $b$, between $b$ and $c$, and between $c$ and $d$) and 
$a$ (between $d$ and $e$) and $b$ (between $e$ and $f$). 
 Note that each of these portions themselves can not be contracted any further. Additionally although there are two occurences of $a$ (position 0 and 4) itself, the pairs between the first and second occurence of $a$ (these pairs are $b,c,d$) do not appear prior to $a$ at position 0. Thus contraction can not be applied to $\rho'$. Thus $|cnt^*(\rho)|$ could be $>\alpha . |Q|.(2.cmax+1)^{|\variables|}$. However, each portion itself could be atmost $\alpha . |Q|.(2.cmax+1)^{|\variables|}$ (number of unique pairs). Thus $|cnt^*(\rho)| \leq (\alpha . |Q|.(2.cmax+1)^{|\variables|})^2+1$.

\begin{lemma}
 \label{lem:cnt-len}
 Given a run $\rho$ in the bounded context RHA $H$, $|cnt^*(\rho)| \leq 24 (T.rmax +1 ) |\variables|^2  (\alpha |Q|)^2. (2 .cmax + 1)^{2 |\variables|}$.
\end{lemma}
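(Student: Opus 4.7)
The plan is to mimic the partitioning argument of~\cite{BG13} (recalled in Section~\ref{sec:ha_cnt}) and combine it with the type-$3$ bound already established in Lemma~\ref{lem:cnt-len-type3}. The only modification compared with~\cite{BG13} is that the contraction operator now works on (context, location) pairs rather than on locations alone; crucially, the splitting into type-$1$, type-$2$, type-$3$ fragments is driven entirely by timing and by region changes of the variables, so it is oblivious to the stack content and therefore survives verbatim in the bounded-context setting.

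First, I would take the given run $\rho$ in $R(H)$ (the region-augmented RHA), which by hypothesis has total duration at most $T$. Following~\cite{BG13}, I split $\rho$ into type-$1$ fragments, each of duration $\leq 1/rmax$; there are at most $T\cdot rmax + 1$ such fragments. Each type-$1$ fragment is then split into at most $3|\variables|$ type-$2$ fragments according to the region changes of each variable (a variable can change region at most three times within a single type-$1$ fragment because $rmax$ caps its growth). Each type-$2$ fragment is then split into at most $2|\variables|+1$ type-$3$ fragments, using the first and last reset of each variable as cut points, so that concatenation of contracted pieces remains consistent with the variable valuations.

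Next, I would apply the context-sensitive contraction $cnt^*$ to each type-$3$ piece separately. By Lemma~\ref{lem:cnt-len-type3}, each contracted type-$3$ fragment has length at most $(\alpha\cdot|Q|\cdot(2\cdot cmax+1)^{|\variables|})^2 + 1$. Because the splitting points are preserved (they depend on time and region boundaries, and the contraction does not move them), the concatenation of the contracted type-$3$ fragments yields a valid run $cnt^*(\rho)$ with the same start and end configurations and the same total duration.

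Finally, I would multiply the bounds together:
\begin{equation*}
|cnt^*(\rho)| \;\leq\; (T\cdot rmax + 1)\cdot 3|\variables|\cdot (2|\variables|+1)\cdot \bigl[(\alpha\cdot|Q|)^2\cdot(2\cdot cmax+1)^{2|\variables|} + 1\bigr],
\end{equation*}
and absorb the small constants and the additive $+1$ into the leading constant $24$, yielding the claimed bound. The only nontrivial step is the argument that the type-$1$/type-$2$/type-$3$ partitioning continues to work under context-sensitive contraction: this requires observing that, because variables are always passed by reference, the rate and region dynamics of a variable depend only on the current location and not on the stack, so the BG13 justification for the cut-point choices carries over without change. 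I expect this compatibility check between context-matching and region/reset-based splitting to be the main (though essentially routine) obstacle; the final arithmetic is immediate.
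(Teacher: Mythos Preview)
Your proposal is correct and follows essentially the same approach as the paper: both invoke the BG13 splitting into type-$1$, type-$2$, and type-$3$ fragments, apply Lemma~\ref{lem:cnt-len-type3} to bound each contracted type-$3$ piece, and then multiply the counts $(T\cdot rmax+1)\cdot 3|\variables|\cdot(2|\variables|+1)$ against the type-$3$ bound, absorbing the slack into the constant $24$. Your added remark that the type-$1$/$2$/$3$ cut points depend only on timing and region information (hence are unaffected by the context component because all variables are passed by reference) is a useful clarification that the paper leaves implicit.
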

\begin{proof}
Recall the splitting of a given run prior to contraction detailed in Section \ref{sec:ha_cnt}. The given run $\rho$ (type-0) yields $(T.rmax+1)$ type-1 runs each of which is further split into $(3.|\variables|)$ type-2 runs. Each type-2 run is split into $2.|\variables|+1$ type-3 runs. Each contracted type-3 run is atmost $(\alpha. |Q|.(2.cmax+1)^{|\variables|})^2+1$ long (Lemma \ref{lem:cnt-len-type3} above). Thus length of each contracted type-2 run is

$\begin{array}{l}
\leq [2.|\variables|+1].[(\alpha. |Q|.(2.cmax+1)^{|\variables|})^2+1] \\
\leq 2 . (|\variables|+1). ((\alpha. |Q|.(2.cmax+1)^{|\variables|})^2+1) \\
\leq 2 . (2.|\variables|). (2.(\alpha. |Q|.(2.cmax+1)^{|\variables|})^2)  
= 8 . |\variables| .(\alpha. |Q|.(2.cmax+1)^{|\variables|})^2 \\
\end{array}
$
\newline
SThus the length of $cnt^*(\rho) \leq [(T.rmax+1) . (3.|\variables|)]~.~[8 . |\variables| .(\alpha. |Q|.(2.cmax+1)^{|\variables|})^2] $\\ 
l $= 24 (T.rmax+1).|\variables|^2.(\alpha. |Q|)^2.(2.cmax+1)^{2.|\variables|}$. 

\end{proof}

\begin{lemma}
 \label{lem:cnt-valid}
 Given a run $\rho$ in the bounded context RHA $H$, $cnt^*(\rho)$ is a valid run in $H$. 
\end{lemma}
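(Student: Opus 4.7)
The plan is to prove validity by induction on the number of applications of $cnt$ required to reach the fixpoint $cnt^*(\rho)$. Since $\rho$ is valid by hypothesis, it suffices to show that if $\rho'$ is a valid run of $H$ and $cnt$ is applicable at some pair of positions $i<j$, then $cnt(\rho')$ is again a valid run of $H$. Validity in our setting has two independent aspects: \emph{stack consistency}, ensuring that push/pop operations in the context chain are compatible with the recursive structure of $H$, and \emph{timing consistency}, ensuring that invariants and guards are satisfied by the new valuations after the time delays have been redistributed.

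For stack consistency, the key observation is that since all variables are passed by reference, a context is a sequence of box identifiers alone with no valuations attached. The matching condition $(\sseq{\kappa_i}, q_i) = (\sseq{\kappa_j}, q_j)$ in Definition \ref{def-cnt} forces both the call stack and the current location to coincide at positions $i$ and $j$, so any call and return operations performed between positions $i+1$ and $j$ must be perfectly balanced (every push of a box is followed by a matching pop). Consequently deleting the infix positions $i{+}1,\dots,j$ leaves a well-formed context chain, and the glued transition $e'_{i+1}=e_{j+1}$, which is enabled from $q_j=q_i$ in the original run, is also enabled from $q_i$ in the contracted run and leads to the same successor context $\sseq{\kappa_{j+1}}$.

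For timing consistency, I would import the type-0/1/2/3 decomposition described in Section \ref{sec:ha_cnt}: split $\rho$ so that each type-1 fragment has duration at most $1/r_{\max}$, each type-2 fragment contains no region change, and each type-3 fragment respects the first and last reset of every variable. Apply $cnt$ only within a type-3 fragment of the region-annotated automaton $R(H)$ and then reconcatenate. Since the region information is stored as part of the location in $R(H)$, two positions with the same $(\sseq{\kappa}, q)$ pair witness the same region for every variable; hence the redistribution $t'_p = t_p + \sum_{k\in h^{-1}(p-1)} t_{k+1}$ keeps every variable inside the same region it occupied in $\rho'$, so the invariant of $q'_p$ and the guard of $e'_{p+1}$ continue to hold. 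The type-3 restriction guarantees that the valuations at the boundaries of neighbouring fragments still match after contraction, so the concatenated run is valid end-to-end; the total time duration is preserved by construction.

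The main obstacle is verifying that the region-consistency argument of \cite{BG13} carries over verbatim despite the presence of a nontrivial call stack. This is where pass-by-reference is crucial: because the stack never snapshots variable values, the continuous evolution of the variables depends only on the rate at the current location $q$ and not on what is stored in $\sseq{\kappa}$. Therefore, at any two positions $p$ and $h(p)$ sharing the same $(\sseq{\kappa},q)$ pair, the variables tick with identical rates, and transplanting time delays from $p$ to $h(p)$ produces exactly the same valuation increment there as it would have produced at $p$. With this observation the analysis of \cite{BG13} applies unchanged to each type-3 fragment, completing the induction and the proof.
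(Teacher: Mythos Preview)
Your proposal is correct and follows essentially the same two-part structure as the paper: timing consistency is deferred to the type-0/1/2/3 machinery of \cite{BG13} (which carries over verbatim because pass-by-reference contexts store no valuations, so ignoring the stack yields an ordinary hybrid-automaton run), and stack consistency is argued separately from the context-matching in Definition~\ref{def-cnt}. The only stylistic difference is in the stack-consistency part: you give a direct argument (matching $(\sseq{\kappa_i},q_i)=(\sseq{\kappa_j},q_j)$ makes the splice at $e_{j+1}$ well-formed, and pushes/pops in the deleted infix are balanced), whereas the paper proceeds by contradiction, enumerating concrete invalid successor patterns---a call port not followed by its entry node, or an impossible context jump such as $\sseq{b_1}\to\sseq{b_1b_2b_3}$---and showing each would force a further contraction, contradicting that $cnt^*(\rho)$ is a fixpoint. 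Your direct argument is cleaner; the paper's case analysis is more explicit about exactly which RHA well-formedness conditions could fail and why they do not.
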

\begin{proof}
To prove that the contracted run $\rho'=cnt^*(\rho)$ is valid in the given bounded context RHA $H$, we need to ensure two conditions : 
\begin{itemize}
 \item the constraints appearing along the transitions of $\rho'$ are still satisfied and 
 \item the sequence of boxes (and mapping of call, return, entry, exit vertices) is valid w.r.t recursive calls in the given RHA. This means call port and appropriate entry node should be consecutive in the contracted run, exit node-return ports are matched and the contexts in configurations of $\rho'$ should be valid successors of the preceeding contexts.   
\end{itemize}
  
The first condition is satisfied as we consider a variant of RHA where all the variables are always 
passed by reference. Thus the context has no valuations but only a sequence of boxes. The constraints 
are guaranteed to be satisfied as the run is same as a hybrid automata run if the context is ignored. 
Thus the precautions (in carefully splitting from type-0 to type-3 runs) taken in \cite{BG13} for hybrid automata suffice with regards to constraints. 
 
The second condition is satisfied due to context-sensitive contraction where in the context is also matched 
in the loop detection. Due to this, a context in the contracted run will be a valid successor of the preceeding 
 context. We shall prove by contradiction that there exists no invalid pair of consecutive configurations in the contracted run $\rho'$. There are several ways in which a pair of configurations can be invalid predecessor/successor w.r.t recursion: 
 \begin{itemize}
  \item the call port and entry node are mismatched (either of them is missing or matched to another box's entry node)
  \item the return node and exit port are mismatched
  \item the consecutive contexts are incorrect/invalid 
 (the sequences of boxes in the two contexts are such that it is not possible in the RHA semantics to get one sequence from another via a valid RHA move.) 
 \end{itemize}

Let $\rho$ denote a run in the RHA, and let $\rho'$ be its contraction.
For ease of explanation, lets call the successor of a configuration $c$ in $\rho$ as $succ(c)$ and its predecessor as $pred(c)$.  
 Suppose there exists a pair of consecutive configurations 
 in $\rho'$ which are invalid w.r.t recursive call.  Abusing notation, we henceforth consider the context as $\kappa \in B^*$, ignoring valuations, as all variables are always passed by reference and hence need not be stored in the context.
 
Let us assume that the contracted run $\rho'$ has a pair of consecutive configurations $c' \stackrel{t,e}{\rightarrow} d'$ which are invalid as the call port configuration $c'$ is not succeeded by the appropriate entry node configuration in the contracted run i.e; $c' = (\sseq{\kappa},(b,en),\nu')$ and $d' \neq (\sseq{\kappa,b},en,\nu')$
($t=0$ by RHA semantics). 
  Consider configurations $c = (\sseq{\kappa},(b,en),\nu)$ and $succ(c)= (\sseq{\kappa,b},en,\nu)$ in the given run $\rho$ such that $c'$ corresponds to $c$. As $d' \neq (\sseq{\kappa,b},en,\nu')$, the configuration $succ(c)$ was deleted during contraction. Thus it must be the case that in $\rho$, $c$ was at position $i$ while the repeated (context,location) pairs were from position $i+1$ to $j$ and $d$ (corresponding to $d'$ in $\rho'$) was at position $j+1$. But, the configuration at $i+1$ is $succ(c)$ and this was matched with a configuration, say $e$ occuring prior to $c$ in $\rho$ : recall the contraction operator deletes 
repeating occurrences of (location, context) pairs; to delete $succ(c)$ at position $i+1$, 
we have to match the  (location, context) pair of $succ(c)$ at position $i+1$ with that of some 
$e$, occurring at a position $m < i$.  
   Due to the semantics of RHA, $pred(e)$ has (context, location) pair $(\sseq{\kappa},(b,en))$ which is the same as $c$. Thus even $c$ would be matched to $pred(e)$ and deleted. This contradicts our assumption that $c'$ (equivalent of $c$) exists in $\rho'$. 

In essence, the call port-entry node configurations always appear consecutive to each other in a given run $\rho$. Thus, matching a call-port configuration to a configuration $c$ will invariably match the corresponding entry-node configuration to the $succ(c)$ which will also be the same entry-node configuration. Similarly, exit-node and corresponding return-port configurations always appear consecutive to each other.

Now, lets consider another pair of invalid consecutive configurations $c_1' \stackrel{t,e}{\rightarrow} c_2'$ in $\rho'$ such that $c_1' = (\sseq{b_1},q_1,\nu_1')$ and $c_2' = (\sseq{b_1b_2b_3},q_2,\nu_2')$. Clearly, such a sequence is invalid under the RHA semantics (even if $q_1$ is a call port). During context-sensitive contraction, we match the (context,location) pairs and do not alter the contexts. Now consider two configurations $c_1$ and $c_2$ in $\rho$ which correspond to $c_1'$ and $c_2'$ respectively. Hence  $c_1 = (\sseq{b_1},q_1,\nu_1)$ and $c_2 = (\sseq{b_1b_2b_3},q_2,\nu_2)$ (valuations would differ in the two paths due to contraction). Obviously $c_2$ is not successor of $c_1$ in $\rho$, as $\rho$ is the given run (hence valid) and hence cant have such a pair of consecutive configurations. For $c_1' \stackrel{t,e}{\rightarrow} c_2'$ in $\rho'$, it must be the case that in $\rho$, $c_1$ is the $i$th configuration and the repeated (context,location) pairs are from $i+1$ to $j$ and $c_2$ is the 
$j+1$ configuration. 
Let $pred(c_2)$ be the predecessor of $c_2$ in $\rho$, i.e; the configuration appearing at position $j$. Contraction deletes the configurations from $i+1$ to $j$, after matching the (location, context) pairs of positions $i,j$, and hence $c_2'$ ends up as successor of $c_1'$. 
By definition \ref{def-cnt} of contraction, since we verify $(l_i,context_i) = (l_j,context_j)$, we know that the (context, location) pair of $c_1$ is the same as that of $pred(c_2)$ (recall $c_2$ occurs at position $j+1$, hence $pred(c_2)$ is at position $j$).
  But then the context of $pred(c_2)$ is $\sseq{b_1}$ which can not be followed by $\sseq{b_1b_2b_3}$ of $c_2$ in $\rho$. This contradicts our assumption that $c_2'$ succeeds $c_1'$ in $\rho'$.

Similar proof by contradiction can be given for any pair of configurations invalid w.r.t recursion.
\end{proof}

\begin{theorem}
 \label{thm:dec-ref}
 Time bounded reachability is decidable for bounded context RHA using only pass-by-reference mechanism.
\end{theorem}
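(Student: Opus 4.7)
The plan is to follow the guess-and-check template of \cite{BG13}: convert the given bounded-context RHA $H$ to its region refinement $R(H)$ (as in Section~\ref{sec:ha_cnt}), bound the size of a witnessing contracted run using the context-sensitive contraction operator $cnt$, and then decide existence by enumeration combined with linear programming. Concretely, by Lemmas~\ref{lem:cnt-len-type3}--\ref{lem:cnt-valid}, if $H$ admits any $T$-bounded run $\rho$ from the initial configuration $(\sseq{\varepsilon},q_0,\nu_0)$ to some target location in $F$, then iteratively applying $cnt$ until a fixpoint produces a run $cnt^*(\rho)$ with the same endpoints, with $\duration{cnt^*(\rho)}=\duration{\rho}\leq T$, which remains valid in $H$, and whose length is bounded by $C = 24(T\cdot rmax+1)|\variables|^2(\alpha|Q|)^2(2cmax+1)^{2|\variables|}$, where $\alpha = \sum_{i=1}^{K} n^{i}$ and $K$ is the context bound. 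Conversely, any such short run is itself a $T$-bounded witness. Thus time-bounded reachability in $H$ reduces to the existence of a valid run in $R(H)$ of length at most $C$.

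For the algorithm, I would nondeterministically guess a skeleton of length at most $C$, i.e.\ a sequence of (context, location, edge) triples. The skeleton is rejected unless it respects the recursive structure: each internal step is a guarded discrete transition preceded by a non-negative delay, each call port is immediately followed by the corresponding entry node with the box pushed on the context, each exit node is immediately followed by the matching return port with the top box popped, and no intermediate context exceeds $K$. Given a well-formed skeleton, feasibility is checked by a linear program with one variable $t_i$ per step encoding the delay at that step: because every variable is passed by reference, the valuation of any $x\in\variables$ at step $i$ is a linear combination of the $t_j$ for $j\leq i$ (with coefficients given by the rates at the visited locations, truncated at the most recent reset of $x$). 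Each rectangular invariant, guard, and the global bound $\sum_i t_i\leq T$ then becomes a linear inequality over the $t_i$, and solving such an LP is polynomial in $C$.

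The main obstacle I expect is ensuring that the context-sensitive contraction is genuinely realizable by such an LP, i.e.\ that the splitting-and-contraction strategy inherited from \cite{BG13} interacts correctly with the call/return structure introduced by recursion. The delicate point is that the type-1/2/3 splitting of the run must be compatible with call and return boundaries: merging two (context, location) pairs that are separated by an unmatched call or return would destroy the stack discipline, which is why Lemma~\ref{lem:cnt-valid} matches entire (context, location) pairs rather than just locations. Passing all variables by reference is essential here, since it allows the context to be a pure sequence of boxes in $B^{\leq K}$ rather than an object that also records stacked valuations, so the finite set of reachable contexts has size $\alpha$ and the LP variables $t_i$ suffice to reconstruct every intermediate valuation.

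Putting the pieces together yields a nondeterministic procedure that decides time-bounded reachability in time polynomial in $C$, hence exponential in the size of $H$ and in $T$, $cmax$, $rmax$, and $K$. This establishes decidability and, as a byproduct, gives an explicit complexity upper bound for bounded-context RHAs under the pass-by-reference-only restriction.
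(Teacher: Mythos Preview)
Your proposal is correct and follows essentially the same approach as the paper: invoke Lemmas~\ref{lem:cnt-len} and~\ref{lem:cnt-valid} to bound the length of a contracted witnessing run by the constant $C$, then nondeterministically guess a skeleton of length at most $C$ and verify feasibility via a linear program, exactly as in~\cite{BG13}. Your write-up is in fact more detailed than the paper's own proof, spelling out the form of the guessed skeleton, the well-formedness checks for the recursive structure, and why pass-by-reference makes each intermediate valuation a linear expression in the delay variables.
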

\begin{proof}
 Contraction of a given run $\rho$ yields a smaller run $\rho'$ whose length is independent of $\rho$. From Lemma \ref{lem:cnt-len}, we know that $|\rho'| \leq C= 24 (T .rmax +1 )  |\variables| ^2  (\alpha |Q|)^2  (2 .cmax + 1)^{2  |\variables|}$. Additionally, from Lemma \ref{lem:cnt-valid}, $\rho'$ is a valid run in the RHA. Thus a non-deterministic algorithm (as in the case of Hybrid automata \cite{BG13}) can be used to guess a run of length atmost $C$ and then solve an  LP to check if there are time values and valuations for each step that make such a run feasible. 
\end{proof}

\subsection{Unbounded context RHA with pass-by-reference only mechanism}
We show that our adaptation of contraction does not work on RHAs with unbounded context, as  in figure \ref{fig:ref-no-cnt-app}.
We can not apply context-sensitive contraction as the context 
might grow unboundedly and matching pairs may not be found within a type-3 run. 


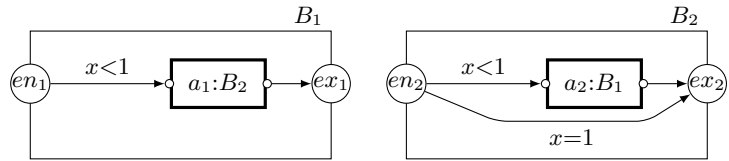
\begin{figure}[h]
\begin{center}
\begin{tikzpicture}[node distance=4cm] 
  
  \draw(0, -0.2) rectangle (4,1.5);
  \draw (3.7, 1.7) node {$B_1$};

  \node[loc](n1) at (0, 0.8) {$en_{1}$};  
  
  \node[boxloc](f1) at (2.5, 0.8) {$\begin{array}{c}~a_1{:}B_2~\\ \end{array}$};
  \node[port](f1n1) [left of=f1,node distance = 6.5mm] {}; 
  \node[port](f1x1) [right of=f1,node distance = 6.5mm] {};

  \node[loc](x1) at (4, 0.8) {$ex_{1}$};

   \draw[trans] (n1)-- (f1n1) node[midway,above]{$ x{<}1$};
   \draw[trans] (f1x1)--(x1) {};   
   
   \draw(5, -0.2) rectangle (9,1.5);
  \draw (8.7, 1.7) node {$B_2$};

  \node[loc](n2) at (5, 0.8) {$en_{2}$};  
  
  \node[boxloc](f2) at (7.5, 0.8) {$\begin{array}{c}~a_2{:}B_1~\\ \end{array}$};
  \node[port](f2n1) [left of=f2,node distance = 6.5mm] {}; 
  \node[port](f2x1) [right of=f2,node distance = 6.5mm] {};

  \node[loc](x2) at (9, 0.8) {$ex_{2}$};

   \draw[trans] (n2)-- (f2n1) node[midway,above]{$ x{<}1$};
   \draw[trans] (f2x1)--(x2) {};   
   \draw[trans] (n2) -- (6.2, 0.3) -- (8.2,0.3)node [midway, below] {$x{=}1$} -- (x2);
   
\end{tikzpicture}
\end{center}
\caption{RHA with unbounded context}
\label{fig:ref-no-cnt-app}
\end{figure}

Consider the run  $\rho = (\sseq{\varepsilon},en_1,0) \stackrel{0.1}{\rightarrow} 
  (\sseq{\varepsilon},(a_1,en_2),0.1) \stackrel{0}{\rightarrow} 
  (\sseq{a_1},en_2,0.1) \stackrel{0.1}{\rightarrow}$\\ 
  $(\sseq{a_1},(a_2,en_1),0.2) \stackrel{0}{\rightarrow} 
  (\sseq{a_1a_2},en_1,0.2) \stackrel{0.1}{\rightarrow} 
  (\sseq{a_1a_2},(a_1,en_2),0.3) \stackrel{0}{\rightarrow}\\ 
  (\sseq{a_1a_2a_1},en_2,0.3) \stackrel{0.1}{\rightarrow} 
  (\sseq{a_1a_2a_1},(a_2,en_1),0.4) \stackrel{0}{\rightarrow} 
  (\sseq{a_1a_2a_1a_2},en_1,0.4) \stackrel{0.1}{\rightarrow}\\
  (\sseq{a_1a_2a_1a_2},(a_1,en_2),0.5) \stackrel{0}{\rightarrow}
  (\sseq{a_1a_2a_1a_2a_1},en_2,0.5) \stackrel{0.5}{\rightarrow}
  (\sseq{a_1a_2a_1a_2a_1},ex_2,1) \stackrel{0}{\rightarrow}
  (\sseq{a_1a_2a_1a_2},(a_1,ex_2),1) \stackrel{0}{\rightarrow}
  (\sseq{a_1a_2a_1a_2},ex_1,1) \stackrel{0}{\rightarrow} 
  (\sseq{a_1a_2a_1},(a_2,ex_1),1) \stackrel{0}{\rightarrow}\\
  (\sseq{a_1a_2a_1},ex_2,1) \stackrel{0}{\rightarrow} 
  (\sseq{a_1a_2},(a_1,ex_2),1) \stackrel{0}{\rightarrow} 
  (\sseq{a_1a_2},ex_1,1) \stackrel{0}{\rightarrow}
  (\sseq{a_1},(a_2,ex_1),1) \stackrel{0}{\rightarrow}
  (\sseq{a_1},ex_2,1) \stackrel{0}{\rightarrow}  
  (\sseq{\varepsilon},(a_1,ex_2),1) \stackrel{0}{\rightarrow} 
  (\sseq{\varepsilon},ex_1,1)$. $\duration{\rho} = 1$. 
  
  Now consider another run \\
  $\rho' = (\sseq{\varepsilon},en_1,0) \stackrel{0.1}{\rightarrow} 
  (\sseq{\varepsilon},(a_1,en_2),0.1) \stackrel{0}{\rightarrow} 
  (\sseq{a_1},en_2,0.1) \stackrel{0.9}{\rightarrow} 
  (\sseq{a_1},ex_2,1) \stackrel{0}{\rightarrow}
  (\sseq{\varepsilon},(a_1,en_2),1) \stackrel{0}{\rightarrow}
  (\sseq{\varepsilon},ex_1,1)$. 
 \\
  Note that the start and end configurations of both the runs are the same and \\$\duration{\rho'} = 1$. However, we can not apply context-sensitive contraction in this case. There can be several other runs like $\rho$ which can have unbounded contexts but have the same effect as $\rho'$. To be able to obtain $\rho'$ from $\rho$, we would need to apply contraction to the contexts too and shorten them in a sensible manner.
 Our context-sensitive contraction studied earlier does not alter the contexts and hence does not readily extend to this class  of RHA. We conjecture this to be decidable with a double layered contraction - one altering the contexts and the other altering the valuations (as seen in \cite{BG13}).

\section{Decidability with one player : Glitch-free RHA with 2 stopwatches}
\label{sec:dec-rha}
\subsection{Region Abstraction of Hybrid Automata with 2 Stopwatch Variables}
\label{dec}
We first show that the reachability problem is decidable for  stopwatch automata (hybrid automata with only stopwatches)
with 2 stopwatch variables. 

\begin{definition}[Singular Hybrid Automata]
   A Singular Hybrid automaton is a tuple 
   $\HA = (Q, Q_0, \Sigma, X, \Delta, I, F)$ where 
  \begin{itemize}
  \item 
    $Q$ is a finite set of control \emph{modes} including a distinguished
    initial set of control modes $Q_0 \subseteq Q$, 
  \item 
    $\Sigma$ is a finite set of \emph{actions},
  \item 
    $V$ is an (ordered) set of \emph{variables}, 
  \item
    $\Delta \subseteq Q \times \rect(V) \times \Sigma \times 2^{X}
    \times Q$ is the \emph{transition relation}, 
  \item 
    $I: Q \to \rect(V)$  is the mode-invariant function, and
    \item 
    $F: Q \to \Rat^{|V|}$ is the mode-dependent \emph{flow function}
    characterizing the rate of each variable in each mode.
  \end{itemize}
\end{definition}

Recollect that, $\rect(V)$ is the set of rectangular constraints over $V$.

Let $\HA$ be a singular hybrid automata with two stopwatch variables $V=\{x,y\}$ and as both variables are stopwatches, $F: Q \to \set{0,1}^2$. Let $c_{max}$ be the maximum constant used in any of the guards of $\HA$. For simplicity, we assume that the hybrid automata does not have location invariants.

\subsubsection*{Regions and Region Automaton} 
                We consider a finite partitioning $\mathcal{R}$ of $\mathbb{R}^2$.
  For each valuation $\nu =(\nu(x),\nu(y))\in \mathbb{R}^2$, the unique element of
   $\mathcal{R}$ that contains $\nu$ is called a region, denoted $[\nu]$. We define the successors of a region $R$, 
   $Succ_{(r_x,r_y)}(R) \subseteq  \mathcal{R}$, in the following natural way:
 For $r_x,r_y \in \{0,1\}$, 
 $$R' \in Succ_{(r_x,r_y)}(R)~\mbox{if}~\exists \nu \in R, \exists t \in \mathbb{R}~\mbox{such that}~ 
[\nu+(r_x,r_y)t] =R'$$
Denote by $\nu+(r_x,r_y)t$, the valuation $(\nu(x)+r_xt,\nu(y)+r_yt)$.
We say that such a finite partition is a {\it set of regions} whenever the following condition holds:
$$R' \in Succ_{(r_x,r_y)}(R)~\mbox{iff}~\forall \nu \in R, \exists t \in  \mathbb{R}~
\mbox{such that}~[\nu+(r_x,r_y)t] = R'$$
 
The only kind of updates we consider are those where we reset  variables to 0. 
A reset $res$ maps a region $R$ to the region $res(R)$ obtained 
from $R$ by assigning value 0 to all variables which were reset to 0.  
The set of regions $\mathcal{R}$ is compatible with resets $res$ if whenever a valuation 
$\nu' \in R'$ is reachable from a valuation $\nu \in R$ after a reset, 
then $R'$ is reachable from any $\nu \in R$ by the same reset. 
Formally, we have 
 $$R' \in res(R) \rightarrow \forall \nu \in R, \exists \nu' \in R'~\mbox{such that}~\nu' \in res(\nu)$$ 

The guards $\varphi$ considered in $\HA$ are boolean combinations of  
 $x \bowtie c$ where $x \in V$ and 
$c \in \mathbb{N}$ and $\bowtie \in \{<,>,\leq, \geq,=\}$.  
A region $R$ is compatible with $\varphi$ iff 
for all valuations $\nu \in R$, either $\nu \models \varphi$
or $\nu \models \neg \varphi$.

We first construct a set of regions for 2 stopwatch automata 
that are compatible with resets and guards. 
For $z \in \{x,y\}$, we define the set of intervals 
$$\mathcal{I}_z= \{[c] \mid 0 \leq c \leq c_{max}\} \cup \{(c,c+1) \mid 0 \leq c < c_{max}\} \cup\{(c_{max}, \infty)\}$$
Define $\alpha=((I_x,I_y),\prec)$, where $\prec$ is a total preorder 
on $V_0=\{ x \in V \mid I_x$ is an interval of the form $(c,c+1)\}$. The region associated with $\alpha$ denoted  
$R_{\alpha}$ is the set of valuations 
$$\{\nu \in \mathbb{R}^2 \mid \nu(x) \in I_x, \nu(y) \in I_y~\mbox{and}~[(x,y \in V_0, x \prec y)
 \leftrightarrow (frac(\nu(x)) \leq frac(\nu(y))]\}$$ The finite set $\mathcal{R}$ of all such regions $R_{\alpha}$ forms a partition of 
$\mathbb{R}^2$. 

\begin{lemma}
$\mathcal{R}$ as defined above, is a set of regions.
\end{lemma}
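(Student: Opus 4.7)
The forward implication ($\forall \nu$ implies $\exists \nu$) is immediate from the definition of $Succ_{(r_x,r_y)}$, so the plan is to establish the nontrivial direction: if there is some $\nu \in R$ and $t \in \mathbb{R}$ with $[\nu+(r_x,r_y)t] = R'$, then for \emph{every} $\nu'\in R$ there exists $t' \in \mathbb{R}$ with $[\nu'+(r_x,r_y)t'] = R'$. I will proceed by a case analysis on the rate vector $(r_x, r_y) \in \{0,1\}^2$, using the explicit description of a region $R_\alpha$ as a tuple $((I_x, I_y), \prec)$.

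The case $(0,0)$ is trivial, since $\nu + (0,0)t = \nu$ forces $Succ_{(0,0)}(R) = \{R\}$. The case $(1,1)$ is the classical region-automaton argument for timed automata: along a trajectory of slope $(1,1)$, the fractional-part ordering $\prec$ is preserved (both coordinates increase in lockstep), and integer boundaries are crossed in the order dictated by $\prec$ (the variable with the largest fractional part reaches the next integer first). Thus any two valuations $\nu, \nu' \in R$ traverse exactly the same finite sequence of regions, though at different absolute times, so for any target $R'$ reached from $\nu$ there is a corresponding $t'$ from $\nu'$ landing in $R'$. I will just write out this sequence explicitly and check that each successor in the sequence is determined by $((I_x,I_y),\prec)$ alone.

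The main content is in the asymmetric cases $(1,0)$ and $(0,1)$, which are symmetric so I treat $(1,0)$. Here $y$ is frozen, so $\nu(y) = \nu'(y)$ mod nothing — $\nu(y)$ simply stays inside $I_y$ forever. The only boundary events along the trajectory $t \mapsto \nu+(1,0)t$ are: (a) $\nu(x)$ reaching the next integer, (b) $\mathrm{frac}(\nu(x))$ crossing $\mathrm{frac}(\nu(y))$ (if $y\in V_0$), and (c) $\nu(x)$ entering $(c_{max},\infty)$. Write $\nu = (a+f_x, b+f_y)$ and $\nu'=(a+f'_x, b+f'_y)$ with the same integer parts $a,b$ (since $I_x, I_y$ agree) and with $f_x \lessgtr f_y$ iff $f'_x \lessgtr f'_y$ (since $\prec$ agrees). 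The key observation is that the qualitative order of events (a)--(c) is determined entirely by $((I_x,I_y),\prec)$: for instance, $\mathrm{frac}(x)$ catches up to $\mathrm{frac}(y)$ exactly when $t = f_y - f_x$ (resp.\ $f'_y - f'_x$), which is positive for $\nu$ iff it is positive for $\nu'$, and precedes $x$ reaching $a+1$ in both cases iff it does so in one. Walking through this finite list of events in order shows that $\nu$ and $\nu'$ visit the same sequence of regions, establishing the claim.

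I also need to verify that the partition $\mathcal{R}$ is well-defined and finite; finiteness is clear since $|\mathcal{I}_x|,|\mathcal{I}_y|$ are bounded by $2c_{max}+2$ and there are only finitely many total preorders on $V_0 \subseteq \{x,y\}$. The subtlest bookkeeping will be in case (1,0) when both $x,y \in V_0$ and $\prec$ places them in the same equivalence class (i.e., $\mathrm{frac}(x) = \mathrm{frac}(y)$ initially): then $f_x = f_y$ for $\nu$ and $f'_x = f'_y$ for $\nu'$, and the immediate successor under any positive $t$ has $\mathrm{frac}(x) > \mathrm{frac}(y)$, uniformly. The expected main obstacle is precisely this edge-case analysis of degenerate/boundary regions (where some $I_z = [c]$ rather than $(c,c+1)$, so that $z \notin V_0$ and $\prec$ does not mention $z$); here I must separately check that growth of $x$ from an integer value, with $y$ frozen, produces the same next region for every $\nu'$ in the starting region, which follows because the singleton $I_x = [a]$ forces $f_x = 0 = f'_x$, so again $\nu$ and $\nu'$ are identical in the relevant coordinate.
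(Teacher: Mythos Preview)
Your proposal is correct and follows the same overall strategy as the paper: a case analysis on the rate vector $(r_x,r_y)\in\{0,1\}^2$ showing that the sequence of regions traversed from any $\nu\in R_\alpha$ depends only on the region data $((I_x,I_y),\prec)$. The organization differs slightly: the paper first constructs, for each region $R_\alpha$ with $Succ_{(r_x,r_y)}(R_\alpha)\neq\{R_\alpha\}$, an explicit \emph{closest successor} $C_\alpha$ (with formulas for $I'_z$ and $\prec'$ split by whether $Z=\{z\mid I_z=[c]\}$ is empty and by the rate vector), exhibits for every $\nu\in R_\alpha$ a concrete time $\tau$ landing in $C_\alpha$, and then obtains the full successor set by induction on the chain $R_\alpha\to C_\alpha\to C_{C_\alpha}\to\cdots$. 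Your version instead argues directly that the ordered list of boundary events (integer crossings, fractional-part crossings) along a trajectory is determined by the region, which amounts to unfolding the same induction in one pass. The paper's closest-successor decomposition is a bit more modular and makes the region-automaton transition relation explicit as a byproduct; your event-order argument is more conceptual and avoids writing out the successor formulas. Either way the substance is the same.
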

\begin{proof}
In the sequel, we show that $\mathcal{R}$ is  a set of regions. 
Consider $\alpha=((I_x,I_y),\prec)$. If $I_x=((c_{max}, \infty),(c_{max}, \infty))$, then 
for all $\nu \in R_{\alpha}$, for all $t \in \mathbb{R}$, $\nu+(r_x,r_y)t  \in R_{\alpha}$ for $r_x,r_y \in \{0,1\}$. 
Hence $Succ_{(r_x,r_y)}(R_{\alpha})=R_{\alpha}$.
If $Succ_{(r_x,r_y)}(R_{\alpha})\neq R_{\alpha}$, 
then there is atleast one another region 
in $Succ_{(r_x,r_y)}(R_{\alpha})$ different from $R_{\alpha}$. 
Let $C_{\alpha}$ denote the region that is  closest to region 
   to $R_{\alpha}$. Such a closest region is such that 
 $C_{\alpha} \in Succ_{(r_x,r_y)}(R_{\alpha})$, and for all $\nu \in R_{\alpha}$, for all $t \in \mathbb{R}$, if 
$\nu+(r_x,r_y)t \notin R_{\alpha}$, then $\exists t' \leq t$ such that 
$\nu+(r_x,r_y)t' \in C_{\alpha}$. 
Such a region $C_{\alpha}=((I'_x,I'_y),\prec')$ is characterized as follows: 
Let $Z=\{z \in V  \mid I_z$ is of the form $[c]\}$.
\begin{enumerate}
\item If $Z \neq \emptyset$ and $r_x=r_y=1$. Then 
\begin{itemize}
\item \[ I'_z=\left\{ \begin{array}{ll}
I_z & \mbox{if $z \notin Z$},\\
(c,c+1) & \mbox{if $z \in Z$ and $0 \leq c < c_{max}$} \\
(c_{max}, \infty) & \mbox{if $z \in Z$ and $I_z=[c_{max}]$}
\end{array} \right.\]
\item  $x \prec' y$ if  $I_x=[c]$ with $0 \leq c < c_{max}$ and $I'_y$ 
is of the form $(d,d+1)$.
\end{itemize}
\item If $Z \neq \emptyset$ and atleast one of $r_x,r_y$ is 0. Then
\begin{itemize}
\item \[ I'_z=\left\{ \begin{array}{ll}
I_z  & \mbox{if  $r_z=0$},\\
\mbox{$[c+1]$} & \mbox{ if $z \notin Z$ and $r_z=1$}\\
(c,c+1) & \mbox{if $z \in Z,r_z=1$ and $0 \leq c < c_{max}$} \\
(c_{max}, \infty) & \mbox{if $z \in Z, r_z=1$ and $I_x=[c_{max}]$}
\end{array} \right.\]
\item $x \prec' y$ if $r_x=1$ and $x \in Z, y \notin Z$.
 If ($r_x=0$ and $x \in Z$) 
 or ($r_x=1$ and $x \notin Z$)  
 or if $x, y \in Z$, then $V'_0=\emptyset$. 
\end{itemize}

\item If $Z=\emptyset$ and $r_x=r_y=1$. Let $M$ denote the set of variables with the maximum fractional part,
whose interval is of the form $(c,c+1)$ for $0 \leq c < c_{max}$. 
Then 
\begin{itemize}
\item \[ I'_z=\left\{ \begin{array}{ll}
I_z & \mbox{if $z \notin M$},\\
\mbox{$[c+1]$} & \mbox{if $z \in M$ and $I_z=(c,c+1)$ with $0 \leq c < c_{max}$} 
\end{array} \right.\]
\item One variable moves to an integer value, or both variables are in $(c_{max}, \infty)$. Hence $V'_0=\emptyset$.
\end{itemize}
\item If $Z=\emptyset$ and atleast one of $r_x,r_y$ is 0. 
Then 
\begin{itemize}
\item 
\[ I'_z=\left\{ \begin{array}{ll}
I_z & \mbox{if $r_z=0$},\\
\mbox{$[c+1]$} & \mbox{if $z \in M, r_z=1$ and $I_z=(c,c+1)$ with $0 \leq c < c_{max}$}\\
\mbox{$[c+1]$}  & \mbox{if $z \notin M$ and $r_z=1$ and  $I_z=(c,c+1)$ with $0 \leq c < c_{max}$}
\end{array} \right.\]
\item $x \prec' y$ is same as $x \prec y$ when $r_x=r_y=0$. Otherwise, 
one of the variables gets an integer value, and hence $V'_0=\emptyset$.
\end{itemize}
\end{enumerate}
We now claim that 
$$\forall \nu \in \alpha, \exists t \in \mathbb{R}~\mbox{such that}~\nu+t \in C_{\alpha}$$ 
Let $\nu$ be a valuation in $\alpha$. Then let $frac(\nu(x))$ denote the fractional 
part of $\nu(x)$. Similarly for $\nu(y)$.
\begin{enumerate}
\item If $Z \neq \emptyset$ and $r_x=r_y=1$. 
Let $\tau=min\{1-frac(\nu(z)) \mid I_z$ is of the form $(c,c+1)\}$. Then $\nu+(1,1)\frac{\tau}{2}$ is in the region 
$C_{\alpha}$.
\item If $Z \neq \emptyset$ and  atleast one of $r_x,r_y$ is 0.
\begin{itemize} 
\item If $x \in Z, y \notin Z$ and $r_x=1$, then pick $\tau=frac(\nu(y))$. 
Then $\nu+(1,0)\frac{\tau}{2}$ is in the region $C_{\alpha}$. 
\item If $r_x=0$ and $x \in Z$, then pick $\tau=1-frac(\nu(y))$. Then 
$\nu+(0,1)\tau$ is in the region $C_{\alpha}$.
\item If $r_x=1$ and $x \notin Z$, then  pick $\tau=1-frac(\nu(x))$.
$\nu+(1,0)\tau$ is in the region $C_{\alpha}$. 
\item If $x, y \in Z$, and $r_x=1$, then pick $\tau=0.5$. Then 
$\nu+(1,0)\tau$ is in the region $C_{\alpha}$.
\end{itemize}
\item If $Z=\emptyset$ and $r_x=r_y=1$. 
\begin{itemize}
\item Pick the variable $z \in M$. Let $\tau=1-frac(\nu(z))$. Then $\nu+(1,1)\tau$ is in the region 
$C_{\alpha}$.
\end{itemize}
\item If $Z=\emptyset$ and atleast one of $r_x,r_y$ is 0. 
\begin{itemize}
\item If $r_x=1$ and $r_y=0$. Pick $\tau=1-frac(\nu(x))$. Then 
$\nu+(1,0)\tau$ is in the region $C_{\alpha}$.
\end{itemize}
\end{enumerate}
Thus we obtain that $C_{\alpha} \in Succ_{(r_x,r_y)}(R_{\alpha})$ 
is the closest successor of $R_{\alpha}$. 
Inducting on $C_{\alpha}$, we get the closest successor 
of $C_{\alpha}$, which is the  
successor of $R_{\alpha}$,  2 steps away, and so on. 
We write $R_{\alpha} \rightarrow_n R^n_{\alpha}$ 
if $R^n_{\alpha}$ is the $n$th closest successor of $R_{\alpha}$
with respect to some choice of rates $(r_x,r_y)$. 
 This clearly means that there is a sequence 
 of regions $R^0_{\alpha},R^1_{\alpha},R^2_{\alpha}, \dots,R^n_{\alpha}$ 
 such that $R^0_{\alpha}=R_{\alpha}$, 
 and $R^{i+1}_{\alpha}$ is the closest successor 
 of $R^i_{\alpha}$ for all $1 \leq i < n$.

In this way, we can find all successors $R'_{\alpha}$ 
of $R_{\alpha}$ such that  
$R'_{\alpha} \in Succ_{(r_x,r_y)}(R_{\alpha})$ 
iff for all $\nu \in R_{\alpha}$ 
there exists some $t \in \mathbb{R}$ such that $\nu+(r_x,r_y)t \in R'_{\alpha}$.
Hence, $\mathcal{R}$ is indeed a set of regions partitioning $\mathbb{R}^2$.
\qed \end{proof}

Given two valuations $\nu_1, \nu_2 \in R_{\alpha}$ for some region $R_{\alpha}$, we 
say that $\nu_1$ and $\nu_2$ are equivalent if they lie in the same region, i.e,   $[\nu_1]=[\nu_2]$.

\begin{lemma}
$\mathcal{R}$ is compatible with the guards $\varphi$ and with the resets $res$. 
\end{lemma}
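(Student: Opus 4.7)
My plan is to dispatch the two compatibility claims independently, since each reduces to a short syntactic check on the definition of $\mathcal{R}$.

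For compatibility with guards, I would first reduce to the atomic case: any guard $\varphi$ is a boolean combination of atomic constraints $z \bowtie c$ with $z \in \{x,y\}$, $c \in \mathbb{N}$ and $c \leq c_{max}$, so it suffices to prove that for every $R_\alpha \in \mathcal{R}$ the truth value of each such atomic constraint is constant on $R_\alpha$. Fix $R_\alpha = ((I_x, I_y), \prec)$ and the atomic constraint $z \bowtie c$. By construction $\nu(z) \in I_z$ for every $\nu \in R_\alpha$, and $I_z$ lies in the family $\mathcal{I}_z$ consisting of the singletons $[k]$, the open unit intervals $(k, k+1)$ for $0 \leq k < c_{max}$, and $(c_{max}, \infty)$. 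A direct case split on $I_z$ and on $\bowtie \,\in \{<, \leq, =, \geq, >\}$ shows that for every integer $c$ with $0 \leq c \leq c_{max}$ the interval $I_z$ lies entirely on one side of $c$ (or coincides with $\{c\}$), so $\nu(z) \bowtie c$ has the same truth value for all $\nu \in I_z$. Boolean combinations preserve this, so $R_\alpha$ is compatible with $\varphi$.

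For compatibility with resets, let $res$ be the reset that assigns $0$ to the variables in some set $Y \subseteq V$ and leaves the remaining ones unchanged. Fix $R_\alpha = ((I_x, I_y), \prec)$ and define $\alpha' = ((I'_x, I'_y), \prec')$ by setting $I'_z = [0]$ if $z \in Y$ and $I'_z = I_z$ otherwise, and taking $\prec'$ to be the restriction of $\prec$ to $V'_0 = V_0 \setminus Y$ (which is exactly the set of variables $z$ with $I'_z$ of the form $(k, k+1)$). I would then verify directly that $res(R_\alpha) = R_{\alpha'}$: for any $\nu \in R_\alpha$, the valuation $\nu' = res(\nu)$ satisfies $\nu'(z) \in I'_z$ componentwise (equal to $0 \in [0]$ for $z \in Y$, unchanged otherwise), and the fractional-part ordering constraint $x \prec' y \leftrightarrow \mathrm{frac}(\nu'(x)) \leq \mathrm{frac}(\nu'(y))$ is inherited from the corresponding property of $\nu$ on $V'_0 \subseteq V_0$. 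Hence $\nu' \in R_{\alpha'}$, so in particular if $R' \in res(R_\alpha)$ then $R' = R_{\alpha'}$ and every $\nu \in R_\alpha$ maps under $res$ into $R'$, as required.

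Neither step poses a real obstacle, since resets are deterministic and the intervals in $\mathcal{I}_z$ were designed precisely to separate the integer cut points $\{0, 1, \ldots, c_{max}\}$; the mild bookkeeping is just tracking how the preorder $\prec$ inherits to $V'_0$ after removing the reset variables. The only point worth stating carefully is that the lemma, as phrased, would be false if the guards referenced integers strictly larger than $c_{max}$, which is why the ambient assumption $c \leq c_{max}$ from the syntax of $\HA$ is used.
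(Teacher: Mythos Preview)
Your proposal is correct and follows essentially the same approach as the paper: both arguments reduce guard compatibility to the atomic case (observing that each interval $I_z \in \mathcal{I}_z$ sits entirely on one side of any integer cut point up to $c_{max}$) and handle reset compatibility by showing that all valuations in a region map to the same target region after a reset. Your treatment is somewhat more explicit than the paper's---you actually construct the target region $R_{\alpha'}$ and track how $\prec$ restricts to $V_0 \setminus Y$, whereas the paper simply observes that the non-reset coordinate stays in the same interval---but this is a presentational difference, not a methodological one.
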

\begin{proof}
\begin{enumerate}
\item Let $R' \in res(R)$. Consider $\nu_1, \nu_2 \in R$, i.e, $[\nu_1]=[\nu_2]$.
Clearly, $\nu_1(x)$ and $\nu_2(x)$ lie in the same interval; 
same with $\nu_1(y)$ and $\nu_2(y)$.  If the operation $res$ resets $x$, 
then $res(\nu_1)=(0, \nu_1(y))$ and $res(\nu_2)=(0, \nu_2(y))$. Since 
$\nu_1(y)$ and $\nu_2(y)$ are in the same interval, we have $[res(\nu_1)]=[res(\nu_2)]$.
Similar results are obtained when $y$ is reset, or when both $x,y$ are reset.
\item Let $[\nu_1]=[\nu_2]$ be valuations in the same region $R$.
Let $\varphi$ be a guard. 
The result can be proved by structural induction on $|\varphi|$. 
If $\varphi$ is atomic of the form $x \sim c$, 
clearly, $\nu_1 \models \varphi$ iff $\nu_2 \models \varphi$, since $\nu_1$ and $\nu_2$ are 
equivalent. 
Assume for guards of size $\leq n-1$. 
It can be seen that the inductive hypothesis can be easily extended 
to guards of size $n$. 
\end{enumerate}
Thus, $\mathcal{R}$ is a finite set of regions compatible with guards and resets,  
partitioning $\mathbb{R}^2$.
\qed \end{proof}
 
Hence, we can use the region abstraction for the above set of regions to obtain a region automaton $\RegHA$ 
capturing the untimed language of $\HA$. The set of states of such a region automaton is the set $Q \times \mathcal{R}$, where $Q$ is the set of modes of $\HA$. The initial location of $\RegHA$  
is $(q_0,(0,0))$ where $q_0$ is the initial mode of $\mathcal{H}$. The transitions of $\RegHA$ are defined as 
$(q,R) \rightarrow^{a} (q',R')$ iff 
there is a region $\hat{R}$ and a transition from $q$ to $q'$ on $(\varphi,a,res)$ 
in $\mathcal{H}$ such that 
\begin{itemize}
\item $\hat{R} \in Succ_{(r_x,r_y)}(R)$. Here $r_x,r_y$ are the rates of variables $x,y$ at the state $q$ of $\mathcal{H}$,
\item For all $\nu \in \hat{R}$,  $\nu \models \varphi$, and 
\item $res(\hat{R})=R'$
\end{itemize}
 The final states of the region automaton are the states $(f,R)$ such that $f$ is a final state of $\mathcal{H}$.
 It can be seen that the language accepted by this region automaton is indeed 
  the untimed counterpart of $L(\HA)$.
  We thus have, the following result.

\begin{theorem}
\label{2sw-dec}
The reachability problem for hybrid automata with two stopwatch variables is decidable.
\end{theorem}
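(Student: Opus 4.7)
The plan is to complete the proof by establishing that the region automaton $\RegHA$ constructed above is a sound and complete finite abstraction for reachability in $\HA$: a final mode $f$ is reachable in $\HA$ from $(q_0,(0,0))$ if and only if some state $(f,R)$ is reachable in $\RegHA$ from $(q_0,[(0,0)])$. Since both $Q$ and $\mathcal{R}$ are finite, reachability in $\RegHA$ reduces to reachability in a finite directed graph and is therefore decidable; the theorem then follows immediately.

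For the soundness direction, I would take an arbitrary concrete run
$(q_0,\nu_0) \stackrel{t_1}{\to} (q_0, \nu_0 {+} F(q_0) t_1) \stackrel{a_1}{\to} (q_1,\nu_1) \stackrel{t_2}{\to} \cdots \stackrel{a_n}{\to} (q_n,\nu_n)$
of $\HA$ and show by induction on $n$ that the abstracted sequence $(q_0,[\nu_0]) \to^{a_1} (q_1,[\nu_1]) \to^{a_2} \cdots \to^{a_n} (q_n,[\nu_n])$ is a valid run of $\RegHA$. The witness for the $i$-th abstract transition is the intermediate region $\hat R_i = [\nu_{i-1} {+} F(q_{i-1}) t_i]$: condition $\hat R_i \in \mathrm{Succ}_{F(q_{i-1})}([\nu_{i-1}])$ holds directly by definition of $\mathrm{Succ}$; that every valuation in $\hat R_i$ satisfies the guard $\varphi_i$ holds by the guard-compatibility lemma; and $[\nu_i] = \mathit{res}(\hat R_i)$ holds by the reset-compatibility lemma. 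Both compatibility lemmas were established in the preceding subsection.

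For completeness, given an abstract run $(q_0,R_0) \to^{a_1} (q_1,R_1) \to \cdots \to (q_n,R_n)$ in $\RegHA$, I would lift it step by step to a concrete $\HA$-run ending in mode $q_n$. At step $i$, the abstract transition provides a witness region $\hat R_i$; the ``set of regions'' property proved in the first lemma — namely, that $\hat R_i \in \mathrm{Succ}_{F(q_{i-1})}(R_{i-1})$ implies that \emph{every} $\nu \in R_{i-1}$ admits a delay $t_i$ with $\nu {+} F(q_{i-1}) t_i \in \hat R_i$ — supplies the concrete time elapse from whatever valuation $\nu_{i-1} \in R_{i-1}$ the induction has produced. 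Guard compatibility then ensures the discrete transition is enabled, and applying the reset yields a valuation in $R_i$, maintaining the invariant for the next step.

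The main conceptual burden is carried by the two preceding lemmas; once they are in hand, the bisimulation-style argument sketched above is entirely standard. The only subtle point is that a single concrete time elapse may cross several regions before a discrete transition fires, so the lifting in the completeness direction must iterate the ``closest-successor'' chain $R \to R^1 \to R^2 \to \cdots$ identified in the first lemma until it reaches the required $\hat R_i$; this chain is guaranteed to be finite because $\mathcal{R}$ is finite and each closest-successor step strictly advances the region. With this, reachability of $f$ in $\HA$ is equivalent to reachability of the finite set $\{f\} \times \mathcal{R}$ in $\RegHA$, which is decidable by breadth-first search in time polynomial in $|Q|\cdot|\mathcal{R}|$, and the theorem is proved.
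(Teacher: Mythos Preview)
Your proposal is correct and follows exactly the approach the paper takes: the paper constructs the region automaton $\RegHA$ from the preceding two lemmas and then simply asserts that ``the language accepted by this region automaton is indeed the untimed counterpart of $L(\HA)$,'' leaving the soundness/completeness bisimulation argument implicit. You have spelled out that implicit argument in more detail than the paper does, correctly identifying that the ``set of regions'' lemma supplies the universal-delay property needed for completeness and that the compatibility lemma handles guards and resets; nothing further is required.
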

The decidability result above extends when we consider hybrid automata with location invariants as well.

\subsection{Region Abstraction for Glitchfree RHA with two stopwatch variables}
Given an $RHA$ $\mathcal{H}$ with two stopwatch variables
  $x,y$, 
 we define the regional equivalence relation 
$\Upsilon_R \subseteq S_{\mathcal H} \times S_{\mathcal H}$ in the following way:
For configurations $s=(\langle \kappa \rangle, q, \nu)$ and $s'=(\langle \kappa' \rangle, q', \nu')$, 
we have $(s,s') \in \Upsilon_R$, or equivalently, $[s]=[s']$ 
    if $q=q'$, $[\nu]=[\nu']$ and $[\kappa]=[\kappa']$ such that $\kappa=(b_1,\nu_1)(b_2,\nu_2)\dots(b_n,\nu_n)$ and 
    $\kappa'=(b'_1,\nu'_1)(b'_2,\nu'_2)\dots(b'_n,\nu'_n)$ are such that $b_=b'_i$ and 
    $[\nu_i]=[\nu'_i]$.
    
    A relation $B \subseteq S_{\mathcal H} \times S_{\mathcal H}$ defined over the set of configurations of a recursive stopwatch automaton is called a {\it time abstract bisimulation} 
     if for every pair of configurations $s_1,s_2 \in  S_{\mathcal H}$ such that
$(s_1,s_2) \in B$, for every timed action $(t,a) \in A_{\mathcal H}$ such that 
 $X_{\mathcal H}(s_1, (t,a))=s'_1$, there exists a 
 timed action $(t',a) \in A_{\mathcal H}$ such that 
 $X_{\mathcal H}(s_2, (t',a))=s'_2$, and $(s'_1,s'_2) \in B$. 
\begin{lemma}
\label{2sw-rha-dec}
Regional equivalence relation for  2 stopwatch glitch-free recursive automata is a time abstract bisimulation. 
\end{lemma}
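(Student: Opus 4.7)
The plan is to verify the bisimulation condition separately for each of the three kinds of transitions in the RHA semantics: (i) ordinary time-plus-discrete transitions within a component, (ii) call-port transitions pushing a new frame, and (iii) exit-node transitions popping a frame. Given regionally equivalent $s = (\langle \kappa \rangle, q, \nu)$ and $s' = (\langle \kappa' \rangle, q, \nu')$, I must show that any timed action $(t,a)$ leaving $s$ can be matched by some $(t',a)$ from $s'$ landing in a regionally equivalent successor, and vice versa. Since $\Upsilon_R$ is symmetric, it suffices to check one direction.

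For case (i), the context is not touched and the transition behaves exactly like a transition in the underlying 2-stopwatch singular hybrid automaton. Here I would invoke the region construction already developed in Section~\ref{dec}: the partition $\mathcal{R}$ is a finite set of regions compatible with both guards and resets, and for every pair $R, R'$ with $R' \in Succ_{(r_x, r_y)}(R)$ and every $\nu \in R$, some delay $t$ realizes $\nu + (r_x, r_y)t \in R'$. Thus the matching delay $t'$ from $\nu'$ exists, the same guard is enabled (compatibility), and the reset produces a valuation in the same region (compatibility with resets); the untouched contexts $\kappa, \kappa'$ remain regionally equivalent trivially. For case (ii), $t = 0$ and the new context appends the pair $(b, \nu)$ from $s$ and $(b, \nu')$ from $s'$; since $b$ is the same box and $[\nu] = [\nu']$, the extended contexts remain regionally equivalent, while the new location $en$ and valuations $\nu, \nu'$ agree.

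Case (iii), the return transition, is the one where the glitch-free hypothesis does real work and will be the main delicate point. From $s$ the successor valuation is $\nu[P(b) := \nu'']$ where $(b, \nu'')$ is the top of $\kappa$; analogously from $s'$ we obtain $\nu'[P(b) := \nu''']$ with $[\nu''] = [\nu''']$ by assumption on $[\kappa] = [\kappa']$. If $P(b) = \variables$ the new valuations are $\nu''$ and $\nu'''$, regionally equivalent by the context hypothesis; if $P(b) = \emptyset$ they are $\nu$ and $\nu'$, regionally equivalent by assumption on $s, s'$. The popped contexts $\kappa''$ and $\kappa'''$ remain regionally equivalent because the equivalence was componentwise. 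The obstacle the glitch-free condition dodges is the mixed case: since regions in $\mathcal{R}$ depend on the total preorder over the fractional parts of the unreset variables, splicing $\nu''(x)$ back in alongside the carried-over $\nu'(y)$ could put the fractional-part ordering in a different cell than the analogous splice from $s'$, breaking regional equivalence. Assuming $P(b) \in \{\variables, \emptyset\}$ eliminates this scenario. Once all three cases are verified, the lemma follows, and combined with finiteness of $\mathcal{R}$ and of the set of box-indexed region-valued contexts up to any bounded depth, this yields the decidability claim of Theorem~\ref{thm:dec}(1) via a standard quotient construction on the reachable part of $\sem{\Hh}$.
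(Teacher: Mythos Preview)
Your proposal is correct and follows essentially the same three-case decomposition as the paper's own proof: ordinary intra-component transitions are handled by appealing to the 2-stopwatch region construction of Section~\ref{dec}, call-port transitions are matched trivially with $t'=0$, and exit-node transitions split on $P(b)=\variables$ versus $P(b)=\emptyset$ exactly as in the paper. Your added explanation of \emph{why} the mixed case $\emptyset \subsetneq P(b) \subsetneq \variables$ would break regional equivalence (the fractional-part preorder need not be preserved when splicing a stored coordinate back in alongside a carried-over one) is a useful clarification that the paper omits.
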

\begin{proof}
Let us fix two configurations $s=(\langle \kappa \rangle, q, \nu)$ and $s'=(\langle \kappa' \rangle, q', \nu')$ such that 
$[s]=[s']$ and a timed action $(t,a) \in A_{\mathcal {H}}$ such that 
$X_{\mathcal {H}}(s, (t,a))=s_a=(\langle \kappa_a \rangle, q_a, \nu_a)$. We have to find a 
$(t',a)$ such that $X_{\mathcal {H}}(s', (t',a))=s'_a=((\langle \kappa'_a \rangle, q'_a, \nu'_a)$
 such that $[s_a]=[s'_a]$. There are three cases:
 \begin{enumerate}
 \item The state $q$ is a call port. That is, $q=(b,en) \in \call$. In this case, $t=0$, the context $\langle \kappa_a \rangle
 =\langle \kappa, (b,\nu)\rangle$, $q_a=en$ and $\nu_a=\nu$. Since $[s]=[s']$,
 we know $q'=(b,en)$ is also a call port. For $t'=0$, and
 $\langle \kappa'_a \rangle
 =\langle \kappa', (b,\nu')\rangle$, $q'_a=en$ and $\nu'_a=\nu_a$. It is clear that $[s_a]=[s'_a]$.
 \item The state $q$ is an exit node. 
 That is, $q=ex \in \Ex$. Let $\langle \kappa \rangle=\langle \kappa_*, (b, \nu_*)\rangle$ and let 
 $(b,ex) \in \return$. 
 In this case, $t=0$, the context $\langle \kappa_a \rangle= \langle \kappa_* \rangle$ and 
 $q_a=(b,ex)$ and $\nu_a=\nu[P(b):=\nu_*]$.
  Now let $\langle \kappa' \rangle=\langle  \kappa'_*, (b, \nu'_*)\rangle$. Again, since 
  $[s]=[s']$, we have $q=q'=ex$, $[\kappa_*] =[\kappa'_*] $ and hence $t'=0$, 
    $\langle \kappa'_a \rangle= \langle \kappa'_* \rangle$ 
    and $\nu'_a=\nu'[P(b):=\nu'_*]$. We have to show that $[\nu_a]=[\nu'_a]$.
    \begin{itemize}
    \item $P(b)=V$. In thus case, $\nu_a=\nu_*$ and $\nu'_a=\nu'_*$. Since we know 
    that $[\nu_*]=[\nu'_*]$, we obtain  $[\nu_a]=[\nu'_a]$.
    \item $P(b)=\emptyset$. In this case, $\nu_a=\nu$ and $\nu'_a=\nu'$ and since 
    $[\nu]=[\nu']$, we obtain $[\nu_a]=[\nu'_a]$.
        \end{itemize}
        \item If state $q$ is of any other kind, 
then the result follows  by the region equivalence of 2 stopwatch automata (Theorem \ref{2sw-dec}).        
 \end{enumerate}
 The proof is now complete.
\qed \end{proof}

Lemma \ref{2sw-rha-dec} allows us to extend the concept of regions abstraction 
to two stopwatch glitch free recursive automata.  
 \subsubsection*{Region Abstraction for 2 StopWatch Glitchfree RHA}
 Let $\mathcal{H}=(V, (\mathcal{H}_1,\dots,\mathcal{H}_1))$ be a glitch-free two stopwatch 
 RHA, where each $\mathcal{H}_i$ is a tuple 
 $(N_i, \En_i, \Ex_i, B_i,Y_i, A_i, X_i, P_i, Inv_i, E_i, J_i,F_i)$. The region abstraction 
 of $\mathcal{H}$ is a finite RSM $\mathcal{H}^{RG}=(\mathcal{H}_1^{RG},\mathcal{H}_2^{RG}, \dots, \mathcal{H}_k^{RG})$ where for each $1 \leq i \leq k$, component 
 $\mathcal{H}_i^{RG}=(N_i^{RG},\En_i^{RG},\Ex_i^{RG},B_i^{RG},Y_i^{RG},A_i^{RG},X_i^{RG})$ consists of:
 \begin{itemize}
 \item a finite set of $N_i^{RG} \subseteq (N_i \times \mathcal{R})$ of nodes such that 
 $(n,R) \in N_i^{RG}$ if $R \models Inv(n)$.\\
      Moreover, 
 $N_i^{RG}$ includes the set of entry nodes $\En_i^{RG} \subseteq \En_i \times \mathcal{R}$ 
 and exit nodes $\Ex_i^{RG} \subseteq \Ex_i \times \mathcal{R}$;
 \item a finite set $B^{RG}_i=B_i \times \mathcal{R}$ of boxes;
 \item boxes-to-components mapping $Y_i^{RG}: B_i^{RG} \rightarrow \{1,2,\dots,k\}$ is such that 
 $Y_i^{RG}(b,R)=Y_i(b)$. To each $(b,R) \in B_i^{RG}$, we associate a set of call ports 
 $\call^{RG}(b,R)$ and a set of return ports $\return^{RG}(b,R)$:
 \begin{itemize}
 \item $\call^{RG}(b,R)=\{(((b,R),en),R') \mid R' \in \mathcal{R}$ and $en \in \En_{Y_i(b)}\}$, and 
 \item $\return^{RG}(b,R)=\{(((b,R),ex),R') \mid R' \in \mathcal{R}$ and $ex \in \Ex_{Y_i(b)}\}$
  \end{itemize}
Let $\call^{RG}_i$ and $\return^{RG}_i$ be the set of call and return ports of component 
$\mathcal{H}_i^{RG}$. We write $Q_i^{RG}=N_i^{RG} \cup \call_i^{RG} \cup \return_i^{RG}$ for the vertices 
of the component $\mathcal{H}_i^{RG}$.
\item $A_i^{RG} \subseteq \mathbb{N} \times A_i$ is the set of actions such that 
if $(h,a) \in A_i^{RG}$, ($h$ is the number of region hops before taking $a$), then 
$h \leq 4^{c_{max}}$, where $c_{max}$ is the maximum constant appearing  in the 
guards;
\item a transition function $X_i^{RG}: Q_i^{RG} \times A_i^{RG} \rightarrow Q_i^{RG}$ with 
the natural condition that call ports and exit nodes do not have any outgoing transitions. 
 Also, for $q,q' \in Q_i^{RG}$,
 $(h,a) \in A_i^{RG}$, we have that $q'=X_i^{RG}(q,(h,a))$ if one of the following is true:
 \begin{itemize}
 \item $q=(n,R) \in N_i^{RG}$, there is a region $R_a$ such that 
 $R \rightarrow_h R_a$, $R_a \models E_i(n,a)$, and 
 \begin{itemize}
 \item If $q'=(n',R')$, then $R'=R_a[J_i(a):=0]$ and $X_i(n,a)=n'$.
 \item If $q'=(((b,R'),en),R'')$, then $R=R'=R''=R_a[J_i(a):=0]$ and $X_i(n,a)=(b,en)$
 \end{itemize}
 \item $q=(((b,R_{old}),ex),R_{now})$ is a return port of $\mathcal{H}_i^{RG}$. 
 Let $R=R_{old}$ if $P_i(b)=V$ and $R=R_{now}$ otherwise. 
 There exists a region $R_a$ such that $R \rightarrow_h R_a$ and 
 $R_a \models E_i((b,ex),a)$ and 
 \begin{itemize}
 \item If $q'=(n',R')$, then $R'=R_a[J_i(a):=0]$ and $X_i(n,a)=n'$
 \item If $q'=(((b,R'),en),R'')$, then $R'=R''=R_a[J_i(a):=0]$ and $X_i(n,a)=(b,en)$.
  \end{itemize}
    \end{itemize}
 \end{itemize}
The following lemma is a direct consequence of  Lemma \ref{2sw-rha-dec} and the region abstraction 
for 2 stopwatch glitchfree RHAs.
\begin{lemma}
Reachability (termination) problems and games on glitch-free two stopwatch RHA can be reduced  to
solving reachability (termination) problems and games, respectively, on the corresponding region abstraction 
$\mathcal{H}^{RG}$.
\end{lemma}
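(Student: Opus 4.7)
The plan is to establish a formal correspondence between runs in the semantics $\sem{\Hh}$ of the glitch-free two-stopwatch RHA and runs in the LTS induced by the finite RSM $\Hh^{RG}$, and then invoke the known decidability of reachability, termination, and games on RSMs (cited earlier in the excerpt, e.g.~\cite{ABEGRY05,Wal96,Ete04}). The central tool is Lemma~\ref{2sw-rha-dec}, which tells us that regional equivalence $\Upsilon_R$ is a time-abstract bisimulation on $\sem{\Hh}$; the construction of $\Hh^{RG}$ is engineered so that its vertices are exactly the $\Upsilon_R$-classes and its actions $(h,a)$ encode ``elapse $h$ region-hops, then fire $a$.''

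First I would make the bijection between concrete configurations $(\sseq{\kappa}, q, \nu)$ of $\Hh$ and vertices of the (infinite) LTS generated by $\Hh^{RG}$ precise: a configuration maps to $(\sseq{[\kappa]}, (q, [\nu]))$, where $[\kappa]$ regionalizes each stored valuation in the call stack. The definition of $X_i^{RG}$ already enumerates the three shapes of transitions (node-to-node, node-to-call, return-to-node), and in each case the guard $E_i(q,a)$, the reset $J_i(a)$ and the restoration $\val[P(b){:=}\val'']$ on return are handled at the level of regions. Glitch-freeness is exactly what makes the last point well-typed: since $P(b)\in\set{\emptyset,V}$, the post-return valuation is either entirely the caller's stored valuation or entirely the callee's current valuation, so its region is a function of $[\kappa]$ and $[\nu]$ alone.

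Second I would prove the forward and backward simulation. In the forward direction, any concrete run $r$ of $\sem{\Hh}$ induces an abstract run of $\Hh^{RG}$ by replacing each timed action $(t, a)$ with the pair $(h, a)$ counting how many closest-region hops occur during the $t$-elapse before $a$ fires; this is finite and bounded by $4^{c_{max}}$ because $\mathcal{R}$ is compatible with guards and resets. In the backward direction I would inductively build a concrete run matching an abstract one, using Lemma~\ref{2sw-rha-dec}: at every step the time-abstract bisimulation supplies a concrete time-delay realizing the prescribed region hop, and the choice is consistent across the call-stack because the bisimulation on contexts is component-wise. Reachability then reduces to reachability of vertices of $\Hh^{RG}$ whose location component lies in $F$, and termination reduces to reachability of the empty-context exit vertices of $\Hh^{RG}$.

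Third, for the two-player version I would lift the partition $(Q_\Ach, Q_\Tor)$ to vertices of $\Hh^{RG}$ by ownership of the underlying location, and transfer strategies across the bisimulation: a winning \ach-strategy in the abstract game determines, via the backward simulation, a winning \ach-strategy in the concrete game, and conversely a winning concrete strategy projects to an abstract one by dropping the timing. The main obstacle, and the one I would spend most effort on, is the return transition. To preserve bisimulation, regionally-equivalent call stacks must yield regionally-equivalent post-return valuations; this fails in general pass-by-value/pass-by-reference mixes, where individual variables can be restored independently and so mixed across incompatible regions. Glitch-freeness collapses the two cases to ``all'' or ``none,'' and it is precisely this all-or-nothing restoration that makes $\Upsilon_R$ a congruence with respect to return moves. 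Once this point is checked carefully and combined with the forward/backward simulation above, decidability of reachability, termination, and the corresponding games on the finite RSM $\Hh^{RG}$ transfers to $\Hh$.
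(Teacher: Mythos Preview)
Your proposal is correct and follows precisely the approach the paper intends: the paper states this lemma as ``a direct consequence of Lemma~\ref{2sw-rha-dec} and the region abstraction for 2 stopwatch glitchfree RHAs'' without further argument, and what you have written is exactly the fleshing-out of that consequence via forward/backward simulation along the time-abstract bisimulation, with the glitch-free hypothesis used at the return step just as in the proof of Lemma~\ref{2sw-rha-dec}. Your treatment is in fact more detailed than the paper's own, which omits the explicit simulation argument and the strategy-transfer for games.
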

\subsection{Computational Complexity}
 The complexity for 2 stopwatch glitch free RHAs is the same as those 
 of 2 clock glitch free RTAs.

\section{Conclusion}
The main result of this paper is that time-bounded reachability problem for
recursive timed automata is undecidable for automata with five or more clocks. 
We also showed that for recursive hybrid automata the reachability problem turns
undecidable even for glitch-free variant with  three stopwatches, and the
corresponding time-bounded problem is undecidable for automata with $14$
stopwatches. 
Using the similar proof techniques we have also studied reachability games on
recursive hybrid automata, and  showed that time-bounded reachability games are
undecidable over recursive timed automata with three clocks. 
Similarly, for glitch-free recursive hybrid automata with three stopwatches
time-bounded reachability games are undecidable. 

\bibliographystyle{plain}
\bibliography{papers}


\end{document}